\theoremstyle{plain}
\newtheorem{theorem}{Theorem}
\newtheorem{lemma}{Lemma}
\newtheorem{proposition}{Proposition}
\theoremstyle{plain}
\newtheorem{definition}{Definition}
\theoremstyle{definition}
\newcommand{\ugrshort}{Quantum Thermodynamics and Computation Group, Electromagnetism and Matter Physics Department, University of Granada, Granada, Spain}
\newcommand{\one}{\mathbbm{1}}
\newcommand{\id}{\mathrm{id}}
\newcommand{\Cl}[1]{\mathcal{C}_{#1}}
\newcommand{\Di}[1]{\mathcal{D}_{#1}}
\newcommand{\RD}[1]{\mathcal{RD}_{#1}}
\newcommand{\Rl}[1]{\mathcal{R}_{#1}}
\newcommand{\Sn}[1]{\mathcal{S}_{#1}}
\newcommand{\Sym}[1]{\text{Sym}_{#1}}
\renewcommand{\K}[1]{\mathcal{K}_{#1}}
\newcommand{\T}[1]{\mathcal{T}_{#1}}
\newcommand{\ZP}[1]{\mathcal{Z}_{#1}}
\newcommand{\Z}{\mathbb{Z}}
\renewcommand{\R}{\mathbb{R}}
\renewcommand{\C}{\mathbb{C}}
\renewcommand{\vec}[1]{\mathbf{#1}}
\renewcommand{\G}{\mathcal{G}}
\newcommand{\ind}{\mathsf{1}}
\newcommand{\Sp}{\mathrm{Sp}}
\begin{document}

\title{Symmetry-Accelerated Classical Simulation of Clifford-Dominated Circuits}

\author{Giulio Camillo}
\thanks{Authors GC and FCRP contributed equally.}
\affiliation{\ugrshort}

\author{Filipa C. R. Peres}
\email[Corresponding author: ]{fcrperes@onsager.ugr.es}
\affiliation{\ugrshort}

\author{Markus Heinrich}
\email[Corresponding author: ]{markus.heinrich@uni-koeln.de}
\affiliation{Institute for Theoretical Physics, University of Cologne, Cologne, Germany}

\author{Juani Bermejo-Vega}
\affiliation{\ugrshort}
\affiliation{Institute Carlos I for Theoretical and Computational Physics, University of Granada, Granada, Spain.}

\date{\today}

\begin{abstract}
Classical simulation of quantum circuits plays a crucial role in validating quantum hardware and delineating the boundaries of quantum advantage. Among the most effective simulation techniques are those based on the stabilizer extent, which quantifies the overhead of representing non-Clifford operations as linear combinations of Clifford unitaries. However, finding optimal decompositions rapidly becomes intractable as it constitutes a superexponentially large optimization problem. In this work, we exploit symmetries in the computation of the stabilizer extent, proving that for real, diagonal, and real-diagonal unitaries, the optimization can be restricted to the corresponding subgroups of the Clifford group without loss of optimality. This ``strong symmetry reduction'' drastically reduces computational cost, enabling optimal decompositions of unitaries on up to seven qubits using a standard laptop---far beyond previous two-qubit limits.  Additionally, we employ a ``weak symmetry reduction'' method that leverages additional invariances to shrink the search space further. Applying these results, we demonstrate exponential runtime improvements in classical simulations of quantum Fourier transform circuits and measurement-based quantum computations on the Union Jack lattice, as well as new insights into the nonstabilizer properties of multicontrolled phase gates and unitaries generating hypergraph states. Our findings establish symmetry exploitation as a powerful route to scale classical simulation techniques and deepen the resource-theoretic understanding of quantum advantage.
\end{abstract}

\maketitle

\section{Introduction}\label{sec: Introduction}

The central promise of quantum computing is the solution of problems that are out of reach for classical machines.
While we thus expect quantum computers to be generally hard to simulate by fully classical means, research into classical simulation algorithms and techniques has proven crucial in developing a better understanding of the precise onset of such a quantum advantage.
Indeed, classical simulators allow direct validation of quantum hardware and serve as a benchmark to compare quantum advantage claims against \cite{arute_quantum_2019,kim_evidence_2023,morvan_phase_2024}, while still regularly defying them
\cite{begusic_fast_2024,tindall_efficient_2024}.
Improved simulation techniques thus also raise the bar for better and bigger quantum prototypes that are and will be arising. 
Moreover, classical simulation provides valuable insights into which quantum resources and phenomena enable the desired quantum advantage. In turn, this can inform the development of more efficient quantum algorithms. 

The prototypical example of such a deeper understanding provided by classical simulation is the Gottesman-Knill theorem:
It states that even highly entangling quantum circuits---so-called \emph{stabilizer circuits}---are efficiently classically simulable as long as they only involve Clifford gates as well as preparation and measurement in the computational basis~\cite{PhDGottesman}.
This shows that, besides entanglement, \emph{nonstabilizerness} (also known as \emph{magic}) is a necessary resource for quantum advantage. 
Strikingly, this resource can be provided by a single non-Clifford gate~\cite{Nebe2001}---often chosen as the $T$ gate---or, alternatively, by a continuous supply of \emph{magic states} \cite{Bravyi2005}.

The simulation of stabilizer circuits can be extended to arbitrary quantum circuits at the cost of an exponential overhead in the number of nonstabilizer elements (e.g.,~magic states or non-Clifford gates) and many works have successively developed more and more efficient algorithms to this end \cite{Aaronson2004,BravyiSmolin2016, BravyiGosset2016, Howard2017, Bravyi2019, Seddon2021}.
These approaches typically decompose a quantum circuit into a linear combination of stabilizer circuits; doing so on the level of the whole circuit is, however, generally computationally intractable.
Indeed, previous works have instead been limited to the decomposition of single or two-qubit gates.
This makes the runtime of these approaches very sensitive to compilation efficiency, in particular, the ability to minimize nonstabilizer elements in the circuit, but also blind to known submultiplicative effects in nonstabilizer resources:
The nonstabilizerness of a circuit is almost always strictly less than the product of its parts.
Taken together, these issues can severely limit the applicability of stabilizer-based simulation algorithms.

In this work, we address this problem by exploiting the concept of \emph{symmetry}, providing methods that enable us to exactly and optimally decompose entire nonstabilizer subcircuits (on up to seven qubits).
This allows us to show that submultiplicative effects in stabilizer-based simulation are, in fact, drastic: Our novel methods lead to savings of several orders of magnitude in runtime for paradigmatic circuits.

Concretely, we consider simulation algorithms based on the \emph{stabilizer extent}, which rank among the most successful and best studied approaches \cite{Bravyi2019, Seddon2021, Qassim2019, pashayan_fast_2022, qassim_improved_2021}.
By exploiting the symmetry of highly relevant classes of non-Clifford unitaries, we achieve significant reductions in the size of the underlying optimization problem. 
More specifically, we prove that if a unitary $U$ is real and/or diagonal, we can find an optimal decomposition of $U$ in terms of real and/or diagonal Clifford unitaries only, thus avoiding full optimization over the much larger Clifford group.
Strikingly, such a `strong' reduction is not possible for every symmetry.
We can, however, exploit any other symmetries of $U$ to perform a `weak' reduction that still leads to a further and strong decrease of the search space.

Importantly, such symmetries are often present in multiqubit gates or subcircuits that appear as building blocks of quantum algorithms.
Examples of this are multicontrolled rotation and phase gates, Toffoli gates, Givens rotations, or the Grover diffusion operator.

We use these findings to accelerate classical simulation based on the sum-over-Cliffords method~\cite{Bravyi2019}. By grouping multiple non-Clifford gates and jointly decomposing them, we demonstrate large savings over the simulation of individual gates due to submultiplicative effects~\cite{Qassim2019}.
Using our symmetry reduction techniques, we can compute such decompositions for symmetric unitaries supported on at most seven qubits using a consumer laptop, significantly exceeding the previous state of the art, limited to two qubits \cite{Qassim2019}. We explicitly demonstrate these improvements in the simulation of quantum Fourier transform (QFT) circuits and of (arbitrary, universal) measurement-based quantum computations (MBQCs). To list some examples, a 16-qubit  QFT circuit can be simulated two orders of magnitude faster using our results than using prior state-of-the-art stabilizer-based techniques. Additionally, our results enable an exponential improvement of the classical (weak) simulation of any (universally general) MBQC performed on an $n$-qubit Union Jack lattice and using only Pauli measurements. From a resource-theoretical perspective, our work allows the in-depth study of the nonstabilizer properties of important families of circuits using the stabilizer extent. Along this line, we make an extensive analysis of the properties of multicontrolled phase gates $C^{n-1}P (\theta)$ as well as of the unitaries generating hypergraph and generalized hypergraph quantum states, thoroughly investigating submultiplicativity effects and also the role played by entanglement in the optimal decompositions.

Finally, let us outline the structure of this paper.
Sec.~\ref{sec: Background} gives a brief introduction to the relevant background. 
In Sec.~\ref{sec: Results}, we present our theoretical results on the strong and weak symmetry reduction of the stabilizer extent and provide examples for when strong reduction is possible and when it is not. Sec.~\ref{sec: Practical demonstrations} is concerned with practical demonstrations of the benefits of our theoretical results:
We analyze the nonstabilizerness of multicontrolled phase gates and show how our techniques lead to exponential improvements in the sum-over-Cliffords simulation of QFT circuits, as well as in MBQC using hypergraph states. In Sec.~\ref{sec: Conclusions}, we conclude with a discussion of our most important results and give an outlook on the impact of this paper and prospective future research.

\section{Background}\label{sec: Background}

\subsection{Classical simulation and the stabilizer extent}\label{subsec: Background - stabilizer extent}

Stabilizer-based classical simulation algorithms stem from the fact that generic circuits can always be expressed over a Clifford+$K$ gate set, where $K$ stands for a single or multiple non-Clifford gate(s).
These non-Clifford unitaries are then simulated using stabilizer operations only.
Since the latter are by themselves efficiently simulable by Gottesman-Knill, the runtime overhead comes entirely from the simulation of the non-Clifford circuit components.

Most of the non-Clifford gates that are typically considered, for instance, the $T$, $CCZ$, and Toffoli gates, can be implemented using a circuit gadget that consumes a resource state--called \emph{magic state} in this context.
Since the gadget itself is composed of stabilizer operations, the difficulty in simulating those non-Clifford gates is then given by the simulation of the magic states.
To this end, we can express the magic state as a linear combination of stabilizer states $\ket{s}$, $\ket\psi = \sum_s x_s \ket{s}$, and use the linearity of the final state's amplitudes to reduce their computation to a series of stabilizer circuit simulations and suitable pre- and post-processing~\cite{BravyiGosset2016,Bravyi2019}.

An alternative, more general approach is given by the sum-over-Cliffords simulation~\cite{Bravyi2019}, wherein the non-Clifford unitaries are directly decomposed into a linear combination of Clifford unitaries $C$, say $U= \sum_C x_C C$. 
In perfect analogy with the gadget-based approach, we can then use the Gottesman-Knill theorem to simulate the stabilizer circuits resulting from the decomposition and suitably post-process the results. 
This method has several advantages over gadget-based simulation: 
First, it is applicable to arbitrary non-Clifford unitaries and not only those enabled by magic-state injection; secondly, it is simpler to implement; and, finally, it often involves the manipulation of quantum states with fewer qubits than those in the gadget-based approach, making it faster~\cite{Bravyi2019}.

In both gadget-based and sum-over-Cliffords simulation, the simulation runtime is proportional to the number of terms in the (non-unique) decomposition of the target states or unitaries.
Thus, the simulation works best with a minimal number of terms, also known as the \emph{stabilizer rank} of the target \cite{BravyiSmolin2016,BravyiGosset2016}.
The stabilizer rank can, in principle, be computed by minimizing the so-called $\ell_0$-norm of the coefficient vector $\mathbf{x}$, a problem that is, however, known to be $\mathsf{NP}$-complete~\cite[Appendix~A6]{Garey1979computers}. 
In practice, even an approximate low-rank decomposition would be sufficient. 
\textcite{Bravyi2019} show how to probabilistically construct a $\delta$-approximate decomposition with rank $\|\mathbf{x}\|_1^2/\delta^2$ from an exact decomposition with coefficient vector $\mathbf{x}$. 
Hence, finding decompositions with small $\ell_1$-norm is central to this approach.
Importantly, this problem can now be phrased as conic programming and can be solved using standard algorithms.
This motivates the definition of the \emph{stabilizer extent} for pure quantum states and matrices as follows.

\begin{definition}[Stabilizer extent for pure states~\cite{Bravyi2019}]
    Let $\mathrm{STAB}_n$ denote the set of $n$-qubit pure stabilizer states. The \emph{stabilizer extent} of an arbitrary, $n$-qubit quantum state $\ket{\psi}$ is defined as:
    \begin{align*}
        \xi (\ket{\psi}) \coloneqq \min \{ & \lVert \mathbf{x} \rVert_1^2 \; \vert \; \mathbf{x} \in \mathbb{C}^{\left|\mathrm{STAB}_n\right|} \,:\\
        & \ket{\psi} = \sum_{\ket{s} \in \mathrm{STAB}_n} x_{s} \ket{s} \}\,.
    \end{align*} 
We call a decomposition of $\ket\psi$ into stabilizer states \emph{optimal} if it is the minimizer of the above optimization problem.
\end{definition}

\begin{definition}[Stabilizer extent for matrices~\cite{Bravyi2019}]\label{def: Stabilizer extent for unitaries}
    Let $\mathcal{C}_n$ be the $n$-qubit Clifford group. We define the \emph{stabilizer extent} of a matrix $M\in \mathbb{C}^{2^n \times 2^n}$ as:
    \begin{equation*}
        \xi (M) \coloneqq \min \{ \lVert \mathbf{x} \rVert_1^2 \; \vert \; \mathbf{x} \in \mathbb{C}^{\left|\Cl{n}\right|} \,: \, M = \sum_{C\in \Cl{n}} x_CC\}\,.
    \end{equation*}
    We again call a decomposition of $M$ optimal if it has $\lVert \mathbf{x} \rVert_1^2 = \xi$.
\end{definition}

The stabilizer extent has many useful properties that often allow for simplifying or avoiding its computation and finding (sub)-optimal decompositions from known ones.
A powerful property of the state version is its multiplicativity under tensor products of few-qubit quantum states \cite{Bravyi2019}:
\begin{equation}
\label{eq:stabilizer-extent-multiplicativity}
    \xi (\bigotimes_{j=1}^m \ket{\psi_j}) = \prod_{j=1}^m \xi (\ket{\psi_j}) \,,
\end{equation}
where each $\ket{\psi_j}$ is supported on at most three qubits. 
This result significantly facilitates the runtime estimation of classical simulation in the magic state model, since typically considered magic states are indeed such few-qubit states.
Assuming, for instance, that the circuit of interest contains $k$ $T$ gates, the runtime cost of simulating it through gadgetization is given by $\xi(\ket{T}^{\otimes k}) = \xi(\ket T)^k = \cos(\pi/8)^{-2k}$ where $\ket{T}\coloneqq T\ket{+} = (\ket{0} + e^{i\pi/4}\ket{1})/\sqrt{2}$~\cite{Bravyi2019}.
We note that the stabilizer extent is generally submultiplicative for states on more than three qubits~\cite{Heimendahl2021}, thus requiring the direct solution of the defining optimization problem.

Notably, it is not known whether the matrix version of the stabilizer extent is multiplicative under tensor products of few-qubit unitaries.
In some cases, we can exploit the following relation between the state and matrix version, which follows directly from the definitions:
\begin{equation}\label{eq: relation extent state <-> unitary}
    \xi (U\ket{s}) \leq \xi (U)\,.
\end{equation}
Here, $\ket{s}$ is an arbitrary stabilizer state.
An important case is given by a diagonal $U$ and the state $\ket{U}=U\ket{+}^{\otimes n}$.
Although these are essentially the same objects, equality in Eq.~\eqref{eq: relation extent state <-> unitary} is only known in certain cases, for instance, if $U$ is a single-qubit $Z$-rotation, $R_Z (\theta) = e^{-i\frac{\theta}{2} Z}$ (of which the $T$ gate is a special case) or the $CCZ$ gate.
In these cases, we can exploit the multiplicativity of the stabilizer extent for states to conclude that $\xi (T^{\otimes k}) = \xi (T)^k = \cos (\pi / 8)^{-2k}$ and $\xi (CCZ^{\otimes k}) = \xi (CCZ)^k = (16/9)^k$ \cite{Bravyi2019}.

Despite the possible lack of multiplicativity, the stabilizer extent for matrices has several other important properties which we summarize in the following proposition and prove in Appendix~\ref{app: Properties extent} for completeness.
\begin{proposition}\label{prop: Properties extent}
    The stabilizer extent has the following properties:
    \begin{enumerate}[label=(\roman*)]
        \item  $\xi(A)=1$ for any Clifford unitary $A\in \mathcal{C}_n$.
        
        \item Invariance under multiplication by Clifford unitaries: $\xi(A) = \xi(CA) = \xi(AC^{\prime})\quad \forall\, C,C^{\prime}\in \mathcal{C}_n$.\\
        Monotonicity under stabilizer code projectors $\Pi$: $\xi(\Pi A) \leq \xi(A)$, $\xi(A\Pi) \leq \xi(A)$.
        
        \item Submultiplicativity under matrix multiplication and tensor products: 
        $\xi(AB) \leq \xi(A)\xi(B)$, $\xi(A\otimes B) \leq \xi(A)\xi(B)$.
        Moreover: $\xi(A\otimes B) = \xi(B\otimes A)$.
        
        \item Convexity: $\xi(t A + (1-t)B) \leq t \xi(A) + (1-t)\xi(B)$ for all $t\in[0,1]$.\\
        Homogeneity: $\xi(xA) = |x|^2\xi(A)$ for $x\in\C$.
    \end{enumerate}
\end{proposition}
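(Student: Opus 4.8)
The plan is to read every claim through the \emph{atomic norm} $\gamma(M) \coloneqq \sqrt{\xi(M)} = \min\{\lVert\mathbf{x}\rVert_1 : M = \sum_{C\in\mathcal{C}_n} x_C C\}$ induced by the Clifford group, since $\xi = \gamma^2$. With a single exception, every part of the proposition is an \emph{upper} bound of the form ``take optimal decompositions into Cliffords, build an explicit new decomposition, and bound its $\ell_1$-norm.'' Only the equality in~(i) also needs a matching \emph{lower} bound, and that is where I expect the real work to lie.

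For~(i), the upper bound is immediate from the trivial decomposition $A = 1\cdot A$, giving $\xi(A)\le 1$. For the lower bound I would pass to the Hilbert--Schmidt inner product: from any valid decomposition $A = \sum_C x_C C$ and unitarity of $A$ one has $2^n = \mathrm{tr}(A^\dagger A) = \sum_C x_C\,\mathrm{tr}(A^\dagger C)$. Since each Clifford $C$ is unitary, Cauchy--Schwarz gives $\lvert\mathrm{tr}(A^\dagger C)\rvert \le \lVert A\rVert_{\mathrm{HS}}\lVert C\rVert_{\mathrm{HS}} = 2^n$, so the triangle inequality forces $2^n \le 2^n\sum_C\lvert x_C\rvert$, i.e.\ $\lVert\mathbf{x}\rVert_1 \ge 1$ and hence $\xi(A)\ge 1$. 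I would remark that this argument uses only unitarity of $A$, so in fact $\xi(M)\ge 1$ for \emph{any} unitary $M$. This dual/witness step is the crux of the whole proposition; everything below is constructive.

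For~(ii)--(iv) I would apply the same combination template to optimal decompositions $A = \sum_C x_C C$ and $B = \sum_{C'} y_{C'} C'$. Clifford invariance holds because left (or right) multiplication by a fixed $C\in\mathcal{C}_n$ is a bijection of the group, which merely relabels the coefficients and leaves $\lVert\mathbf{x}\rVert_1$ unchanged; the reverse inequality comes from multiplying by $C^{-1}\in\mathcal{C}_n$. Submultiplicativity under products follows by expanding $AB = \sum_{C,C'} x_C y_{C'}\,CC'$, using closure of $\mathcal{C}_n$ so that each $CC'$ is Clifford, and collecting equal terms: the collected coefficients have $\ell_1$-norm at most $\lVert\mathbf{x}\rVert_1\lVert\mathbf{y}\rVert_1$, i.e.\ $\gamma(AB)\le\gamma(A)\gamma(B)$. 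The tensor bound is identical once one notes $C\otimes C'\in\mathcal{C}_{n+m}$. The identity $\xi(A\otimes B) = \xi(B\otimes A)$ then follows from~(ii), since $B\otimes A = W(A\otimes B)W^\dagger$ for the register-swap permutation $W$, which is Clifford. For projector monotonicity I would expand a stabilizer projector over its stabilizer group $S$ as $\Pi = 2^{-m}\sum_{g\in S} g$ with $\lvert S\rvert = 2^m$; as Paulis are Clifford this yields $\xi(\Pi)\le 1$, and submultiplicativity gives $\xi(\Pi A)\le\xi(A)$ (and symmetrically on the right). Finally, homogeneity is immediate by rescaling, $xA = \sum_C (x\,x_C)C$, and subadditivity follows by adding the two decompositions, $A+B = \sum_C (x_C + y_C)C$, and applying the triangle inequality for $\lVert\cdot\rVert_1$, which shows $\gamma = \sqrt{\xi}$ obeys $\gamma(A+B)\le\gamma(A)+\gamma(B)$.

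The only genuinely non-constructive ingredient is the lower bound in~(i), and I expect that to be the main obstacle, since it requires exhibiting a dual witness (here the target unitary itself, via the Hilbert--Schmidt pairing) rather than manipulating a given decomposition. The remaining care is organizational: submultiplicativity~(iii) should be proved before the projector bound and the swap identity in~(ii), as both are most cleanly derived from it.
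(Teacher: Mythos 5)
Your proposal is correct and, for parts (i)--(iii), it is essentially the paper's own argument: the equality in (i) comes from the trivial decomposition plus a Hilbert--Schmidt trace bound (the paper expands $\tr(A^\dagger A)=2^n$ in \emph{both} factors and uses $|\tr(C^\dagger C')|\le 2^n$, while you expand once and invoke Cauchy--Schwarz; these are the same dual-witness computation), Clifford invariance is proved by the same group relabelling, and submultiplicativity, the tensor bound, and the SWAP identity are identical. Two deviations deserve comment. For projector monotonicity, the paper writes $\Pi=2^{-r}\sum_{P\in\mathcal{G}}P$ and chains homogeneity, subadditivity, and Pauli invariance, whereas you note $\xi(\Pi)\le 1$ from the same Pauli expansion and then apply submultiplicativity; your route is equivalent in substance but cleaner, and it sidesteps the issue below. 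The substantive difference is (iv): you prove the triangle inequality for $\gamma=\sqrt{\xi}$, i.e.\ $\gamma(A+B)\le\gamma(A)+\gamma(B)$, which is \emph{not} the literal claim $\xi(A+B)\le\xi(A)+\xi(B)$ --- but the literal claim is false, and your reading is the correct one. Indeed, taking $A=B=\one$, the homogeneity asserted in the same item forces
\begin{equation*}
    \xi(A+B)=\xi(2\cdot\one)=4>2=\xi(A)+\xi(B)\,.
\end{equation*}
The paper's own proof of (iv) contains exactly this non-sequitur: from $A+B=\sum_C(x_C+y_C)C$, the triangle inequality only yields $\lVert\mathbf{x}+\mathbf{y}\rVert_1^2\le\left(\lVert\mathbf{x}\rVert_1+\lVert\mathbf{y}\rVert_1\right)^2=\left(\sqrt{\xi(A)}+\sqrt{\xi(B)}\right)^2$, not $\xi(A)+\xi(B)$ as claimed. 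Your $\gamma$-level statement is also all that is actually needed downstream --- both the proof of Theorem~\ref{theorem: Main result -- diagonals and reals} and the monotonicity argument operate at the level of $\ell_1$-norms rather than their squares --- so the correction costs nothing elsewhere in the paper.
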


We note that the stabilizer extent also provides a faithful measure of nonstabilizerness or magic:
That is, $\xi(\ket\psi)=1$ if and only if $\ket\psi$ is a stabilizer state, and otherwise $\xi(\ket\psi)>1$.
The same holds for the matrix version if we restrict to unitaries:
$\xi(U) = 1$ iff $U\in \Cl{n}$ and $\xi(U)>1$ otherwise. 
For non-unitary matrices, however, $\xi$ can assume values smaller than 1.
The monotonicity then makes the stabilizer extent a \emph{magic monotone}, which can be used for lower bounding the cost of (exact) synthesis of certain unitaries~\cite{Howard2017, Beverland2020lower}.
Indeed, combining the properties of the stabilizer extent in Proposition~\ref{prop: Properties extent} leads to the following statement.

\begin{proposition}[Exact gate synthesis]\label{prop: gate synthesis}
    For any unitary $U\in \mathrm{U}(2^n)$, there always exists an $s$ so that $ \xi (T)^{s-1} < \xi (U) \leq \xi (T)^{s}\,.$ If $U$ is \emph{exactly synthesizable} in terms of the Clifford+$T$ gate set, then $U$ requires at least $s$ $T$ gates to be synthesized.
\end{proposition}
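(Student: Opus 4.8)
The plan is to split the statement into two independent parts: first establishing the existence (and implicit uniqueness) of the integer $s$, and then converting the inequalities into a $T$-count lower bound using the submultiplicativity of $\xi$ collected in Proposition~\ref{prop: Properties extent}.

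For the existence of $s$, I would start from the fact, noted just below Proposition~\ref{prop: Properties extent}, that $\xi(U)\geq 1$ for every unitary $U$ (with equality iff $U$ is Clifford). Since $\xi(T)=\cos(\pi/8)^{-2}>1$, the map $s\mapsto \xi(T)^s$ is strictly increasing, equals $1$ at $s=0$, and diverges as $s\to\infty$. Hence $s\coloneqq \lceil \log_{\xi(T)}\xi(U)\rceil$ is a well-defined nonnegative integer, and by construction it is the unique integer satisfying $\xi(T)^{s-1}<\xi(U)\leq \xi(T)^{s}$: the left inequality is strict because $s$ is chosen minimal, and the right one degenerates to an equality exactly when $\log_{\xi(T)}\xi(U)$ is itself an integer. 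This part is bookkeeping and should present no difficulty.

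For the lower bound, suppose $U$ is exactly synthesizable as $U=C_0\,T_{i_1}C_1\,T_{i_2}\cdots T_{i_k}C_k$, where each $C_j\in\Cl{n}$ and each $T_{i_j}$ is a $T$ gate acting on qubit $i_j$ (padded with identities), so that $k$ is the number of $T$ gates used. The goal is to show $\xi(U)\leq \xi(T)^{k}$, since combined with $\xi(T)^{s-1}<\xi(U)$ this forces $\xi(T)^{s-1}<\xi(T)^{k}$ and hence $s\leq k$ by strict monotonicity. I would first observe that a single padded $T$ gate satisfies $\xi(T_{i_j})\leq \xi(T)$: by the tensor submultiplicativity of Proposition~\ref{prop: Properties extent}(iii) together with $\xi(\one)=1$ from property (i), one has $\xi(\one\otimes\cdots\otimes T\otimes\cdots\otimes\one)\leq \xi(T)$, the placement of the nontrivial factor being irrelevant by the permutation identity $\xi(A\otimes B)=\xi(B\otimes A)$. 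Applying the matrix-multiplication submultiplicativity repeatedly to the given decomposition and using $\xi(C_j)=1$ for every Clifford factor then yields $\xi(U)\leq \prod_{j}\xi(C_j)\,\prod_{j}\xi(T_{i_j})\leq \xi(T)^{k}$, as desired.

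The argument is short precisely because all the real work is carried by Proposition~\ref{prop: Properties extent}; I do not expect a genuine obstacle, only a point requiring mild care in the final logical step. Submultiplicativity delivers only the inequality $\xi(U)\leq \xi(T)^{k}$, not an equality, so one must use the \emph{strict} lower bound $\xi(T)^{s-1}<\xi(U)$ (rather than the non-strict upper bound) to extract the clean integer conclusion $s\leq k$. I would also emphasize that, since $\xi$ is a magic monotone, this is a lower bound on the $T$-count that holds for \emph{any} Clifford$+T$ synthesis of $U$, optimal or not.
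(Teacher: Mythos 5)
Your proof is correct and takes essentially the same route as the paper's: both bound $\xi(U)\leq\xi(T)^{k}$ for any Clifford+$T$ synthesis with $k$ $T$ gates using the submultiplicativity properties of Proposition~\ref{prop: Properties extent} (with $\xi=1$ on Clifford factors), and then play this off against the strict lower bound $\xi(T)^{s-1}<\xi(U)$ to conclude $s\leq k$. The only cosmetic differences are that the paper groups parallel $T$ gates into layers rather than padding individual gates with identities, and that you make the existence of $s$ explicit via $s=\lceil\log_{\xi(T)}\xi(U)\rceil$ (using $\xi(U)\geq 1$ and $\xi(T)>1$), which the paper takes essentially by definition.
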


We use this proposition to discuss some of our results in Sec.~\ref{sec: Practical demonstrations}. However, since this is by no means central to our work, we leave the proof to Appendix~\ref{app: Synthesis}.

Given a quantum circuit over a Clifford+$K$ gate set, the above-discussed properties also allow us to construct the necessary decompositions for a classical simulation algorithm, the runtime of which can, however, depend highly on the precise construction.
One crucial reason for this is the submultiplicativity under matrix multiplication, cf.~Proposition \ref{prop: Properties extent} (iii).
As we will also see later in Sec.~\ref{sec: Practical demonstrations}, it can be highly beneficial to `block' non-Clifford gates and try to decompose their matrix product instead of decomposing them one-by-one.
For this approach, we can typically only rely on the direct computation of the stabilizer extent by solving the defining optimization problem.
Although the latter is a second-order cone program (SOCP) which can be solved in polynomial time in the problem size, the superexponentially growing size of $\mathrm{STAB}_n$ and $\Cl{n}$ with $n$ prevents us from doing so in practice. 
Even small instances require supercomputing resources, cf.~Table~\ref{tab: Space and time comparisons different problems} in Appendix~\ref{app: practical_considerations}.
Moreover, the matrix version is generically harder than the state version due to the square increase in the dimension of the underlying vector space and the fact that the Clifford group is superexponentially larger than the size of the set of stabilizer states:
\begin{equation*}
    \frac{\left| \Cl{n} \right|}{\left| \mathrm{STAB}_n \right|} = \frac{2^{n^2 + 2n} \prod_{j=1}^n (4^j - 1)}{2^n \prod_{j=1}^n (2^j + 1)} = \mathcal{O} \left( 2^{n^2}\right)\,.
\end{equation*}

While impressive progress has been made to speed up the computation of the stabilizer extent for states up to 10 qubits, e.g., by exploiting linear dependencies between stabilizer states \cite{deSilva2024} or suitably restricting the set of stabilizer states used for optimization \cite{Hamaguchi2025}, the matrix version has received little attention up to now.
In this work, we address this gap in the literature and study how symmetries can be exploited to accelerate the computation of the stabilizer extent for unitaries in many important cases.

\subsection{Symmetry and invariance}

Central to our work are the notions of \emph{symmetry} and \emph{invariance}. 
Given a matrix $M$, we say that it is invariant under a map $\phi$ if $\phi(M)=M$.
In the following, we focus on invertible maps that are linear or anti-linear and preserve the Hilbert-Schmidt inner product, that is, $\tr(\phi(A)^\dagger \phi(B)) = \tr(A^\dagger B)$. 
We call such maps \emph{isometries}.
The set of isometries that leave $M$ invariant forms a group, henceforth called the \emph{symmetry group} of $M$.
Given such a group of isometries $\G$, the set of invariant matrices forms a subspace, the so-called $\G$-\emph{invariant subspace}.
Note that if $\G$ contains anti-linear isometries, then this is only a real subspace.
In the following, we shall focus on finite groups, although the generalization to compact groups is evident.
It is an elementary fact that the projection onto the $\G$-invariant subspace is given by
\begin{equation*}
    \Pi_{\G} (M) := \frac{1}{|\G|} \sum_{\phi\in\G} \phi(M)\,.
\end{equation*}
Again, if $\G$ contains anti-linear isometries, then $\Pi_\G$ is only $\R$-linear, not $\C$-linear.

The main symmetry groups we consider in this work are the following:\\[0.5em]
\noindent\textbf{Complex conjugation:} Let $\mathcal{K}_n = \{\id, (\cdot)^\ast\}$ be the group generated by complex conjugation $M\mapsto M^*$ in the computational basis.
Its invariant subspace is the real subspace of matrices with real entries, $\R^{2^n\times 2^n}$.\\[0.5em]
\noindent\textbf{Transposition:} Let $\mathcal{T}_n = \{\id, (\cdot)^T\}$ be the group generated by transposition $M\mapsto M^T$ in the computational basis. 
Its invariant subspace is the complex subspace of symmetric matrices, $\Sym{2^n}(\C)$.\\[0.5em]
\noindent\textbf{Diagonal Pauli group:} Let $\ZP{n} = \{ Z(z):=Z^{z_1}\otimes\dots\otimes Z^{z_n} \; | \; z \in \Z_2^n \}$ be the group of diagonal Pauli operators, i.e., the one generated by Pauli $Z$ matrices on every qubit.
We can identify $\ZP{n}$ with the group of quantum channels $M\mapsto Z(z)M Z(z)^\dagger$, the invariant subspace of which is the complex subspace of diagonal matrices.\\[0.5em] 
\noindent\textbf{Permutation group:} Let $\Sn{n}$ be the group of qubit permutations generated by all possible SWAP gates.
Again, by a slight abuse of notation, we consider the invariant subspace under conjugation with permutations, $M\mapsto \pi M \pi^\dagger$.

\vspace{0.5em}

We are furthermore interested in the subset of Clifford unitaries $\Cl{n}^\G$ that are invariant under a group $\G$.
If $\G$ acts by conjugation, such as for complex conjugation, the diagonal Pauli group, and the permutation group, then $\Cl{n}^\G$ is a subgroup of the Clifford group.
In contrast, $\Cl{n}^{\mathcal{T}_n}$ is not a subgroup as transposition-invariant Cliffords do not necessarily commute.\footnote{For instance, Hadamard and $CZ$ are symmetric matrices, but $(H_1 CZ)^T = CZ^T H_1^T = CZ H_1 \neq H_1 CZ$.}
The subgroups $\Cl{n}^{\mathcal{K}_n}$ and $\Cl{n}^{\ZP{n}}$ are known as the \emph{real Clifford group} $\Rl{n}$ and \emph{diagonal Clifford group} $\Di{n}$, respectively.
The former is generated by the Pauli $Z$, CNOT, and Hadamard gates, while the latter is generated by the phase gate $S$ and the controlled-$Z$ gate $CZ$.
We will also consider the group $\K{n}\times\ZP{n}$ generated by the commuting groups $\K{n}$ and $\ZP{n}$.
The invariant Clifford subgroup is then the group of real-diagonal Cliffords $\RD{n}$, generated by $Z$ and $CZ$.

\section{Results}\label{sec: Results}

\subsection{Computation of stabilizer extent under symmetries}\label{subsec: Statement of problem and main theoretical results}

In this section, we show that the computation of the stabilizer extent for unitaries that are invariant under certain symmetry groups $\G$ can be reduced to an optimization problem over $\Cl{n}^\G$ only, which is typically of a much smaller size than the original one.
To this end, we define the $\G$-symmetric stabilizer extent as follows.

\begin{definition}[$\G$-symmetric stabilizer extent]\label{def: Symmetric stabilizer extent}
    Take $\G$ a symmetry group and suppose that $\Cl{n}^\G$ spans the (real or complex) $\G$-invariant subspace. Then, the \emph{$\G$-symmetric stabilizer extent} of a $\G$-invariant matrix $M\in \mathbb{C}^{2^n\times 2^n}$ is defined as:
    \begin{equation*}
        \xi_{\G} (M) \coloneqq \min \{ \lVert \mathbf{x} \rVert_1^2 \; \vert \; \mathbf{x} \in \mathbb{C}^{\left|\Cl{n}^\G\right|} \;: \; M = \sum_{C\in\Cl{n}^\G} x_CC\}\,.
    \end{equation*}
\end{definition}

In this work, we focus on the cases where $\G = \K{n}, \ZP{n}, \K{n}\times\ZP{n}$.
Then, $\Cl{n}^\G$ corresponds to the real, diagonal, and real-diagonal subgroups of the Clifford group, respectively.
Since already the respective Pauli subgroups span the $\G$-invariant subspaces in these cases, $\xi_G$ is well-defined.
In the $\ZP{n}$ and $\K{n}\times\ZP{n}$ case, the matrix $U$ and the Cliffords $C\in\Cl{n}^\G$ are real, and we can thus restrict the expansion coefficients $\vec x$ to real numbers, turning the optimization problem into a linear program (LP).

For $\G = \K{n}, \ZP{n}, \K{n}\times\ZP{n}$, we then show that the stabilizer extent of a $\G$-invariant unitary coincides with its $\G$-symmetric stabilizer extent.

\begin{theorem}[Strong $\G$-symmetry reduction]\label{theorem: Main result -- diagonals and reals}
    Let $\G\in \{\K{n}, \ZP{n}, \K{n}\times\ZP{n}\}$. For any $U$ such that $U = \Pi_{\G}(U)$, $\xi(U) = \xi_{\G} (U)$.
\end{theorem}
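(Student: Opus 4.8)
The plan is to prove the two inequalities $\xi_{\G}(U)\ge\xi(U)$ and $\xi_{\G}(U)\le\xi(U)$ separately. The first is immediate: since $\Cl{n}^\G\subseteq\Cl{n}$, every $\G$-symmetric decomposition of $U$ is in particular an ordinary Clifford decomposition, so shrinking the feasible set can only increase the optimum. The entire content of the theorem is therefore the reverse inequality $\xi_{\G}(U)\le\xi(U)$, i.e.\ that an optimal Clifford decomposition can be ``folded'' into a $\G$-symmetric one without increasing its $\ell_1$-norm. Throughout I write $\lVert M\rVert_{\Cl{}}$ and $\lVert M\rVert_{\G}$ for the minimal $\ell_1$-norm of a decomposition of $M$ over $\Cl{n}$ and over $\Cl{n}^\G$, respectively, so that $\xi=\lVert\cdot\rVert_{\Cl{}}^2$ and $\xi_{\G}=\lVert\cdot\rVert_{\G}^2$.

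First I would carry out a \emph{symmetrization step}. Starting from an optimal decomposition $U=\sum_{C\in\Cl{n}}x_C C$ with $\lVert\mathbf x\rVert_1=\lVert U\rVert_{\Cl{}}$, I apply the projector $\Pi_{\G}$. Because each of $\K{n},\ZP{n},\K{n}\times\ZP{n}$ maps the Clifford group onto itself (conjugation by a diagonal Pauli, complex conjugation, and their product all preserve $\Cl{n}$), and because $U=\Pi_{\G}(U)$, averaging yields a new decomposition of $U$ whose atoms are the projected Cliffords $\Pi_{\G}(C)$. For the conjugation-type group $\ZP{n}$ the map $\Pi_{\G}$ is $\C$-linear and this is literally $U=\sum_C x_C\,\Pi_{\G}(C)$; for the reality symmetry $\K{n}$ (and the product) $\Pi_{\G}$ is only $\R$-linear, so one must instead average $U$ against its complex conjugate, which pairs each Clifford $C$ with $\overline C$ and replaces the atom by $\mathrm{Re}(e^{i\theta}C)$, $\theta=\arg x_C$. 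In either case the triangle inequality produces a decomposition of $U$ over the $\G$-projected atoms of total weight at most $\lVert\mathbf x\rVert_1$.

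The argument then closes provided the projected atoms themselves lie in the unit $\ell_1$-ball over $\Cl{n}^\G$, i.e.\ the \emph{key lemma}
\begin{equation*}
    \lVert \Pi_{\G}(C)\rVert_{\G} \le 1 \qquad\text{for every } C\in\Cl{n}
\end{equation*}
(and, in the anti-linear cases, the slightly stronger $\lVert\mathrm{Re}(e^{i\theta}C)\rVert_{\G}\le1$). Indeed, expanding each $\Pi_{\G}(C)=\sum_{D\in\Cl{n}^\G}y^{(C)}_D D$ with $\sum_D\lvert y^{(C)}_D\rvert\le1$ and substituting gives $U=\sum_D\big(\sum_C x_C y^{(C)}_D\big)D$, whose $\ell_1$-norm is at most $\sum_C\lvert x_C\rvert\sum_D\lvert y^{(C)}_D\rvert\le\lVert\mathbf x\rVert_1$; squaring yields $\xi_{\G}(U)\le\xi(U)$. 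I expect this lemma to be the crux. It is genuinely nontrivial: the naive decomposition that keeps only the $\G$-invariant Paulis already present in $C$ has an $\ell_1$-norm that Cauchy--Schwarz bounds only by $\sqrt{2^n}$, so one is forced to exploit the full, much richer symmetric Clifford subgroup as atoms.

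My plan for the lemma is to pass to the vectorized/Choi picture. For $\G=\ZP{n}$ one has $\Pi_{\G}(C)=\mathrm{diag}(C)$, and $\mathrm{vec}(\mathrm{diag}(C))=\sqrt{2^n}\,\Pi_{\mathrm{diag}}(C\otimes\one)\ket{\Omega}$ is, up to normalization, the stabilizer state $(C\otimes\one)\ket{\Omega}$ (with $\ket{\Omega}$ maximally entangled) projected by the stabilizer-code projector $\Pi_{\mathrm{diag}}=\prod_j\tfrac12(\one+Z_jZ_{j+n})$ onto $\mathrm{span}\{\ket{a,a}\}$. By the monotonicity of the extent under stabilizer projectors (the state analogue of Proposition~\ref{prop: Properties extent}(ii)), this projected state has extent at most $1$, giving a stabilizer decomposition of $\mathrm{diag}(C)$ of controlled weight; re-expressing each resulting code-space stabilizer piece, supported on an affine subspace $S$ with a quadratic phase pattern, as $(\text{diagonal Clifford})\times\Pi_S$ and using $\lVert\Pi_S\rVert_{\G}=1$ turns this into a genuine $\Di{n}$-decomposition. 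The $\K{n}$ case uses the real Clifford / real stabilizer structure, and $\K{n}\times\ZP{n}$ follows by combining the two, the atoms $\RD{n}$ being the $\pm1$-diagonal Cliffords generated by $Z$ and $CZ$. The main obstacle I anticipate is bookkeeping the normalization so that the constant comes out to \emph{exactly} $1$ --- as $\mathrm{diag}(\mathrm{SWAP})=\tfrac12(\one+Z_1Z_2)$ shows, the bound is saturated and no slack can be lost --- together with the phase-sensitivity of $\mathrm{Re}(\cdot)$ in the anti-linear reality cases, which forces the lemma to hold for the phased Cliffords $e^{i\theta}C$ rather than for $C$ alone.
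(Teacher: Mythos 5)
Your reduction of the theorem to the key lemma is exactly the paper's argument: the paper proves Theorem~\ref{theorem: Main result -- diagonals and reals} by symmetrizing an optimal decomposition term by term and invoking Lemma~\ref{lemma: Characterization}, which is precisely your key lemma in the phased form $\xi_{\G}(\Pi_\G(xC))\leq|x|^2$, so your identification of the crux -- including the phase-sensitivity $\mathrm{Re}(e^{i\theta}C)$ in the anti-linear cases and the triangle-inequality closing step -- is on target. Your sketch for $\G=\ZP{n}$ also goes through and is in substance the paper's proof of Lemma~\ref{lemma: characterization diagonal Cliffords} in Choi-picture clothing: since $(C\otimes\one)\ket{\phi^+}$ is itself a stabilizer state, its image under the code projector $\Pi_{\mathrm{diag}}=\prod_j\tfrac12(\one+Z_jZ_{j+n})$ is a \emph{single} sub-normalized stabilizer state (note you do not want abstract monotonicity here -- by itself it only guarantees a decomposition into arbitrary stabilizer states, not code-space ones; projecting the trivial decomposition is what you need), its entries form a quadratic phase pattern on an affine subspace $S=V+\vec t$, and writing the un-vectorized matrix as a diagonal Clifford $D$ times $\Pi_S=|V^\perp|^{-1}\sum_{\vec z\in V^\perp}(-1)^{\vec t\cdot\vec z}Z(\vec z)$ gives total weight $2^{-s/2}\leq 1$. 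These are the same objects ($D$, the completion $\kappa$, the signed diagonal Pauli group) that the paper constructs directly from the matrix entries, and the normalization you worried about does come out exactly right.

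The genuine gap is the case $\G=\K{n}$ (and with it $\K{n}\times\ZP{n}$): your one technical tool -- realizing $\Pi_\G$ as a stabilizer-code projection in the vectorized picture and exploiting that projections of stabilizer states stay (sub-normalized) stabilizer -- is structurally unavailable there. Complex conjugation is anti-linear, so $M\mapsto\Pi_{\K{n}}(M)=\tfrac12(M+M^*)$ is only $\R$-linear and is not implemented by any stabilizer code projector acting on the Choi state; the sum of the two stabilizer states $\mathrm{vec}(C)$ and $\mathrm{vec}(C^*)$ is in general not proportional to a stabilizer state, and no analogue of the norm bookkeeping that closed the diagonal case is available. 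Correspondingly, the paper's proof of Lemma~\ref{lemma: characterization real Cliffords} is a genuinely different and more delicate argument: it splits $C=\eta+i\kappa$ according to whether $\vec a\cdot(\vec x,\vec y)$ vanishes, asks whether the Pauli $Z(\vec u)$ (where $\vec a=(\vec u,\vec v)$) lies in the common stabilizer of the columns of $C$, and in each of four sub-cases (commuting with $\vec m=\vec v$ or $\vec m\neq\vec v$; anti-commuting with the sign vectors independent or dependent) explicitly builds a decomposition $\Pi_{\K{n}}(xC)=\alpha|x|\,|G|^{-1}\sum_{P\in G}PC'$ with $C'$ a real Clifford, $G$ a real Pauli stabilizer group, and $0\leq\alpha\leq1$. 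Your proposal names the ingredients (``real Clifford / real stabilizer structure'') but supplies none of this construction, so two of the three cases of the theorem remain unproven as written.
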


The proof of Theorem \ref{theorem: Main result -- diagonals and reals} relies on the following intermediate result, the proof of which we defer to Appendix~\ref{app: Proof of characterization}.

\begin{lemma}\label{lemma: Characterization}
    Let $\G\in \{\K{n}, \ZP{n}, \K{n}\times\ZP{n}\}$, $C\in \Cl{n}$, and $x\in\C$.
    Then, $\xi_{\G} (\Pi_\G (x C)) \leq |x|^2$.
\end{lemma}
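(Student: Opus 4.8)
The plan is to exhibit, for every $C\in\Cl{n}$ and $x\in\C$, an explicit expansion of $\Pi_{\G}(xC)$ over the invariant Clifford subgroup $\Cl{n}^\G$ whose coefficient vector has $\ell_1$-norm at most $|x|$; squaring then gives $\xi_\G(\Pi_\G(xC))\le|x|^2$. Throughout I use that $\Cl{n}^\G$ is genuinely a group for the three choices $\G\in\{\K{n},\ZP{n},\K{n}\times\ZP{n}\}$ (unlike $\Cl{n}^{\T{n}}$), that $\Cl{n}$ is closed under complex conjugation and under the central phases $\pm i$ (so $C^\ast,iC\in\Cl{n}$), and that $\Pi_\G$, being an average of isometries, is contractive in operator norm, whence $\lVert\Pi_\G(xC)\rVert_\infty\le|x|$. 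For the linear group $\ZP{n}$ one has $\Pi_\G(xC)=x\,\Pi_\G(C)$ and may pull $x$ out by homogeneity, reducing to the unphased case; for the anti-linear groups $\Pi_\G(xC)=\mathrm{Re}(xC)=\tfrac12(xC+\bar x\,C^\ast)$, and the complex phase of $x$ cannot be pulled out, so tracking it is part of the work below.

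The heart of the argument -- and the content the name ``Characterization'' points to -- is to show that the projection of a Clifford is a scaled, invariant-Clifford dressing of a stabilizer projector. Concretely, I would prove that $\Pi_\G(C)=\lambda\,E\,\Pi$, where $E\in\Cl{n}^\G$ is an invariant Clifford, $\Pi=\tfrac{1}{|\mathcal S|}\sum_{g\in\mathcal S}g$ is the uniform average over an abelian group $\mathcal S\subseteq\Cl{n}^\G$ of signed Pauli operators, and $|\lambda|\le1$ (the last bound being immediate from contractivity of $\Pi_\G$). For $\G=\ZP{n}$ this says the diagonal part $\mathrm{diag}(C)$ is supported on an affine subspace of $\Z_2^n$ and carries a diagonal-Clifford (quadratic) phase of constant modulus there; for $\G=\K{n}$ the analogous statement holds for $\mathrm{Re}(C)$, and $\G=\K{n}\times\ZP{n}$ is the common refinement forcing $E$ and $\mathcal S$ to be simultaneously real and diagonal. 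Establishing this form is the main obstacle: I expect to prove it by invoking a normal form for Clifford operators (a diagonal-Clifford layer, a CNOT/permutation layer, and a Hadamard layer) and tracking how the projection annihilates the off-invariant content, or equivalently via the Pauli expansion $\mathrm{diag}(C)=2^{-n}\sum_z\tr(Z(z)C)\,Z(z)$ together with the known fact that the nonzero Pauli components of a Clifford form a coset of a subgroup with entries of constant modulus.

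With this structure in hand the linear case closes immediately: $\Pi_\G(xC)=x\lambda\,E\,\Pi=\tfrac{x\lambda}{|\mathcal S|}\sum_{g\in\mathcal S}E g$, with each $Eg\in\Cl{n}^\G$ distinct, an expansion of $\ell_1$-norm $|x\lambda|\le|x|$, hence $\xi_{\ZP{n}}(\Pi_\G(xC))\le|x|^2$. For the anti-linear groups I additionally track the phase: applying the characterization to $C$ and to $iC\in\Cl{n}$ shows that $\mathrm{Re}(C)$ and $\mathrm{Im}(C)$ are dressings of \emph{complementary} stabilizer projectors on the same coset $E\cdot\mathcal S$ of invariant Cliffords, so that $\mathrm{Re}(xC)=\mathrm{Re}(x)\,\mathrm{Re}(C)-\mathrm{Im}(x)\,\mathrm{Im}(C)$ re-expands over that single set of Cliffords with coefficients whose $\ell_1$-norm collapses to $\max(|\mathrm{Re}\,x|,|\mathrm{Im}\,x|)\le|x|$, rather than the lossy $|\mathrm{Re}\,x|+|\mathrm{Im}\,x|$ that naive subadditivity would give. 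Squaring yields $\xi_\G(\Pi_\G(xC))\le|x|^2$ in all three cases. Equivalently, once $\Pi_\G(C)=\lambda E\Pi$ is known one may finish abstractly via the symmetric analogues of Proposition~\ref{prop: Properties extent}, namely $\xi_\G(E\Pi)\le\xi_\G(E)=1$ and homogeneity, the only genuinely new ingredient beyond Proposition~\ref{prop: Properties extent} being the structural characterization itself.
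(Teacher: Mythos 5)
Your overall strategy is essentially the paper's: prove that $\Pi_\G(C)$ equals a scalar times a stabilizer-group average times an invariant Clifford, and read off the expansion. Your diagonal case is fine and matches the paper (there $\Pi_{\ZP{n}}(C)=2^{-s/2}\,\Pi\,D$ with $D$ a diagonal Clifford extending $\mathrm{diag}(C)$ and $\Pi$ an average of signed $Z$-type Paulis), and your phase-tracking identity $\mathrm{Re}(xC)=\mathrm{Re}(x)\,\mathrm{Re}(C)-\mathrm{Im}(x)\,\mathrm{Im}(C)$ is a sensible way to treat general $x$. The gap is in the anti-linear cases: your claim that the characterization applied to $C$ and $iC$ \emph{always} exhibits $\mathrm{Re}(C)$ and $\mathrm{Im}(C)$ as complementary-projector dressings $\tfrac{1}{2}(\one\pm g)E$ of one and the same real Clifford $E$ is false. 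Take $C=e^{-i\pi X/4}=\tfrac{1}{\sqrt{2}}(\one-iX)\in\Cl{1}$ (this is $HSH$ up to a global phase $e^{-i\pi/4}$, hence Clifford, and its entries already conform to the paper's phase convention). Then $\mathrm{Re}(C)=\tfrac{1}{\sqrt{2}}\one$ and $\mathrm{Im}(C)=-\tfrac{1}{\sqrt{2}}X$ are two \emph{distinct} real Cliffords, each with trivial dressing group and scalar $\lambda=1/\sqrt{2}$; neither can be written as $\tfrac{1}{2}(\one\pm g)E$, since for $g\neq\one$ such a matrix has rank $2^{n-1}$ while $\one/\sqrt{2}$ has full rank, and for $g=\one$ it is unitary while $\one/\sqrt{2}$ is not. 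This is exactly case 2a of the paper's proof of its Lemma for $\K{n}$, which your dichotomy omits.

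The omission is not cosmetic, because in that case your max-identity collapse is unavailable and your stated fallback fails quantitatively: contractivity gives only $|\lambda|\le 1$, and subadditivity then yields $\lVert\cdot\rVert_1\le|\mathrm{Re}\,x|+|\mathrm{Im}\,x|\le\sqrt{2}\,|x|$, which does not prove the lemma. What actually rescues this case is the sharper fact that when $\mathrm{Re}(C)$ and $\mathrm{Im}(C)$ are multiples $\lambda_1E_1,\lambda_2E_2$ of two distinct Cliffords, the entrywise identity $\lVert\mathrm{Re}(C)\rVert_{\mathrm{HS}}^2+\lVert\mathrm{Im}(C)\rVert_{\mathrm{HS}}^2=\lVert C\rVert_{\mathrm{HS}}^2=2^n$ forces $\lambda_1^2+\lambda_2^2=1$, whence $|\mathrm{Re}\,x|\,\lambda_1+|\mathrm{Im}\,x|\,\lambda_2\le|x|$ by Cauchy--Schwarz (in the example, $\lambda_1=\lambda_2=1/\sqrt{2}$); the paper gets the same number from the observation that the columns of $\mathrm{Re}(C)$ are sub-normalized stabilizer states of norm $1/\sqrt{2}$. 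Conversely, when $\lambda_1=\lambda_2=1$ the supports must overlap and the cancellation giving $\max(|\mathrm{Re}\,x|,|\mathrm{Im}\,x|)$ is what saves you. So the lemma needs a genuine trichotomy --- one part vanishes, or same-coset complementary dressings, or distinct sub-normalized Cliffords --- and proving \emph{which} of these occurs (the paper's cases 1a/1b/2a/2b) is the actual content of the $\K{n}$ and $\K{n}\times\ZP{n}$ proofs; your outline asserts a single uniform answer that is wrong, so as written those two cases of the lemma do not go through.
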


\begin{proof}[Proof of Theorem~\ref{theorem: Main result -- diagonals and reals}]
    Suppose $U$ has an optimal decomposition $U=\sum_{C\in \Cl{n}} x_C C$, i.e.~$\xi (U) = \lVert \mathbf{x} \rVert_1^2\,.$ We then have $U = \Pi_{\G} (U) = \sum_{C \in \Cl{n}} \Pi_{\G} (x_C C)$, as $\Pi_{\G}$ is additive (but not necessarily $\C$-linear).
    Next, let $\Pi_\G (x_C C) = \sum_{S\in\Cl{n}^\G} y_S^{(C)} S$ be an optimal decomposition, and write 
    \begin{equation*}
        U 
        = \sum_{C \in \Cl{n}} \sum_{S\in \Cl{n}^\G} y_S^{(C)} S
        =
        \sum_{S\in \Cl{n}^\G}
        \left(\sum_{C \in \Cl{n}}y_S^{(C)} \right) S \,.
    \end{equation*}
    By Lemma~\ref{lemma: Characterization}, $\lVert \mathbf{y}^{(C)} \rVert_1 \leq |x_C|$, thus
    \begin{multline}
    \sqrt{\xi_\G(U)} \leq 
        \sum_{S\in \Cl{n}^\G} \left| \sum_{C \in \Cl{n}} y_S^{(C)} \right|
        \leq
        \sum_{C \in \Cl{n}} \lVert \mathbf{y}^{(C)} \rVert_1 \\
        \leq \lVert \mathbf{x} \rVert_1
        = \sqrt{\xi(U)} \,. \nonumber
    \end{multline}
    Since $\xi (U) \leq \xi_{\G} (U)$ by definition, we have shown equality.
\end{proof}
\begin{figure}
    \centering
    \includegraphics[width=0.99\linewidth]{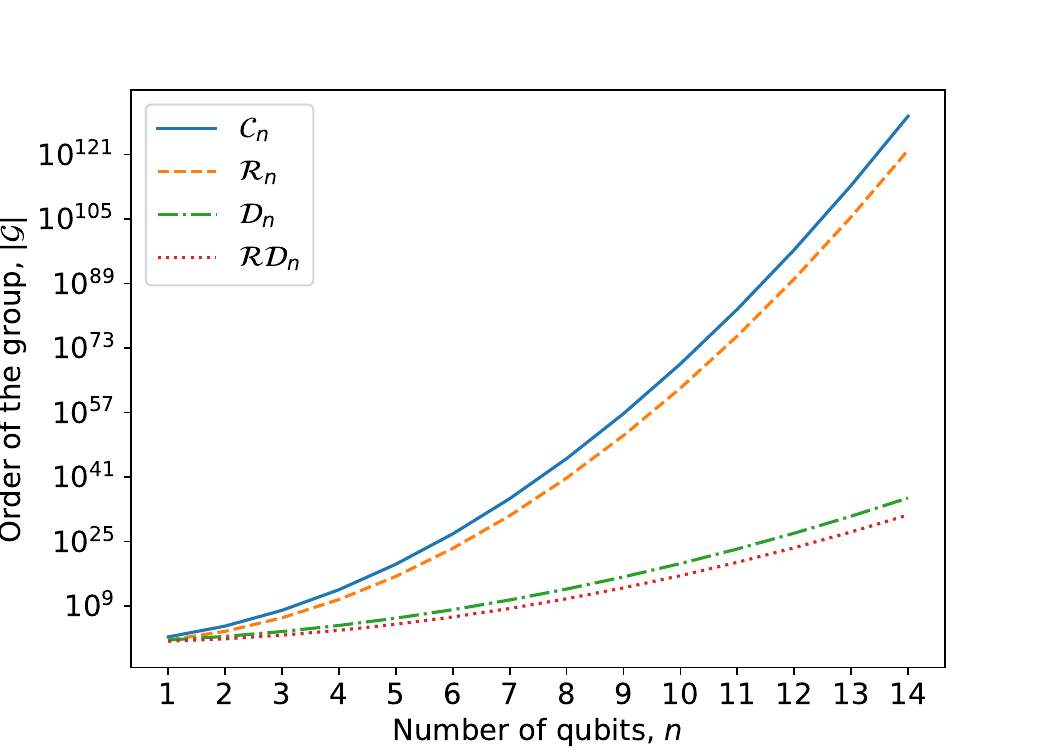}
    \caption{\textbf{Order of the Clifford group $\Cl{n}$ and subgroups of interest.} The figure presents the size of the Clifford group $\Cl{n}$ in solid blue, the real subgroup $\Rl{n}$ in dashed orange, the diagonal subgroup $\Di{n}$ in dash-dotted green, and the real-diagonal subgroup in dotted red.}
    \label{fig: Cardinality}
\end{figure}

In Fig.~\ref{fig: Cardinality}, we see that, by itself, Theorem~\ref{theorem: Main result -- diagonals and reals} allows a reduction of orders of magnitude in the number of elements required for the computation of the stabilizer extent, in particular for the cases of diagonal or real-diagonal unitaries. 
However, the invariant subgroups are still superexponentially large and we find, for instance, $\left| \Di{n} \right| > 10^8$ for $n=6$, which is already too demanding for the optimizer. Fortunately, this result can be enhanced by leveraging other underlying symmetries of the nonstabilizer unitary of interest. 

As an example, let us consider a diagonal unitary $U$. By Theorem~\ref{theorem: Main result -- diagonals and reals}, we know it admits an optimal decomposition so that $U = \sum_{D\in \Di{n}} x_D D$.
Let us, in addition, assume that $U$ is also invariant under another group of \emph{linear} isometries $\G$.
While Theorem~\ref{theorem: Main result -- diagonals and reals} may not apply to this additional symmetry (cf.~Sec.~\ref{subsec: Counter examples to T1}), we can still exploit the linearity to write
\begin{equation*}
    \Pi_{\G} (U) = U = \sum_{D\in\Di{n}} x_D \Pi_{\G} (D) \,.
\end{equation*}
Hence, the optimization can be carried out over the set of projections $\Pi_{\G} (D)$.
Coincidences among those can significantly reduce the required search space and thus the size of the optimization problem. An analogous strategy was previously used for other magic monotones \cite{Heinrich2019}.
Moreover, in the case of many important symmetry groups, each $\Pi_{\G}(D)$ is a convex combination of Clifford unitaries, allowing us to find not only the optimal value of the stabilizer extent but also a suitable decomposition in terms of Clifford unitaries. We formalize these ideas into the following lemma.
\begin{lemma}[Weak $\G$-symmetry reduction]\label{lemma: weak reduction} Let $U$ be an $n$-qubit unitary whose optimal decomposition can be found over a generic set of matrices, $\mathcal{S}$, so that $U = \sum_{C\in \mathcal{S}} x_C C\,.$ Let $\G$ be a symmetry group of $U$: $\Pi_{\G}(U) = U$. Then, the stabilizer extent of $U$ can be computed using only the set of different projections of the matrices in the set $\mathcal{S}$.
\end{lemma}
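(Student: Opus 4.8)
The plan is to apply the invariance projection $\Pi_\G$ to a known optimal decomposition of $U$ and then exploit coincidences among the images $\Pi_\G(C)$. Concretely, let $\mathcal{S}$ be a set over which the optimum is attained, $U=\sum_{C\in\mathcal{S}}x_C C$ with $\lVert\mathbf{x}\rVert_1^2=\xi(U)$. Since the weak reduction applies to a group $\G$ of \emph{linear} isometries, $\Pi_\G$ is $\C$-linear, so $U=\Pi_\G(U)=\sum_{C\in\mathcal{S}}x_C\,\Pi_\G(C)$. Collecting all $C$ that share the same image and writing $\mathcal{S}'=\{\Pi_\G(C):C\in\mathcal{S}\}$ for the set of distinct projections, we obtain $U=\sum_{P\in\mathcal{S}'}y_P P$ with $y_P=\sum_{C:\,\Pi_\G(C)=P}x_C$. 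The triangle inequality gives $\lVert\mathbf{y}\rVert_1\le\lVert\mathbf{x}\rVert_1$, so the minimal $\ell_1$-norm squared over $\mathcal{S}'$-decompositions is no larger than $\xi(U)$; this is the easy direction and already exhibits the computational saving, since the search space collapses from $\mathcal{S}$ to the coarser $\mathcal{S}'$.

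For the matching lower bound I would use that, in the cases of interest, each $\phi\in\G$ maps Cliffords to Cliffords (for instance conjugation by a permutation or by a diagonal Pauli), so that $\Pi_\G(C)=\tfrac{1}{|\G|}\sum_{\phi\in\G}\phi(C)$ is a convex combination of Clifford unitaries. Given \emph{any} decomposition $U=\sum_{P\in\mathcal{S}'}y_P P$, substituting this expansion and relabeling produces a genuine Clifford decomposition of $U$ whose $\ell_1$-norm is at most $\lVert\mathbf{y}\rVert_1$ (equality can only fail when distinct $\phi(C)$ coincide, which further decreases the norm). Hence $\xi(U)\le\lVert\mathbf{y}\rVert_1^2$ for every such $\mathbf{y}$, so $\xi(U)\le\min\{\lVert\mathbf{y}\rVert_1^2 : U=\sum_{P\in\mathcal{S}'}y_P P\}$. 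Combined with the first paragraph, the two optimization problems coincide, and the same unfolding simultaneously recovers an explicit optimal decomposition of $U$ into Clifford unitaries.

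The main obstacle is this lower bound: one must rule out that restricting to $\mathcal{S}'$ produces an $\ell_1$-norm strictly below $\xi(U)$. Because the elements of $\mathcal{S}'$ are no longer Cliffords but convex mixtures of them, the nontrivial point is that they cannot be combined more cheaply than the Cliffords they average, and it is precisely the convex-combination-of-Cliffords property that enforces this. This is also where the linearity assumption on $\G$ is essential: for anti-linear symmetries (e.g.\ complex conjugation) $\Pi_\G$ is only $\R$-linear and the factoring $\Pi_\G(x_C C)=x_C\Pi_\G(C)$ used in the first step breaks down, so such groups must be handled by the strong reduction of Theorem~\ref{theorem: Main result -- diagonals and reals} instead. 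I would therefore state the lemma for linear, Clifford-preserving symmetry groups and emphasize that the entire gain stems from the multiplicities $\lvert\{C:\Pi_\G(C)=P\}\rvert$, which shrink $\mathcal{S}'$ without perturbing the optimal value.
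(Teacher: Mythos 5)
Your proposal is correct, and its first half is exactly the paper's own argument: the paper justifies the lemma by applying the $\C$-linear projector $\Pi_{\G}$ to an optimal decomposition, $U=\Pi_{\G}(U)=\sum_{C\in\mathcal{S}}x_C\,\Pi_{\G}(C)$, and then letting coincidences among the projections shrink the search space. Where you genuinely go beyond the paper is the matching lower bound. The paper never shows that the optimization restricted to $\mathcal{S}'=\{\Pi_{\G}(C)\}$ cannot return a value \emph{strictly below} $\xi(U)$; it only remarks, as an aside about recovering explicit decompositions, that for "many important symmetry groups" each projection is a convex combination of Clifford unitaries, and it declares the lemma "completely general". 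You promote that convex-combination property to the crux of the proof, and rightly so: without it the reduced problem can genuinely undershoot the extent, so your restriction to linear, Clifford-preserving groups is a \emph{necessary} hypothesis, not pedantry. A concrete one-qubit witness of this: take $\mathcal{S}=\Cl{1}$, $V=\cos\alpha\,X+\sin\alpha\,Z$ for small $\alpha>0$ (non-Clifford, Hermitian, $V^2=\one$), and $\G=\{\id,\;M\mapsto VMV\}$, a group of linear isometries with $\Pi_{\G}(V)=V$. One computes $VXV=\cos(2\alpha)X+\sin(2\alpha)Z$, hence $\Pi_{\G}(X)=\cos\alpha\,V$, so the reduced problem contains the feasible point $V=\tfrac{1}{\cos\alpha}\Pi_{\G}(X)$ of value $1/\cos^2\alpha$; yet the dual witness $W=X+(\sqrt{2}-1)Z$ satisfies $\bigl|\tfrac12\tr(W^\dagger C)\bigr|\le 1$ for every $C\in\Cl{1}$ and certifies $\xi(V)\ge\bigl(\cos\alpha+(\sqrt{2}-1)\sin\alpha\bigr)^2>1/\cos^2\alpha$. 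Here the projection $\Pi_{\G}(X)$ is an average of a Clifford with a non-Clifford, its extent exceeds $1$, and the reduced optimum is strictly smaller than $\xi(V)$ — precisely the failure mode your second paragraph is designed to exclude. Your point about anti-linear symmetries breaking the factoring $\Pi_{\G}(x_C C)=x_C\Pi_{\G}(C)$ likewise matches the paper's (implicit) assumption of linearity in the discussion preceding the lemma. In short, your proof is a more careful version of the paper's: same reduction step, plus the lower-bound argument that the paper's stated conclusion actually requires.
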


Unlike Theorem~\ref{theorem: Main result -- diagonals and reals}, this lemma is completely general. The gain associated with its application is upper-bounded by the number of orbits resulting from the projection of the elements of the original set of matrices. In Appendix~\ref{app: cardinality of Dn/Sn}, we illustrate the computation of this upper bound for diagonal unitaries that are also permutation-invariant, i.e., when $\G=\mathcal{S}_n$. In Appendix~\ref{app: Permutation invariance}, we go one step further and fully enumerate the total number of nonequivalent projections of diagonal Cliffords onto the permutation invariant subspace. This leads to impressive reductions in the size of the search space for the SOCP. As an example, for $n=6$, we reduce the number of terms from $\sim 1.3\times 10^{8}$ to only $\sim 6.4\times 10^{5}$, a mere 0.5\% of the original number.

\begin{figure*}[t]
    \centering
    \includegraphics[width=0.95\linewidth]{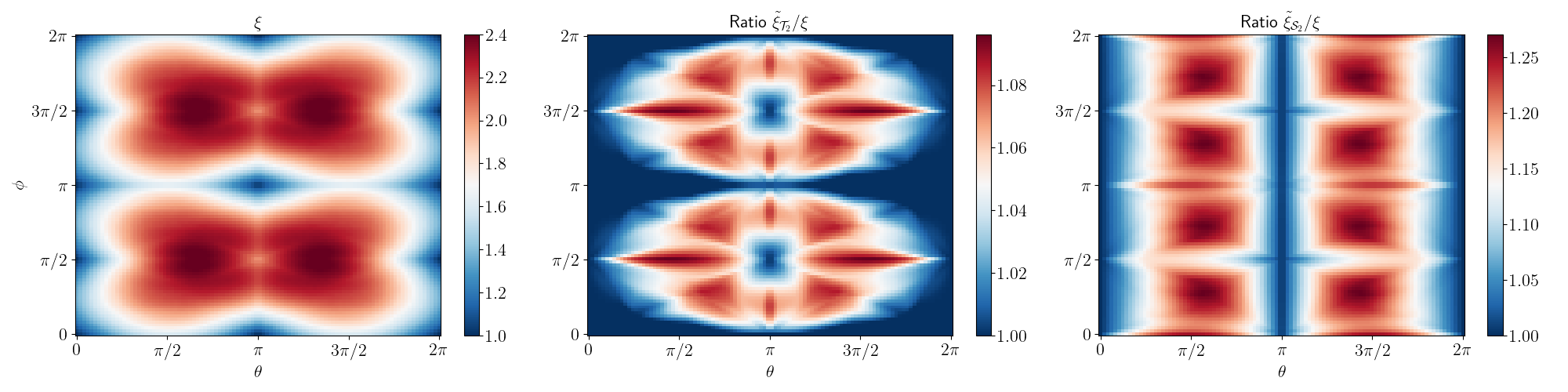}
    \caption{\textbf{Stabilizer extent of two-qubit $\text{fSim}(\theta,\phi)$ gates as a function of the parameters $\theta$ and $\phi$.} Color map depicting (a) the (true) stabilizer extent, $\xi \left(\text{fSim}(\theta,\phi) \right)$, (b) the ratio between the minimum $\ell_1$-norm squared obtained only with the transposition-invariant Cliffords and the (true) stabilizer extent, $\tilde{\xi}_{\mathcal{T}_{2}}\left(\text{fSim}(\theta,\phi) \right) /\xi\left(\text{fSim}(\theta,\phi) \right)$, and (c) the ratio between the minimum $\ell_1$-norm squared obtained using the permutation-invariant Cliffords and the (true) stabilizer extent, $\tilde{\xi}_{\mathcal{S}_{2}}\left(\text{fSim}(\theta,\phi) \right) /\xi\left(\text{fSim}(\theta,\phi) \right)\,.$ If the result in Theorem~\ref{theorem: Main result -- diagonals and reals} were to hold for this particular example with the transposition and permutation invariances, the ratios depicted in (b) and (c) would have to be 1 for every pair $(\theta,\phi)$. That is eminently not the case, even though it is possible to expand $\text{fsim}(\theta,\phi)$ in terms of $\Cl{2}^{\mathcal{T}_2}$ and $\Cl{2}^{\Sn{2}}$.}
    \label{fig: xi fSim gates}
\end{figure*}

In our practical demonstrations, we focus on leveraging Theorem~\ref{theorem: Main result -- diagonals and reals} for the case of diagonal and real-diagonal unitaries, since these lead to the strongest reductions as seen by Fig.~\ref{fig: Cardinality}. Additionally, as we will discuss, many important circuits use gates of this form. To conclude, we would like to take this opportunity to point out that our Theorem~\ref{theorem: Main result -- diagonals and reals} establishes an important connection between the stabilizer extent of quantum states and the stabilizer extent of diagonal unitaries. Let $U$ be a diagonal unitary. Our result guarantees that the optimal decomposition of $U$ (with respect to the $\ell_1$-norm) can be obtained by decomposing $U$ only in terms of diagonal Clifford unitaries $D$ so that $U = \sum_{D\in \Di{n}} x_D D\,.$ Let $\ket{U} = U \ket{+}^{\otimes n}$. Then, the quantum state $\ket{U}$ admits a decomposition of the form: $\ket{U} = \sum_{D\in \Di{n}} x_D \ket{s_D}$, where $\ket{s_D} = D \ket{+}^{\otimes n}$. Since there is a one-to-one correspondence between the diagonal gates and the set of equatorial stabilizer states, this means that the problem of finding the optimal extent of $U$ can be reduced to the problem of finding the optimal $\ell_1$-norm of a decomposition of $\ket{U}$ using only equatorial stabilizer states. Hence, if the methods proposed in Refs.~\cite{deSilva2024, Hamaguchi2025} can be modified to find the optimal decomposition of magic states exclusively in terms of equatorial stabilizer states, they might provide a nice way of boosting the results demonstrated below.

\subsection{Is strong symmetry reduction extendable to every symmetry group?}\label{subsec: Counter examples to T1}

It could be tempting to assume that Theorem~\ref{theorem: Main result -- diagonals and reals} is generalizable to include every possible symmetry group. 
However, for the $\G$-symmetric stabilizer extent to be well-defined, the subset $\Cl{n}^{\G}$ must span the entire $\G$-invariant subspace, otherwise the expansion $M=\sum_{C\in \Cl{n}^{\G}} x_C C$ is not guaranteed to exist (cf.~Def.~\ref{def: Symmetric stabilizer extent}). 
The three symmetry groups $\G = \K{n}, \ZP{n}, \K{n}\times\ZP{n}$ for which we proved strong $\G$-symmetry reduction all meet this condition. For other $\G$, however, this might not be the case and thus our result cannot be applied (see proofs in Appendix~\ref{app: Proof of characterization}). 
One example is $\Cl{n}^{\Sn{n}}$, which does not span the permutation-invariant subspace for $n>3$.\footnote{The Clifford group is not a 4-design, thus the set of operators commuting with all $C^{\otimes 4}$ is larger than the span of $\Sn{4}$. 
If $\Cl{n}^{\Sn{n}}$ span the entire permutation-invariant subspace, there would be operators in the commutant of the unitary group that are not linear combinations of permutations, contradicting Schur-Weyl duality.}

What if $\Cl{n}^{\G}$ does not span the entire $\G$-invariant subspace, but a specific gate $U$ so that $U = \Pi_\G(U)$ \emph{can} be expanded in terms of that set? 
In this restricted setting, could a similar result as the one provided by Theorem~\ref{theorem: Main result -- diagonals and reals} apply? 
That is, could we state that $\tilde{\xi}_\G (U) = \xi (U)$ (where $\tilde\xi_\G$ is now restricted to the span of $\Cl{n}^{\G}$)?
We answer this question in the negative by providing a specific counterexample. 
In particular, we consider the important set of two-parameter fermionic simulation (fSim) gates~\cite{Kivlichan+2018}, which have the following matrix representation in the computational basis:
\begin{equation}\label{eq: fSim gates}
    \text{fSim} (\theta,\phi) \coloneqq \begin{pmatrix}
       1 & 0 & 0 & 0 \\
       0 & \cos (\theta) & -i\sin (\theta) & 0 \\
       0 & -i\sin (\theta) & \cos (\theta) & 0 \\
        & 0 & 0 & e^{-i\phi}
    \end{pmatrix}\,.
\end{equation}
It is easily understood from the above representation that these gates enable continuous transformations, with $\theta$ controlling the swapping between $\ket{01}$ and $\ket{10}$ and $\phi$ introducing a phase for $\ket{11}$. Important particular examples of these gates are $CP(\phi) = \text{fSim}(0,-\phi)$ and $\text{iSWAP} = \text{fSim}(-\pi/2,0)$; $\text{fSim}(\theta,0)$ is a matchgate~\cite{Valiant2002} and $\text{fSim}(\pi/2,\pi/6)$ was the two-qubit gate implemented in Google's quantum processor in their attempted quantum supremacy experiment~\cite{GoogleAI2019}. Since these are only two-qubit gates, it is feasible to compute the extent using the entire Clifford group. The results are presented in Fig.~\ref{fig: xi fSim gates}(a).

Looking at Eq.~\eqref{eq: fSim gates}, we can see that the fSim gates are invariant under transposition: $\text{fSim}(\theta,\phi) = \text{fSim}(\theta,\phi)^{T}$.
We can thus attempt to compute the optimal decomposition of $\text{fSim}(\theta,\phi)$ into symmetric Clifford gates $\Cl{2}^{\T{2}}$ and compare the associated squared $\ell_1$-norm $\tilde{\xi}_{\T{2}}(\text{fSim}(\theta,\phi))$ to the stabilizer extent $\xi(\text{fSim}(\theta,\phi))$.
Our numerical studies demonstrate that, while for some parameter ranges $\tilde{\xi}_{\mathcal{T}{2}} (\text{fSim}(\theta,\phi))$ corresponds to the true stabilizer extent, for the most part $\tilde{\xi}_{\mathcal{T}{2}} (\text{fSim}(\theta,\phi)) > \xi (\text{fSim}(\theta,\phi))$. That is, we fail to find the correct value of the stabilizer extent if we restrict to expansions over the set $\Cl{n}^{\T{2}}$. Fig.~\ref{fig: xi fSim gates}(b) provides a clear visualization of the results.

The fSim gates are also invariant under qubit permutations. We work similarly to the case of transposition invariance and consider decompositions over the permutation-invariant Clifford subgroup $\Cl{2}^{\Sn{2}}$ with associated stabilizer extent $\tilde{\xi}_{\Sn{2}}(\text{fSim}(\theta,\phi))$.
In this case, the results are even worse than the ones obtained with the transposition-invariant Clifford gates; in particular, while the optimization problem could always be solved, the agreement between $\xi$ and $\tilde{\xi}_{\Sn{2}}$ is restricted to a smaller parameter region and the ratio $\tilde{\xi}_{\Sn{2}}/\xi$ achieves much larger values (see Fig.~\ref{fig: xi fSim gates}(c)).

\section{Applications}\label{sec: Practical demonstrations}

\subsection{Details of the practical implementation}\label{subsec: Implementation details}

As previously explained, this work aims to improve the solution of a particular SOCP. To this end, and before the optimization itself, we have to build the set of elements to be considered in the main constraint given to the optimizer. To generate the group of Clifford unitaries, we used the algorithm provided by Koenig and Smolin~\cite{koenig2014efficiently}. On the other hand, to generate the diagonal and real-diagonal subgroups, we constructed our own algorithm~\cite{GitRepo}. Both of these algorithms exploit a particular emergent structure from the isomorphism between $\Cl{n}/\mathcal{P}_n$ and $\Sp (2n)$, where $\Sp(2n)$ refers to the symplectic group of $2n\times2n$ matrices (see, for example, Ref.~\cite{rengaswamy2018synthesis}).

In our code, Lemma~\ref{lemma: weak reduction} is leveraged to reduce the size of the search space of the optimization problem for diagonal and real-diagonal unitaries that are also permutation-invariant. This is done by directly filtering the elements of the starting set of diagonal and real-diagonal Cliffords. Finally, to solve the optimization problem itself, we employed the third-party software Gurobi~\cite{gurobi} on a laptop with an Intel(R) Core(TM) i5-155U processor and 32GB of RAM.

We verified our code by computing the stabilizer extent of $T^{\otimes k}$ for $k\in\{1,\cdots,6\}$, $CS$, $CS^{\otimes 2}$, $CCZ$, and $CCZ^{\otimes 2}$ and comparing the results against the known theoretical values by Bravyi \emph{et al.}~\cite{Bravyi2019}. All of the instances yielded the correct outcomes. In the following sections, we present novel results for other important unitaries.

Our code is openly available on GitHub~\cite{GitRepo}.

\subsection{\texorpdfstring{Characterization of multicontrolled phase gates}{Characterization of multicontrolled phase gates}}\label{subsection: multicontrolled-Z-rotations}

We start by calculating the stabilizer extent of the family of multicontrolled phase gates on $n$ qubits, denoted $C^{n-1} P (\theta)$ and defined as
\begin{equation}\label{eq: def multicontrolled Z}
    C^{n-1} P (\theta) = \mathrm{diag}\left( 1,1,\dots,1,e^{i\theta} \right) \in \mathbb{C}^{2^n}\,.
\end{equation}

\begin{table*}[t]
\subfloat[$C^{n-1}Z$ gates.]{%
\begin{tabular}{c c c}
        \hline\hline
        $n$ & Gate & $\xi$ \\
        \hline
        1 & $Z$ & 1\\
        2 & $CZ$ & 1\\
        3 & $CCZ$ & 1.77778\\
        4 & $C^3Z$ & 2.25000\\
        5 & $C^4Z$ & 2.50694\\
        6 & $C^5Z$ & 2.64063\\
        7 & $C^6Z$ & 2.70877\\
         \hline\hline
\end{tabular}
}\hspace{1cm}
\subfloat[$C^{n-1}S$ gates.]{%
\begin{tabular}{c c c}
        \hline\hline
        $n$ & Gate & $\xi$ \\
        \hline
        1 & $S$ & 1\\
        2 & $CS$ & 1.6\\
        3 & $C^2S$ & 2.05000\\
        4 & $C^3S$ & 2.31250\\
        5 & $C^4S$ & 2.45313\\
        6 & $C^5S$ & 2.52578\\
        7 & $C^6S$ & ---\\
         \hline\hline
\end{tabular}
}\hspace{1cm}
\subfloat[Gate with maximal stabilizer extent, for each $n$.]{%
\begin{tabular}{c c c c}
        \hline\hline
        $n$ & Gate & $\theta_{\text{max}}$ & $\xi$ \\
        \hline
        1 & $T$    & $\pi/4$ & $1.17157 $\\
        2 & $CS$   & $\pi/2$ & 1.6 \\
        3 & $C^2P(\theta_{\text{max}})$ & $0.6476\pi$ & $2.13263$ \\
        4 & $C^3P(\theta_{\text{max}})$ & $0.7048\pi$& $2.50000$ \\
        5 & $C^4P(\theta_{\text{max}})$ & $0.7288\pi$& $2.70792$ \\
        6 & $C^5P(\theta_{\text{max}})$ & $0.7397\pi$ & $2.81774$ \\
        7 & $C^6P(\theta_{\text{max}})$ & --- & ---\\
         \hline\hline
\end{tabular}\label{subtab: Maximal extent of multicontrolled phase gates}
}\hspace{1cm}
\caption{\textbf{Stabilizer extent, $\xi$, for relevant gates.} We present the value of the stabilizer extent for the (a) $C^{n-1}Z$ gates, (b) $C^{n-1}S$ gates, and (c) gates with maximal stabilizer extent for each $n$. The latter results were obtained by interpolation of the values used to construct the curves in Fig.~\ref{fig: CRZ(th) with synthesis}.}\label{tab: Summary of extent results CZ, CS, maximal}
\end{table*}

The results obtained for the stabilizer extent of this family of gates are presented in Fig.~\ref{fig: CRZ(th) with synthesis}. We see that, as the number of (control) qubits increases, the distance between the curves starts to diminish. Moreover, the position of the absolute maxima progressively moves towards $\theta = \pi$ and the local minimum present at that position seems to become less pronounced. We conjecture that, in the limit $n \rightarrow\infty$, the local minimum disappears and the curve has a single maximum at $\theta_{\text{max}} = \pi$.

\begin{figure}[t]
    \centering
    \includegraphics[width=0.99\linewidth]{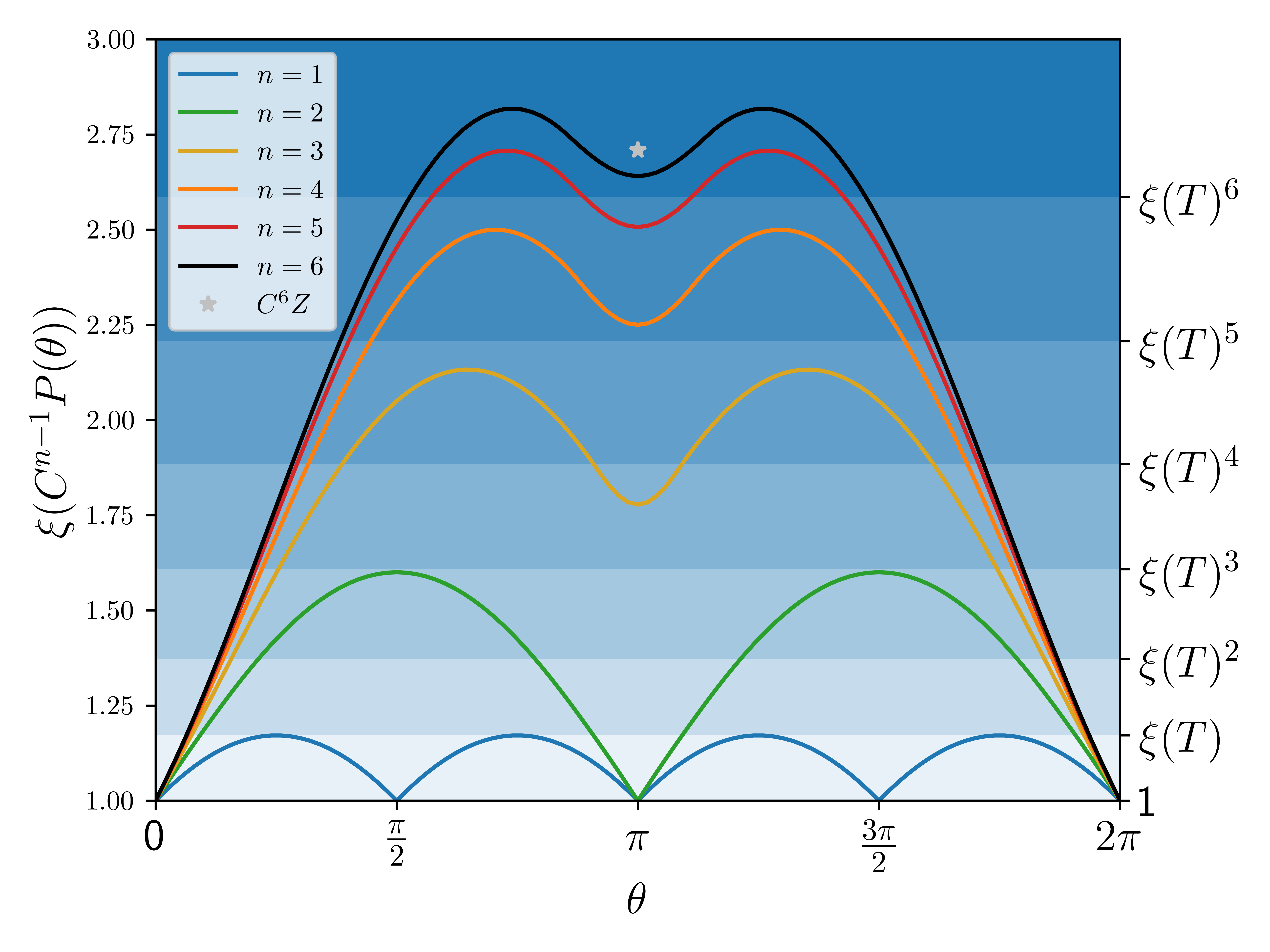}
    \caption{\textbf{Stabilizer extent of the multicontrolled-$P(\theta)$ gates as a function of $\theta$.} The figure shows the stabilizer extent of the $C^{n-1}P (\theta)$ gates as defined in Eq.~\eqref{eq: def multicontrolled Z} for $n\in \{1,\dots,6\}$. The background of the plot is divided into regions colored in different shades of blue, highlighting values of the stabilizer extent between $\xi (T^{s-1})$ and $\xi (T^s)$ for $s\in \{1, 7\}$ and allow us to immediately identify the minimum number of $T$ gates needed to synthesize the different gates, as per Proposition~\ref{prop: gate synthesis}.}
    \label{fig: CRZ(th) with synthesis}
\end{figure}

In Table~\ref{tab: Summary of extent results CZ, CS, maximal}, we present the explicit values of the stabilizer extent for the $C^{n-1}Z$ and $C^{n-1}S$ gates. We note that, up to $n=4$, $\xi (C^{n-1}Z) < \xi (C^{n-1}S)$, but for $n\geq 5$ this relation is reversed. This is consistent with the previous conjecture that in the limit $n \rightarrow\infty$ the maximum occurs at $\theta_{\text{max}} = \pi$. Additionally, the table also presents the maximum extent value associated with the gate family $C^{n-1} P (\theta)$ for each value of $n$ and the corresponding rotation angle $\theta_{\max}$.

We use this set of gates to make interesting studies concerning the properties of submultiplicativity of the stabilizer extent under tensor product and composition. We go further by analyzing the role of entanglement in the decompositions of this important family of unitaries and conclude with comments on the cost of synthesizing some of these gates.

\subsubsection{(Sub)Multiplicativity of the stabilizer extent under tensor product and composition}

\begin{figure}[b]
    \centering
    \includegraphics[width=1\linewidth]{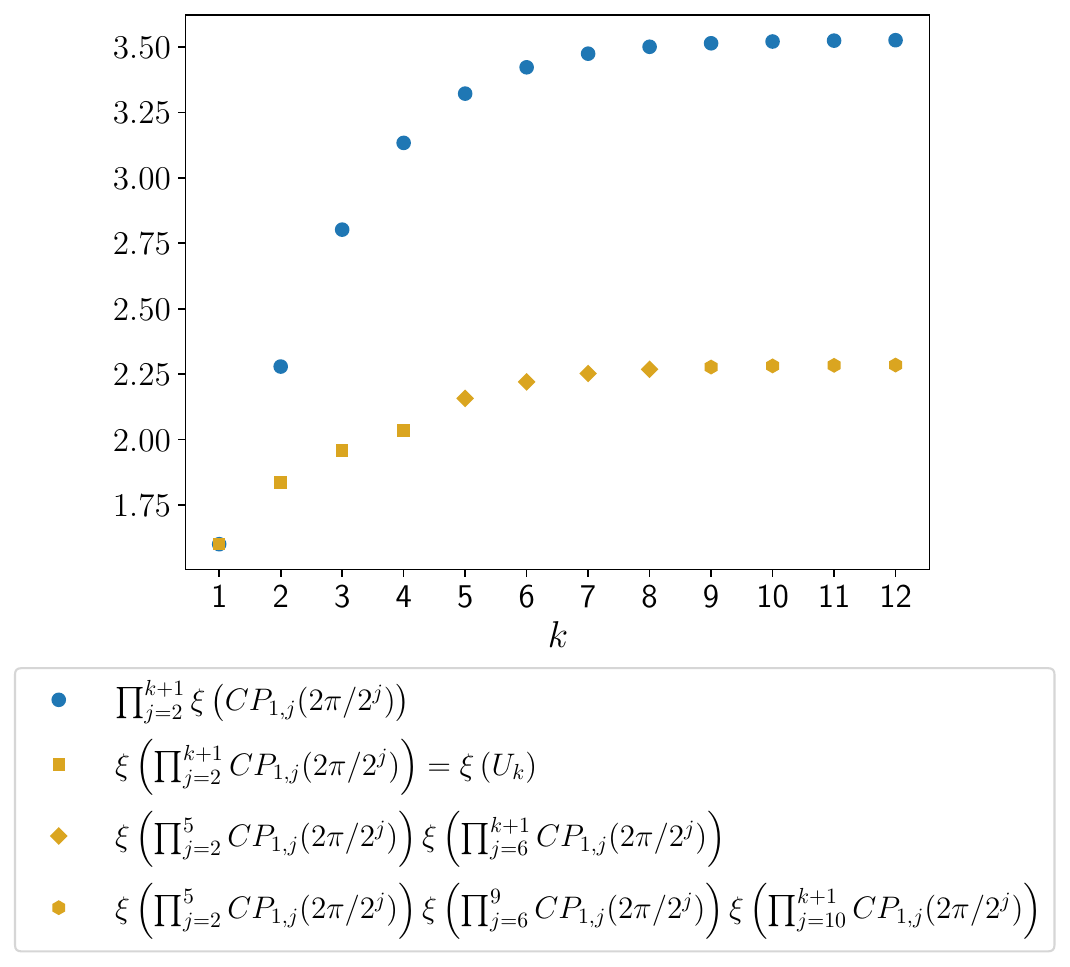}
    \caption{\textbf{Submultiplicativity of the controlled-phase gates in the quantum Fourier transform subroutine.} The blue circles depict the values obtained by multiplying the stabilizer extent of the individual $\{CP_{1,j}(2\pi / 2^j)\}_{j=2}^{k+1}$ gates. Each circle corresponds to the value obtained for the largest unitary block, $U_k$, in a QFT subroutine with $(k+1)$ qubits. In yellow, we present the results obtained when splitting each $U_k$ into unitary blocks of the maximum size. Up to $k=4$, the results correspond to the true value of the stabilizer extent. For $5\leq k \leq 8$, each $U_k$ is split into two blocks whose stabilizer extent is multiplied. Finally, for $k\geq 9$, each $U_k$ is split into three blocks.}
    \label{fig: QFT submultiplicativity}
\end{figure}

We numerically tested the (sub)multiplicativity of the stabilizer extent under tensor products and multiplicativity holds for all the tested instances. Namely, up to five tensor products of the same single-qubit $P(\theta)$, as well as tensor products of $CP (\theta)$ with other $CP (\theta)$ or the $C^2P (\theta)$ gate, and also of $P(\theta)$ and $C^3P(\theta)$, for $\theta\in[0,2\pi]$; that is to say, we tested the multiplicativity of the curves appearing in Fig.~\ref{fig: CRZ(th) with synthesis}. More generally, we also considered tensor products with elements having different angles randomly selected within that interval, including $CP(\theta_1)$ with $CP(\theta_2)$, $CP(\theta_1)$ with $C^2P(\theta_2)$, $P(\theta_1)$ with $C^3P(\theta_2)$, and $P(\theta_1)$ with multiple other phase gates. As we found no counter-example, the question of whether multiplicativity of the stabilizer extent holds for 1-, 2-, and 3-qubit unitaries (as it does for quantum states) remains open.

Now, a particular subset of the family of multicontrolled phase gates, the $CP(2\pi / 2^j)$ gates, is used in the textbook formulation of the quantum Fourier transform (QFT) subroutine. We focus on these gates to analyze the submultiplicativity of the stabilizer extent under composition. In particular, we consider unitary blocks of the form:
\begin{equation*}
    U_k = \prod_{j=2}^{k+1} CP_{1,j} (2\pi / 2^j) \,,
\end{equation*}
where the subscript indicates that the two-qubit gate is controlled in the first qubit and targets the $j$th qubit. Hence, each $U_k$ is a $(k+1)$-qubit unitary with $k$ controlled-phase gates. 

The results are depicted in Fig.~\ref{fig: QFT submultiplicativity}. We consider different values of $k$ ranging from 2 to 12. The blue circles represent the values obtained by computing the stabilizer extent of each two-qubit gate in $U_k$ and multiplying the results. Contrastingly, the yellow shapes (squares, diamonds, and octagons) correspond to the values obtained by splitting each $U_k$ into unitary blocks of 5 qubits (the largest blocks whose stabilizer extent can be computed using our code), calculating the stabilizer extent for each block, and multiplying the results. The figure perfectly captures the strict submultiplicativity of the stabilizer extent.

These results will be useful in Sec.~\ref{subsec: QFT}, when we analyze the remarkable improvements that our results provide to the simulation of QFT circuits using the naive sum-over-Cliffords approach.

\subsubsection{The role of entanglement}

We now study how entanglement impacts the value of the (squared) $\ell_1$-norm of stabilizer decompositions of $C^{n-1}Z$ gates. Because these gates are real and diagonal, Theorem~\ref{theorem: Main result -- diagonals and reals} guarantees that an optimal decomposition can be found in terms of real-diagonal Clifford unitaries. Every $C\in \RD{n}$ can be written as
\begin{equation*}
    C = \prod_{j=1}^n Z_j^{a_j} \prod_{k>j}^n CZ^{b_{jk}}_{jk}\,,
\end{equation*}
where $a_j \in \{0,1\}$ determines whether a $Z$ gate is acting on the $j$th qubit and $b_{jk} \in \{0,1\}$ establishes whether a $CZ$ gate is being applied between qubits $j$ and $k$. It is therefore easy to see that the number of $CZ$ gates in a given $n$-qubit real-diagonal Clifford unitary is bounded between 0 and $n(n-1)/2\,.$

To gain a deeper understanding of the entangling structures underpinning the optimal decompositions of $C^{n-1} Z$ gates, we break down the real-diagonal Clifford unitaries into sets defined by the number of $CZ$ gates. We denote these sets $\mathfrak{C}_i$, where the subscript $i$ denotes the number of $CZ$ gates of the real-diagonal Clifford unitaries in the set. For example, $\prod_{j=1}^n Z_j^{a_j} \in \frak{C}_0$, every $\prod_{j=1}^n Z_j^{a_j} CZ_{kl}$ with $l>k$ is in $\frak{C}_1$, every $\prod_{j=1}^n Z_j^{a_j} CZ_{kl} CZ_{mp}$ with $l>k$, $p>m$, and $(k,l)\neq (m,p)$ is in $\frak{C}_2$, and so on. After doing this separation of $\RD{n}$ into subsets $\{\frak{C}_i\}_i$, we can investigate how the square of the $\ell_1$-norm of the best stabilizer decomposition of $C^{n-1}Z$ gates changes when we narrow the search space for the linear program from the whole $\RD{n}$ group to specific collections of $\frak{C}_i$'s. In doing this additional (heuristic) reduction of the search space, we are not guaranteed to find the absolute optimal decomposition and are likely to uncover only a suboptimal one. Remarkably, our results show that, for $C^{n-1} Z$ gates with $n>3$, the optimal decomposition can be found within a restricted collection of $\frak{C}_i$ sets, highlighting that while some entangling structures are strictly necessary, others are not.

In Fig.~\ref{fig: entanglement role}, we show how our figure of merit changes as we increase the number of sets $\mathfrak{C}_i$ that are being used during the optimization. As we can see, for $ 3 < n \leq 7$, we do not need to consider all $\mathfrak{C}_i$ in order to achieve the optimal solution: $\xi(C^{n-1}Z)$. We conjecture that the same will hold also for $n>7$, suggesting that further heuristic reductions to the search space might be possible, potentially enabling the computation of the stabilizer extent for even larger unitaries. A finer look into Fig.~\ref{fig: entanglement role} reveals that, for $n\leq 6$, four different $\mathfrak{C}_i$ sets suffice to achieve the optimal decomposition of $C^{n-1}Z$ gates. On the other hand, for $n=7$, five different $\mathfrak{C}_i$ sets are needed. Nevertheless, this is still a large reduction when we consider the total of 22 different sets $\{\mathfrak{C}_i\}_{i=0}^{21}$ that exist for that number of qubits.

We complement the analysis above with the information in Table~\ref{tab: sext for CCZ using different categories of CZ}, which explicitly presents the smallest $\ell_1$-norm squared obtained when different $\mathfrak{C}_i$ sets are used to decompose the $CCZ$ gate.

\begin{figure}[t]
    \centering
    \includegraphics[width=0.75\linewidth]{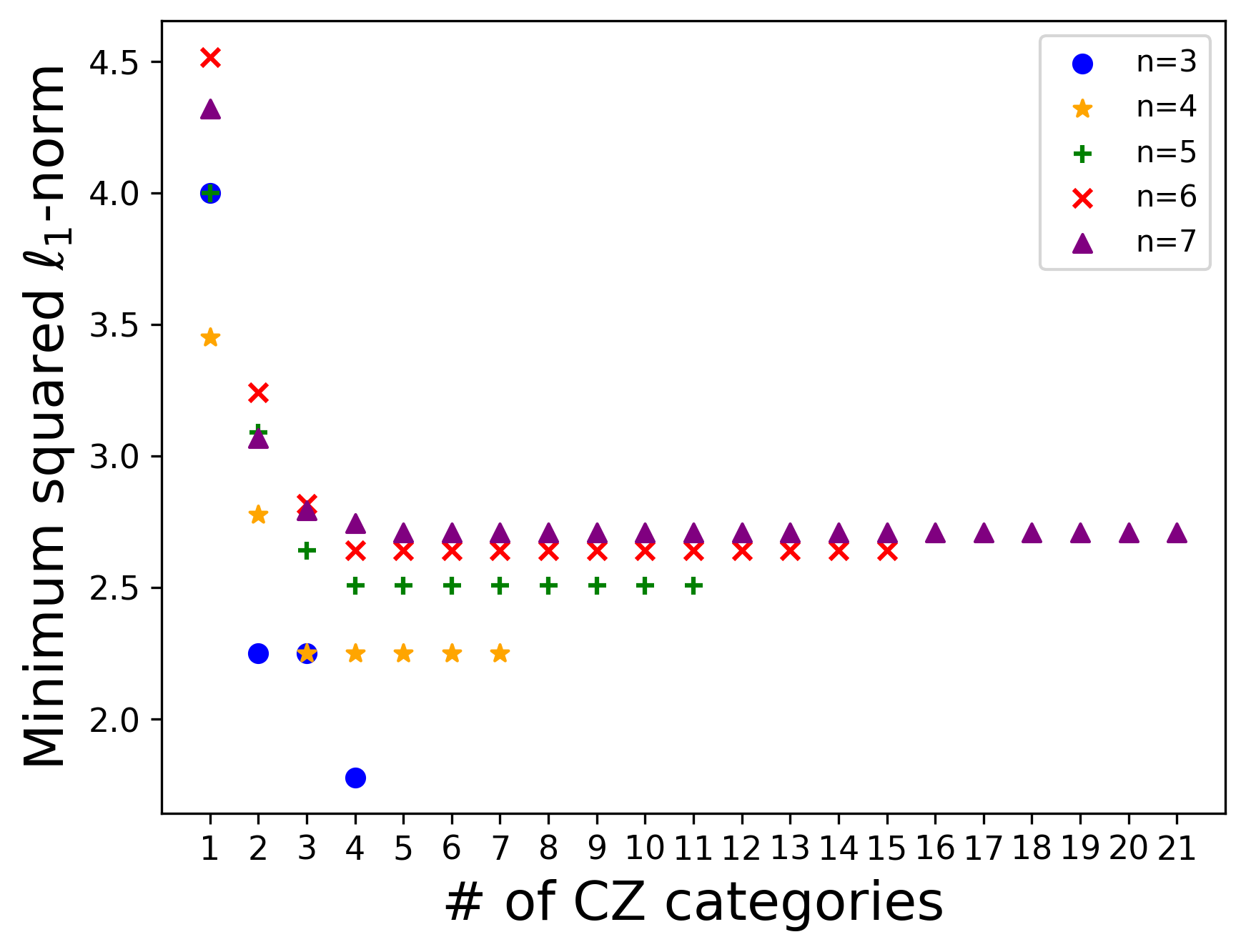}
    \caption{\textbf{The role of entanglement in the computation of $\xi(C^{n-1}Z)$.} We show the minimum value found for the value of the $\ell_1$-norm squared of the decomposition of $C^{n-1} Z$ gates in terms of real-diagonal Cliffords. This value is plotted against the number of $CZ$ sets, $\mathfrak{C}_i$, used during the optimization. Note that given a fixed number of sets, several different combinations are possible; we tested all of these combinations and present only the minimum value in each case. Note that, for $n\leq 6$, the value of the stabilizer extent (i.e., the absolute minimum $\ell_1$-norm squared) is achieved using only four $\mathfrak{C}_i$ sets. On the other hand, for $n=7$, five such sets are needed.}
    \label{fig: entanglement role}
\end{figure}

\begin{table}
    \centering
    \begin{tabular}{c c}
        \hline\hline
        $\mathfrak{C}_i$ sets used during optimization & Min. $\ell_1$-norm squared \\
         \hline
        $\mathfrak{C}_0$ or $\mathfrak{C}_3$ & 6.25000\\
        $\mathfrak{C}_1$ or $\mathfrak{C}_2$ & 4.00000\\
        $\mathfrak{C}_0 \cup \mathfrak{C}_2$ or $\mathfrak{C}_1 \cup \mathfrak{C}_3$ & 4.00000\\
        $\mathfrak{C}_0 \cup \mathfrak{C}_3$ & 4.00000\\
        $\mathfrak{C}_0 \cup \mathfrak{C}_1$ or $\mathfrak{C}_2 \cup \mathfrak{C}_3$ & 2.56000\\
        $\mathfrak{C}_1 \cup \mathfrak{C}_2$ & 2.25000\\
        any $\mathfrak{C}_i \cup \mathfrak{C}_j \cup\mathfrak{C}_k$ with $i\neq j\neq k \neq i$ & 2.25000\\
        $\cup_{i=0}^3 \mathfrak{C}_i$ & 1.77777 (optimal)\\
         \hline\hline
    \end{tabular}
    \caption{\textbf{Minimum $\ell_1$-norm squared of the decomposition of the $CCZ$ gate, using different combinations of $CZ$ sets.} For $n=3$, diagonal-real Clifford operators can be separated into four $CZ$ sets, $\{\mathfrak{C}_i\}_{i=0}^3$, where $i$ fixes the number of $CZ$ gates of the unitaries in each set. We present the results from the largest to the smallest best possible value of the square of the $\ell_1$-norm. In this case, the absolute optimal value is reached only when all three sets $\{\frak{C}_i\}_i$ are used.}
    \label{tab: sext for CCZ using different categories of CZ}
\end{table}

We conclude this discussion by pointing out that, since the $C^{n-1}Z$ gates are not only real-diagonal but also permutation-invariant, we also leveraged Lemma~\ref{lemma: weak reduction} to further reduce the memory requirements. For details, the interested reader is referred to Appendices~\ref{app: cardinality of Dn/Sn} and \ref{app: Permutation invariance}.

\subsubsection{Statements concerning gate synthesis}

Combining Proposition~\ref{prop: gate synthesis} with the enhanced ability to calculate the stabilizer extent of unitaries, we improve our capacity to lower bound the number of $T$ gates needed to synthesize a given unitary. Fig.~\ref{fig: CRZ(th) with synthesis} allows us to immediately identify what is the minimum number of $T$ gates that are needed for synthesizing each gate in the family of multicontrolled phase gates. For instance, we see that $CS$ requires at least 3 $T$ gates, the $CCZ$ at least 4, and the $CCS$ requires at least 5. These results are consistent with the ones obtained in Ref.~\cite{Howard2017} using the robustness of magic. However, in that work, the authors limit their analysis to 3-qubit gates of the third level of the Clifford hierarchy. Here, we easily extend this analysis to a larger set of gates. For instance, we note that the $C^3Z$, $C^3S$, $C^4Z$, and $C^4S$ gates all require at least 6 $T$ magic gates to be synthesized.

Usually, gates with smaller rotation angles require more $T$ gates to be synthesized. This is not reflected by our results, indicating that our lower bound is not tight and that the stabilizer extent might not be the best monotone for drawing conclusions concerning the cost of circuit synthesis. Even so, it provides a good estimate for many gates, as seen by the aforementioned examples of the $CS$ and $CCZ$ gates, which can be exactly synthesized with 3 and 4 $T$ gates, respectively.

\subsection{Speeding up the classical simulation of quantum Fourier transform circuits}\label{subsec: QFT}

\begin{figure*}[t]
    \centering
    \begin{tikzpicture} \node[scale=0.65] { \begin{quantikz}[thin lines] \lstick[wires = 5]{$\ket{0}^{\otimes 5}$} &\gate{H} &\ctrl{1}\gategroup[wires=5,steps=4,style={dashed,rounded corners,fill=blue!20, inner xsep=2pt, inner ysep=2pt},background]{{$U_4$}} &\ctrl{2} &\ctrl{3} &\ctrl{4} &\qw &\qw &\qw &\qw &\qw &\qw &\qw &\qw &\qw &\qw &\qw\\ &\qw &\gate{P(\frac{\pi}{2})} &\qw &\qw &\qw &\gate{H} &\ctrl{1}\gategroup[wires=4,steps=3,style={dashed,rounded corners,fill=red!20, inner xsep=2pt, inner ysep=2pt},background]{{$U_3$}} &\ctrl{2} &\ctrl{3} &\qw &\qw &\qw &\qw &\qw &\qw &\qw\\ &\qw &\qw &\gate{P(\frac{\pi}{4})} &\qw &\qw &\qw &\gate{P(\frac{\pi}{2})} &\qw &\qw &\gate{H} &\ctrl{1}\gategroup[wires=3,steps=2,style={dashed,rounded corners,fill=green!20, inner xsep=2pt, inner ysep=2pt},background]{{$U_2$}} &\ctrl{2} &\qw &\qw &\qw &\qw\\ &\qw &\qw &\qw &\gate{P(\frac{\pi}{8})} &\qw &\qw &\qw &\gate{P(\frac{\pi}{4})} &\qw &\qw &\gate{P(\frac{\pi}{2})} &\qw &\gate{H} &\ctrl{1}\gategroup[wires=2,steps=1,style={dashed,rounded corners,fill=purple!20, inner xsep=2pt, inner ysep=2pt},background]{{$U_1$}} &\qw &\qw \\ &\qw &\qw &\qw &\qw &\gate{P(\frac{\pi}{16})}&\qw &\qw &\qw &\gate{P(\frac{\pi}{8})} &\qw &\qw &\gate{P(\frac{\pi}{4})} &\qw &\gate{P(\frac{\pi}{2})} &\gate{H} &\qw \end{quantikz} }; \end{tikzpicture}
    \caption{\textbf{Example of a 5-qubit circuit implementing the quantum Fourier transform.} Each unitary block $U_j$ has $j$ controlled phase gates. The stabilizer extent of various of these blocks can be seen in Fig.~\ref{fig: QFT submultiplicativity}.}
    \label{fig: QFT circuit}
\end{figure*}
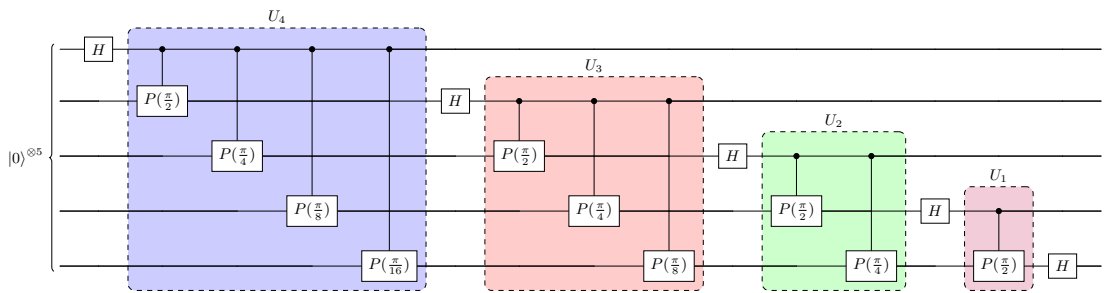

The quantum Fourier transform (QFT) is a fundamental subroutine in many important quantum algorithms, including quantum phase estimation, Shor's factoring algorithm, and quantum support vector machines. While QFT has been shown to be efficiently classically simulable on low-entangled inputs~\cite{Browne2007}, this is no longer the case when embedded as a component within these algorithms. Here, we investigate how the simulation of the QFT block (potentially embedded within an arbitrary quantum algorithm) is significantly enhanced by our results.

For this discussion, we deliberately focus on the runtime savings in simulating the QFT circuit with sum-over-Clifford simulators that stem from our symmetry reduction techniques.
This allows us to illustrate how our methods can be employed in practice to speed up stabilizer-based simulators.
The actual simulation runtime depends on other factors as well, for instance, what input states are used or how the QFT is embedded in a larger circuit. 
This is true for any simulation method (e.g.,~tensor networks) and determines which one should be used in practice.

For simplicity, we start by considering the 5-qubit quantum circuit implementing QFT as seen in Fig.~\ref{fig: QFT circuit}. The unitary blocks $U_1$ to $U_4$ can be decomposed into sums of Cliffords using our Theorem~\ref{theorem: Main result -- diagonals and reals} for diagonal unitaries. The stabilizer extent values obtained for each of these blocks (and previously depicted by the yellow squares in Fig.~\ref{fig: QFT submultiplicativity}) are $\xi (U_1)=1.6$, $\xi (U_2)=1.83566$, $\xi (U_3)=1.95857$, and $\xi (U_4)=2.03480$. This can be used to quantify the cost of simulating such a unitary block: $\prod_{i=1}^4\xi (U_i) \approx
11.705$. Alternatively, up to now, one would be limited to computing the decomposition of each controlled-phase gate individually. In that case, the cost of simulation would be determined by the product of the values of the first four blue points in Fig.~\ref{fig: QFT submultiplicativity}: $\prod_{j=2}^5 \xi (CP(2\pi / 2^j))^{6-j} \approx 32.015$. Hence, owing to the strong submultiplicativity of the stabilizer extent of controlled-phase gates, our results enable speeding up the simulation of this quantum circuit roughly $2.7$ times.

Although this may appear small, the advantage of our method becomes increasingly significant as the number of qubits increases. For instance, for the $8$-qubit QFT circuit, our results enable the use of a (suboptimal) decomposition with a total one-norm squared of $125.25$. Contrastingly, decomposing each non-Clifford controlled-phase gate individually leads to a decomposition with a one-norm squared of $1264.6$, a value that is approximately $10$ times larger. Similarly, for the $16$-qubit QFT circuit, using the commonly used approximation where any controlled-phase with a rotation angle smaller than $2\pi / 2^{13}$ is ignored~\cite{Nam2020AQFT}, our results enable a decomposition with a one-norm squared of $9.2\times 10^{4}$. On the other hand, under the same approximation, the individual decomposition of each controlled-phase gate leads to a one-norm squared of $3.0\times 10^{7}$. Hence, our approach speeds up the simulation of such a circuit roughly 322 times.

Motivated by fault-tolerant computation, quantum circuits are often rewritten into the Clifford$+T$ gate set. We could thus compare the cost of classically simulating the QFT circuits natively using our technique, to that of simulating the equivalent Clifford$+T$ circuits. Based on the results from Refs.~\cite{Nam2020AQFT, Gheorghiu2022}, we expect the $5$-qubit QFT circuit to be approximated by a Clifford+$T$ quantum circuits with 60 to 100 $T$ gates. This corresponds to an associated stabilizer extent between $1.3\times 10^4$ and $7.5\times 10^6$; hence, our approach allows the direct classical simulation of the 5-qubit QFT circuit in Fig.~\ref{fig: QFT circuit} between $1.1\times 10^3$ and $6.4\times 10^5$ times faster than one can simulate the corresponding Clifford+$T$ quantum circuit. Analogously, Nam \emph{et al.}~\cite{Nam2020AQFT} report that their algorithm can synthesize the 8-qubit and 16-qubit QFT circuits, respectively, with 303 $T$ gates and 1162 $T$ gates. This leads to $\xi (T^{\otimes 303}) = 6.9\times 10^{20}$ and $\xi (T^{\otimes 1162}) = 8.1\times 10^{79}$, values that are, respectively, $5.4\times 10^{18}$ and  $8.8\times 10^{74}$ times larger than the values of the one-norm squared of the decompositions obtained leveraging our strong symmetry result.

The task of synthesizing a given unitary into a specific gate set, viz., the Clifford+$T$ gate set, is incredibly hard. Hence, it is reasonable to assume that better algorithms than the one presented by Nam \emph{et al.} could be built. Even still, for the cost of classically simulating these circuits to compete with that of simulating them natively, leveraging our approach, the 5-, 8-, and 16-qubit QFT circuits would have to be synthesized with $T$ counts of 16, 31, and 73, respectively. It seems highly unlikely that such values can be achieved. This highlights an important point, notably that the cost of classically simulating a given quantum computation is sensitive to the set of operations used to express it.

\subsection{Hypergraph quantum states}\label{subsec: Hypergraphs}

We now turn our attention to circuits generating a special type of quantum states known as hypergraph quantum states~\cite{Rossi2013HGs}. Formally, these states are defined as follows.
\begin{definition}[Hypergraph quantum states~\cite{Rossi2013HGs}]\label{def: hypergraph states}
    Let $H = (V,E)$ be an undirected hypergraph where $V$ is the set of vertices and $E$ the set of hyperedges. Then, the corresponding $n$-qubit \emph{hypergraph quantum state} is
    \begin{equation*}
        \ket{\Psi_\mathrm{H}} \coloneqq \prod_{k=1}^n \prod_{e\in E^{(k)}} C^{k-1} Z_e \ket{+}^{\otimes n}\,,
    \end{equation*}
    where $E^{(k)}\subset E$ denotes the set of hyperedges of order $k$, i.e., the set of hyperedges connecting $k$ qubits. For $k=1$, $E^{(1)}=V$.
\end{definition}
 Put into words, given an undirected hypergraph $H$, the corresponding hypergraph state $\ket{\Psi_\mathrm{H}}$ is generated by first associating with each vertex the stabilizer state $\ket{+}$. Then, for every hyperedge $e$, a multicontrolled-$Z$ gate $C^{k-1}Z_e$ is applied to the qubits connected by $e$. We note that graph states are a special type of hypergraph state where all hyperedges are of order $k=2$. 

Hypergraph states are relevant in measurement-based quantum computation (MBQC)~\cite{RaussendorfBriegel2001, Miller2016hierarchy, Takeuchi2019}, quantum error correction~\cite{Lyons2017, ZhuJochym2022topological}, condensed-matter physics~\cite{Miller2016hierarchy, LuGao2019, ZhuJochym2022topological}, quantum metrology~\cite{Gachechiladze2016, Shettell2020}, and quantum foundations~\cite{Gachechiladze2016, Lyons2017, Guhne2014entanglement}. Moreover, these states are intimately connected to important quantum algorithms like the Deutsch-Jozsa and Grover algorithms~\cite{Rossi2013HGs}. Recently, complete categories of four-qubit hypergraph states were experimentally prepared and verified in a reprogrammable silicon-photonic quantum chip~\cite{Huang2024demonstration}.

Additionally, the recent interest in hypergraph quantum states has also unveiled important connections between the resource theory of magic, Boolean function analysis, coding theory, and quantum complexity of phases of matter~\cite{Miller2016hierarchy, LiuWinter2022manybody}. Still, while other non-classical properties of hypergraph states have been thoroughly studied~\cite{Gachechiladze2016, Lyons2017, Guhne2014entanglement, ZhouHamma2022, Sarkar2021squeezingHGS}, the characterization of nonstabilizerness remains fairly unexplored. Ref.~\cite{LiuWinter2022manybody} made a generic study of many-body quantum magic that included interesting results on the nonstabilizerness of the Union Jack state -- a special hypergraph state known to be universal for MBQC~\cite{Miller2016hierarchy}. More recently, Ref.~\cite{Chen2024magicofquantum} studied the nonstabilizerness of hypergraph states via stabilizer Rényi entropies.

\subsubsection{\texorpdfstring{Characterization of $k$-uniform hypergraph states}{Characterization of k-uniform hypergraph states}}

\begin{figure}
    \centering
    \includegraphics[width=0.99\linewidth]{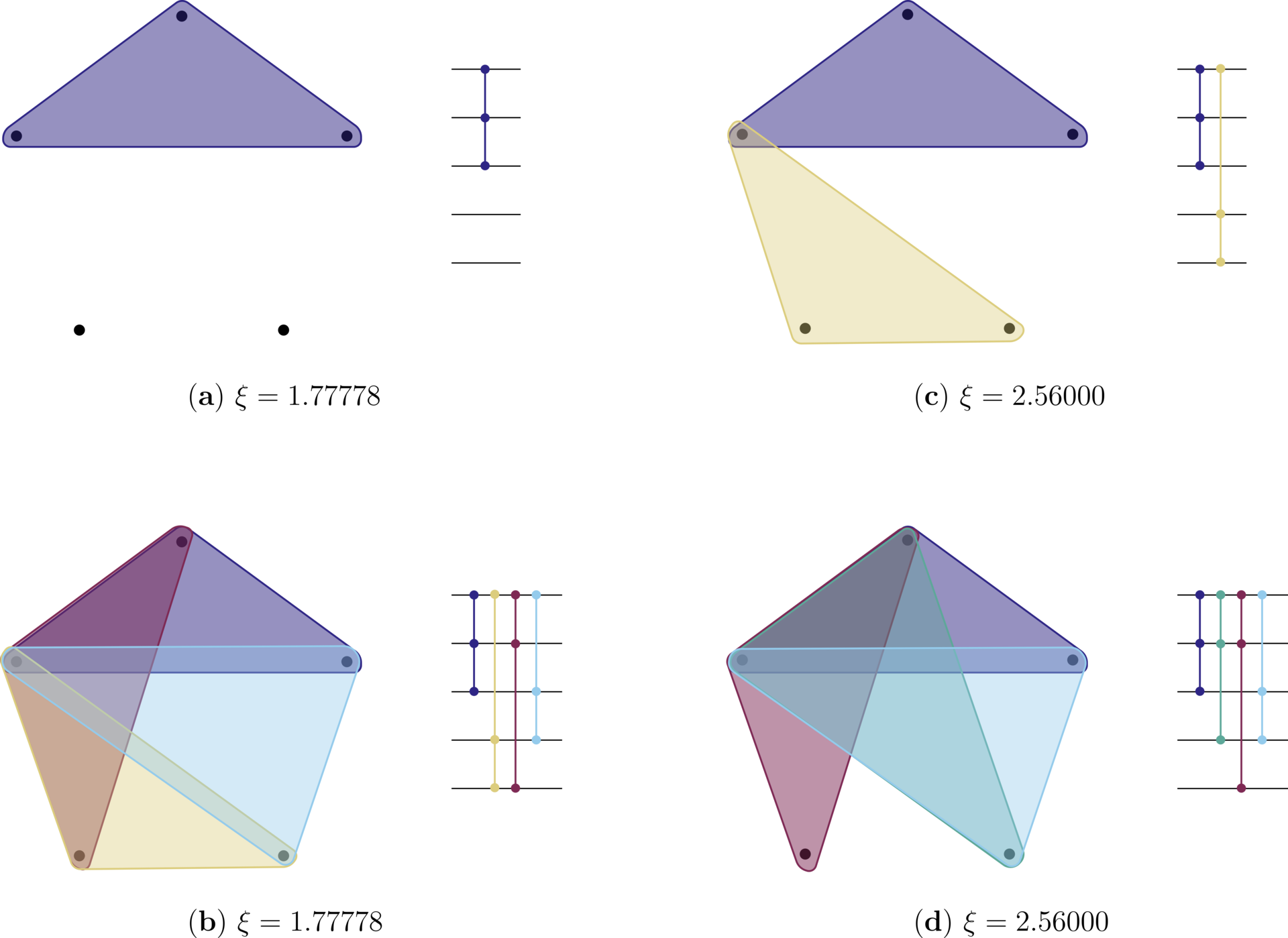}
    \caption{\textbf{Example of four nonequivalent 3-uniform hypergraphs on five vertices.} In this case, there are 33 inequivalent hypergraph states but only 2 distinct values for the stabilizer extent of the unitaries generating them. For each of these two values, we depict only two examples: [(a), (b)] $\xi = 1.77778$ and [(c), (d)] $\xi = 2.56000$. The state depicted in (b) will be important later when we discuss classical simulation of measurement-based quantum computations. Note that (b) and (d) have the same number of hyperedges, but different values of the stabilizer extent.}
    \label{fig: 3-uniform hypergraphs}
\end{figure}

\begin{table}[t]
    \centering
    \begin{tabular}{cccc}
        \hline\hline
        \parbox{16mm}{\# of hyperedges} & \parbox{28mm}{\# of nonequivalent hypergraphs} & \parbox{15mm}{$\xi=$ $1.77778$} & \parbox{15mm}{$\xi=$ $2.56000$}\\
         \hline
        $1$  & $1$ & $1$ & $0$ \\
        $2$  & $2$ & $1$ & $1$ \\
        $3$  & $4$ & $2$ & $2$ \\
        $4$  & $6$ & $2$ & $4$ \\
        $5$  & $6$ & $1$ & $5$ \\
        $6$  & $6$ & $1$ & $5$ \\
        $7$  & $4$ & $1$ & $3$ \\
        $8$  & $2$ & $1$ & $1$ \\
        $9$  & $1$ & $0$ & $1$ \\
        $10$ & $1$ & $0$ & $1$ \\
         \hline\hline
    \end{tabular}
    \caption{\textbf{Distribution of the 33 nonequivalent $3$-uniform hypergraph states of five qubits by number of hyperedges and value of the stabilizer extent of the generating unitary.} The first column fixes the number of hyperedges and the second indicates how many of the nonequivalent hypergraphs have that number of hyperedges. The last two columns indicate how the hypergraphs distribute themselves across the two different values of the stabilizer extent of the generating unitary.}
    \label{tab: extent value 3-uniform 5-qubit hypergraphs}
\end{table}

Here, we leverage our results to compute the stabilizer extent of the unitaries generating five-qubit hypergraph quantum states. Because the number of hypergraphs of five vertices is extremely large, we focus on the study of $k$-uniform hypergraphs, that is, hypergraphs whose hyperedges have fixed order $k$. Since when $k=2$, there is no magic, we start with $k=3.$ The number of nonequivalent $3$-uniform hypergraphs on five vertices is 33. Interestingly, the unitaries that generate the corresponding hypergraph quantum states separate themselves into only two distinct values of the stabilizer extent: $\xi = 1.77778$ and $\xi = 2.56000$. In Fig.~\ref{fig: 3-uniform hypergraphs}, we depict four examples of 3-uniform hypergraphs on five vertices, the circuit generating the corresponding hypergraph state from the state $\ket{+}^{\otimes 5}$, and the associated stabilizer extent. We also highlight how hypergraph states with the same number of hyperedges may exhibit different values of the stabilizer extent of the underlying unitary (cf.~Fig.~\ref{fig: 3-uniform hypergraphs}(b) and~\ref{fig: 3-uniform hypergraphs}(d) and Table~\ref{tab: extent value 3-uniform 5-qubit hypergraphs}).

Fig.~\ref{fig: 4-uniform hypergraphs} depicts all 5 nonequivalent 4-uniform hypergraphs on 5 qubits, as well as the circuit that generates the corresponding hypergraph quantum state from state $\ket{+}^{\otimes 5}$ and the associated value of the stabilizer extent. In this case, there are three distinct values for this quantity: $\xi = 2.25000$, $\xi=2.56000$, and $\xi=2.77778$.

\begin{figure*}
    \centering
    \includegraphics[width=0.8\linewidth]{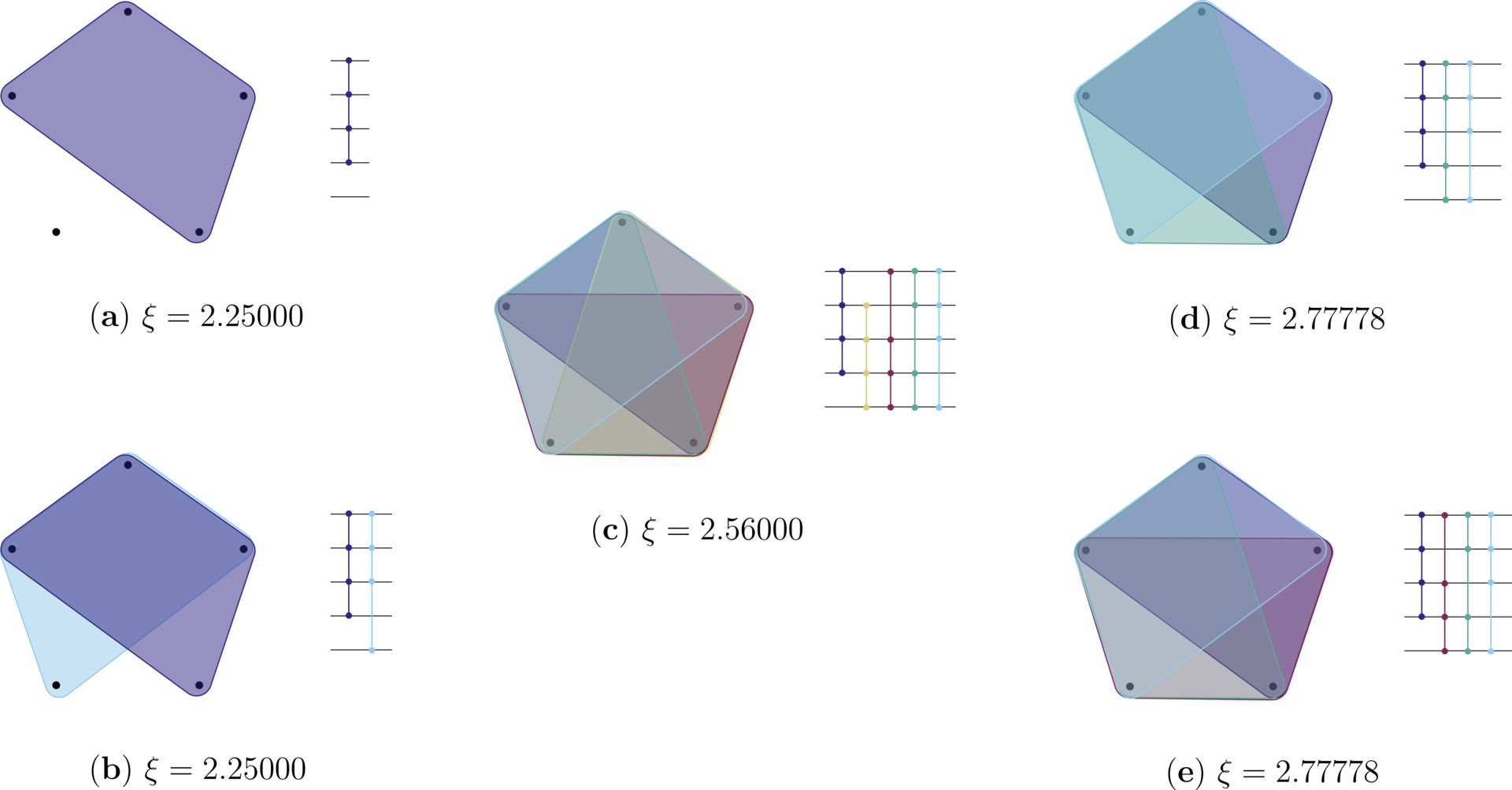}
    \caption{\textbf{All five nonequivalent 4-uniform hypergraphs on five vertices.} We see that there are only three distinct values for the stabilizer extent for the unitaries generating the corresponding hypergraph states: $\xi=2.25000$ when we have one (a) and two (b) $C^{(3)}Z$ gates, $\xi = 2.56000$ when all $C^{(3)}Z$ are present (c), and $\xi = 2.7778$ when there are three (d) and four (e) $C^{(3)}Z$ gates.}
    \label{fig: 4-uniform hypergraphs}
\end{figure*}

\subsubsection{Speeding up the classical simulation of (universal) measurement-based quantum computation}

The Union Jack lattice depicted in Fig.~\ref{fig: Union Jack lattice} is known to be universal for MBQC with Pauli measurements~\cite{Miller2016hierarchy}. Thus, any quantum computation can be written as a circuit on $n$ qubits initialized in the state $\ket{+}^{\otimes n}$, followed by the application of the magic and entangling unitary that generates the Union Jack state, and Pauli measurements at the end. Such a computation can be simulated using the sum-over-Cliffords approach. If we were to use the $CCZ$ decomposition used by Bravyi \emph{et al.}~\cite{Bravyi2019}, the simulation time would grow as $\mathcal{O}(2^{1.6601n})$. Alternatively, our result allows us to decompose each yellow cell in Fig.~\ref{fig: Union Jack lattice} directly, which leads to an associated simulation runtime of $\mathcal{O}(2^{0.4150n})$. This corresponds to an exponential speed up of $\mathcal{O}(2^{1.2451n})$ to the classical simulation of any measurement-based quantum computation written in terms of the Union Jack state with Pauli measurements.

\begin{figure}[b]
    \centering
    \includegraphics[width=0.5\linewidth]{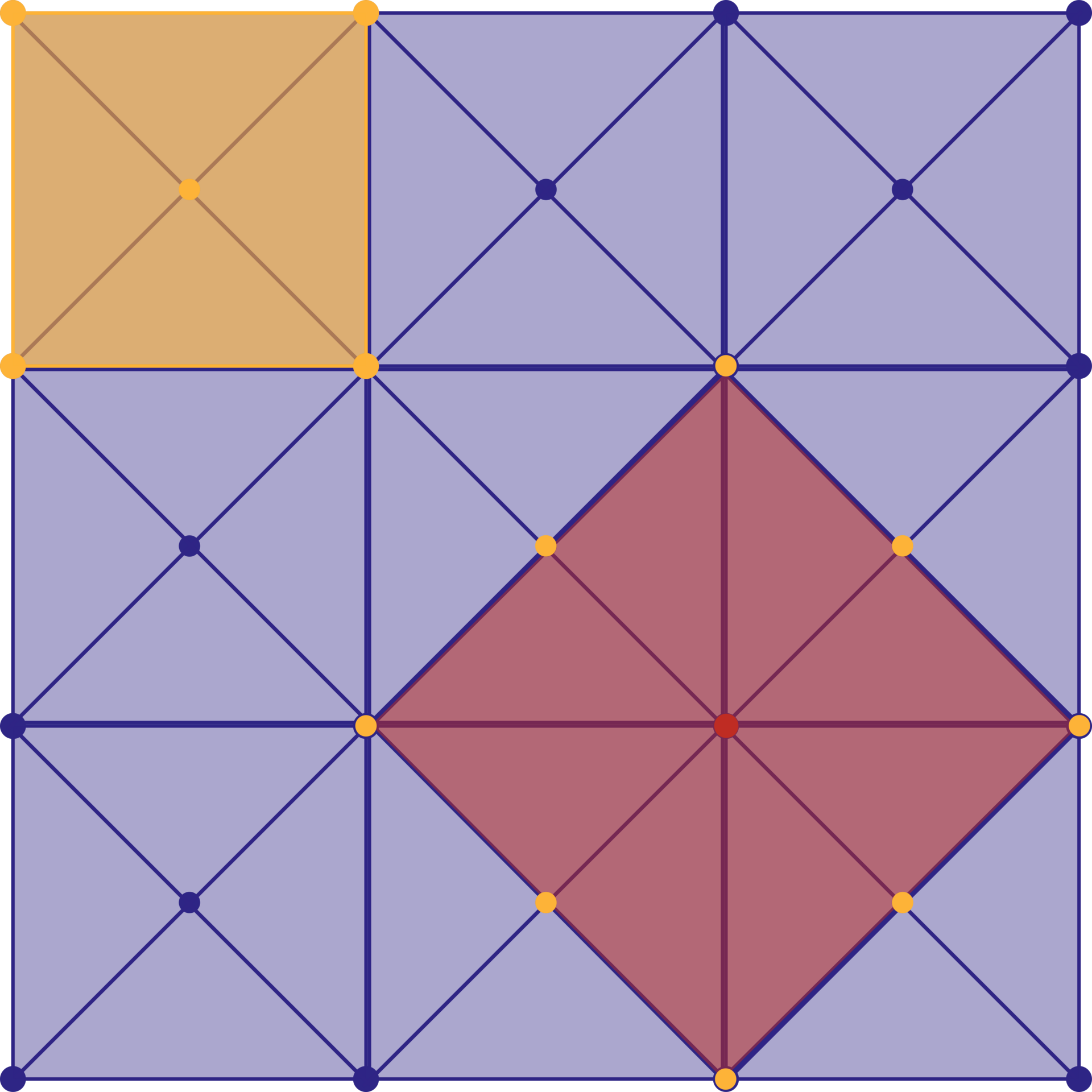}
    \caption{\textbf{Union Jack lattice.} The red shaded area represents the unit cell of the lattice. The yellow shaded square represents the 3-uniform 5-qubit hypergraph state depicted in Fig.~\ref{fig: 3-uniform hypergraphs}(b), which can also be used to generate the entire Union Jack lattice.}
    \label{fig: Union Jack lattice}
\end{figure}

From Ref.~\cite{LiuWinter2022manybody}, it follows immediately that $\xi (\ket{\Psi_{\text{UJ}}}) < 2^{0.46n}$, where $\ket{\Psi_{\text{UJ}}}$ denotes the Union Jack hypergraph quantum state. Hence, our results also provide a better upper bound for the stabilizer extent of the Union Jack state:
\begin{equation*}
    \xi ( \ket{\Psi_{\text{UJ}}} ) < \xi (U_{\text{UJ}}) = \mathcal{O}(2^{0.4150n})\,,
\end{equation*}
with $\ket{\Psi_{\text{UJ}}} = U_{\text{UJ}} \ket{+}^{\otimes n}$.

\subsection{Generalized hypergraph quantum states}\label{subsec: Generalized hypergraphs}

\begin{figure*}[ht]
    \centering
    \begin{tikzpicture} \node[scale=0.65] { \begin{quantikz}[thin lines] \lstick[wires = 5]{$\ket{+}^{\otimes 5}$} &\gate{T}\gategroup[wires=5,steps=1,style={dashed,rounded corners,fill=blue!20, inner xsep=2pt, inner ysep=2pt},background]{{$T^{\otimes 5}$}} &\ctrl{1}\gategroup[wires=5,steps=6,style={dashed,rounded corners,fill=red!20, inner xsep=2pt, inner ysep=2pt},background]{{$U^{E^{(2)}}_{\max}$}} &\ctrl{2} &\ctrl{3} &\ctrl{4} &\qw &\qw &\ctrl{1}\gategroup[wires=5,steps=2,style={dashed,rounded corners,fill=green!20, inner xsep=2pt, inner ysep=2pt},background]{{$U^{E^{(3)}}_{\max}$}} &\ctrl{3} &\ctrl{1}\gategroup[wires=5,steps=3,style={dashed,rounded corners,fill=purple!20, inner xsep=2pt, inner ysep=2pt},background]{{$U^{E^{(4)}}_{\max}$}} &\ctrl{1} &\ctrl{1} &\ctrl{1}\gategroup[wires=5,steps=1,style={dashed,rounded corners,fill=gray!20, inner xsep=2pt, inner ysep=2pt},background]{{$U^{E^{(5)}}_{\max}$}} &\qw\\ &\gate{T} &\gate{S} &\qw &\qw &\qw &\ctrl{1} &\ctrl{2} &\ctrl{1} &\qw &\ctrl{1} &\ctrl{1} &\ctrl{2} &\ctrl{1} &\qw\\ &\gate{T} &\ctrl{2} &\gate{S} &\qw &\qw &\gate{S} &\qw &\gate{P(0.6476\pi)} &\qw &\ctrl{1} &\ctrl{2} &\qw &\ctrl{1} &\qw\\ &\gate{T} &\qw &\qw &\gate{S} &\qw &\ctrl{1} &\gate{S} &\qw &\ctrl{1} &\gate{P(0.7048\pi)} &\qw &\ctrl{1} &\ctrl{1} &\qw\\ &\gate{T} &\gate{S} &\qw &\qw &\gate{S} &\gate{S} &\qw &\qw &\gate{P(0.6476\pi)} &\qw &\gate{P(0.7048\pi)} &\gate{P(0.7048\pi)} &\gate{P(0.7288\pi)} &\qw \end{quantikz} }; \end{tikzpicture}
    \caption{\textbf{A 5-qubit generalized hypergraph circuit.} For each integer $1 \leq k \leq 5$, we set the multicontrolled phase gates with maximal stabilizer extent, as given by Table~\ref{subtab: Maximal extent of multicontrolled phase gates}. We then construct unitary blocks $U_{\max}^{E^{(k)}}$ associated with $k$-uniform hypergraphs and intended to have large (ideally maximal) stabilizer extent values. The associated results are: $\xi(T^{\otimes5})=2.20723$, $\xi(U_{\text{max}}^{E^{(2)}})=3.76471$, $\xi(U_{\text{max}}^{E^{(3)}})=3.34751$, $\xi(U_{\text{max}}^{E^{(4)}})=3.18877$, and $\xi(U_{\text{max}}^{E^{(5)}})=2.70792$. We compose these blocks as depicted in the figure. The stabilizer extent of the whole circuit is $\xi=3.59638$.}
    \label{fig: generalized hypergraph circuit}
\end{figure*}
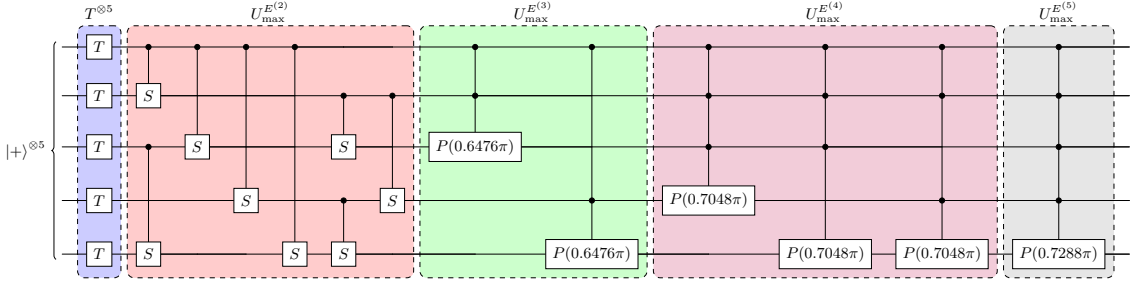

As the name suggests, generalized hypergraph quantum states are a generalization of hypergraph states where the entangling gates are allowed to be multicontrolled phase gates $C^{k-1}P(\theta_k)$ with arbitrary angles, instead of restricting them to be multicontrolled $Z$ gates ($\theta_k=\pi\,\forall k$). Moreover, the angles $\theta_k$ may differ for each order $k$. Formally, we define them as follows.
\begin{definition}[Generalized hypergraph quantum states]\label{def: generalized hypergraph state}
    Let $H=(V,E)$ be an undirected hypergraph, with $V$ its set of vertices and $E$ its set of hyperedges. For each integer $1\leq k \leq \left|V\right|$, set a $(k-1)$-controlled phase gate $C^{k-1}P(\theta_{k})$, with fix angle $\theta_{k}$. Then, the corresponding $n$-qubit \emph{generalized hypergraph quantum state} is
    \begin{equation*}
        \ket{\Psi_{\mathrm{GH}}} \coloneqq \prod_{k=1}^n \prod_{e\in E^{(k)}} C^{k-1} P_{e}(\theta_{k}) \ket{+}^{\otimes n}\,,
    \end{equation*}
    where $E^{(k)}\subset E$ denotes the set of hyperedges of order $k$. For $k=1$, $E^{(1)}=V$.
\end{definition}

Here, motivated by our previous results, we compute the stabilizer extent of the unitaries generating certain generalized hypergraph states. Specifically, we consider $5$-qubit states and, for each value of $1\leq k\leq 5$, we fix the multicontrolled phase gate to be the one with maximum value of the stabilizer extent (as seen in Table~\ref{subtab: Maximal extent of multicontrolled phase gates}). Then, for $k=1$, the unitary with maximum stabilizer extent is $T^{\otimes 5}$, with a value of $\xi(T^{\otimes5})=2.20723$. For $k=2$, we took all $2$-uniform nonequivalent graphs and computed the extent of the associated unitary obtained by associating with each edge the $CS$ gate (see Appendix~\ref{app: graph states 5 vertices} for the complete set of results). We then selected the unitary with the maximum stabilizer extent, which is denoted $U_{\max}^{E^{(2)}}$ and has a stabilizer extent of $\xi(U_{\text{max}}^{E^{(2)}})=3.76471$. For $k=3$ and $k=4$, we took the $3$- and $4$-uniform hypergraphs in Figs.~\ref{fig: 3-uniform hypergraphs} and \ref{fig: 4-uniform hypergraphs} that were associated with the largest value of the stabilizer extent when using multicontrolled Z gates, and replaced those gates by the $C^2P(0.6476\pi)$ and $C^3P(0.7048\pi)$ gates, respectively. We denote the resulting unitaries, respectively, $U^{E^{(3)}}_{\max}$ and $U^{E^{(4)}}_{\max}$ and their stabilizer extents are $\xi(U_{\text{max}}^{E^{(3)}})=3.34751$ and $\xi(U_{\text{max}}^{E^{(4)}})=3.18877$. As expected, these results are larger than the ones obtained with $C^2 Z$ and $C^3 Z$. Finally, for $k=5$, we straightforwardly chose $U_{\max}^{E^{(5)}} = C^4P(0.7288\pi)$ for which $\xi(U_{\text{max}}^{E^{(5)}})=2.70792$ (cf.~Table~\ref{subtab: Maximal extent of multicontrolled phase gates}).

The unitaries $U_{\max}^{E^{(k)}}$ can be used to construct a quantum circuit as depicted in Fig.~\ref{fig: generalized hypergraph circuit}. Given how it was constructed, one might be tempted to conjecture that such a circuit would have a very large stabilizer extent or even the largest possible stabilizer extent. However, that is emphatically not the case, as the circuit depicted has $\xi=3.59638 < \xi(U_{\text{max}}^{E^{(2)}})$. Going even further, we study how the stabilizer extent changes under different compositions of the unitary blocks $U_{\max}^{E^{(k)}}$. The results can be seen in Table~\ref{tab: composition_chunks_of_k_uniform_candidates}. We note that some of the combinations have a stabilizer extent lower than that of the whole circuit, while others have a larger value. Remarkably, none of the tested compositions achieves a stabilizer extent larger than that of $U_{\text{max}}^{E^{(2)}}$. As another example, we note that $T^{\otimes 5}$ and $U_{\text{max}}^{E^{(5)}}$ are two blocks which are permutation invariant. However, while removing the former causes a meaningful reduction in the value of the stabilizer extent (from 3.59638 to 3.08596), removing the latter makes the value of the stabilizer increase (to 3.60744).
\begin{table}[ht]
    \centering   
\begin{tabular}{c c}
        \hline\hline
        Composition & $\xi$ \\
         \hline
        $T^{\otimes5}\cdot U_{\text{max}}^{E^{(2)}}$ & $3.08388$\\
        $T^{\otimes5}\cdot U_{\text{max}}^{E^{(2)}}\cdot U_{\text{max}}^{E^{(3)}}$ & $3.37165$\\
        $T^{\otimes5}\cdot U_{\text{max}}^{E^{(2)}}\cdot U_{\text{max}}^{E^{(3)}}\cdot U_{\text{max}}^{E^{(4)}}$ & $3.60744$\\
        $T^{\otimes5}\cdot U_{\text{max}}^{E^{(2)}}\cdot U_{\text{max}}^{E^{(3)}}\cdot U_{\text{max}}^{E^{(4)}}\cdot U_{\text{max}}^{E^{(5)}}$ & $3.59638$\\
        $U_{\text{max}}^{E^{(2)}}\cdot U_{\text{max}}^{E^{(3)}}$ & $3.23195$\\
        $U_{\text{max}}^{E^{(2)}}\cdot U_{\text{max}}^{E^{(3)}}\cdot U_{\text{max}}^{E^{(4)}}$ & $3.15329$\\
        $U_{\text{max}}^{E^{(2)}}\cdot U_{\text{max}}^{E^{(3)}}\cdot U_{\text{max}}^{E^{(4)}}\cdot U_{\text{max}}^{E^{(5)}}$ & $3.08596$\\
         \hline\hline
    \end{tabular}
    \caption{\textbf{Stabilizer extent of 5-qubit unitaries obtain by composing unitaries $U_{\max}^{E^{(k)}}$.} The stabilizer extent of each individual unitary can be found in the main text.}
    \label{tab: composition_chunks_of_k_uniform_candidates}
\end{table}

Another interesting point to consider is that the composition $T^{\otimes5}\cdot U_{\text{max}}^{E^{(2)}}$ is an instance of the so-called \textit{instantaneous quantum polynomial} (IQP) circuits capable of demonstrating quantum advantage. As shown in Ref.~\cite{bremner2016average}, efficient classical simulation (in the weak sense) is impossible for a class of IQP circuits composed only of $\{T,CS\}$ or $\{Z,CZ,CCZ\}$ (see also Ref.~\cite{Ni:2013tqo}). This is interesting, as the aforementioned composition is not the one with the maximum value of the stabilizer extent. In fact, of all the compositions presented in Table~\ref{tab: composition_chunks_of_k_uniform_candidates}, this is the one with the smallest stabilizer extent, making it the most easily simulable by the sum-over-Cliffords approach. These observations seem to connect to the work by Liu and Winter~\cite{LiuWinter2022manybody}, who pointed out that more magic does not necessarily equal more computational power. 

\section{Discussion and outlook}\label{sec: Conclusions}

In this paper, we exploit symmetry reductions to accelerate state-of-the-art classical algorithms for simulating Clifford-dominated circuits. Our approach yields more efficient methods for decomposing multiqubit, non-Clifford gates that are invariant under certain symmetry groups. This allowed us to find (previously out-of-reach) optimal expansions of several multiqubit quantum gates as sums of Clifford unitaries. These were then leveraged to demonstrate drastic improvements to the efficiency of sum-over-Clifford simulation for many cases of practical interest. In particular, we showed that our method leads to remarkable speed-ups in stabilizer-extent-based classical algorithms for simulating quantum Fourier transforms, instantaneous quantum polynomial circuits, and measurement-based quantum computations based on hypergraph states.

Beyond this, our method should be of further general interest in quantum computing problems where sequences of symmetric quantum gates appear naturally because of, e.g., operational constraints or existing problem structures. For instance, sequences of diagonal and real multiqubit gates have been proposed in applications such as quantum compiling and quantum approximate optimization~\cite{vandaele2025quantumbinaryfieldmultiplication,liuApplicationsCCZSGate2025}. Commuting (diagonal) gates are also native in quantum platforms such as ion traps and Rydberg atoms, and complex-sequences thereof are employed in quantum simulation of long-ranged quantum Ising models~\cite{Schauss_2018,PRXQuantum.4.010302,baezDynamicalStructureFactors2020,zhangObservationManybodyDynamical2017,islamOnsetQuantumPhase2011,bohnetQuantumSpinDynamics2016,bernienProbingManybodyDynamics2017}. Real and symmetric gates further appear commonly in fault tolerant quantum architectures, as transversal or fault-tolerant gates, e.g., while doing error correction with surface codes \cite{Fowler12SurfaceCodesPracticalQC,horsmanSurfaceCodeQuantum2012,Browne2007}.

Our work motivates further research into the classification of the mathematical symmetry groups that admit what we called \emph{strong symmetry reduction} (Theorem~\ref{theorem: Main result -- diagonals and reals}), which takes us to a new optimization search space of reduced size where further \emph{weak symmetry reductions} are feasible to implement (Lemma~\ref{lemma: weak reduction}). In particular, we leave the following open question: What is the underlying structure or physical property behind strong symmetry reduction? 
In quantum computation, group-theoretic methods, such as representation theory or group cohomology, have already been successful in the development of stabilizer-based classical simulation methods~\cite{BermejoVega_12_GKTheorem,bermejo-vegaNormalizerCircuitsGottesmanKnill2016,DBLP:journals/corr/Bermejo-Vega16}, the identification of computational phases of matter~\cite{elseSymmetryProtectedPhasesMeasurementBased2012,raussendorfComputationallyUniversalPhase2019,stephenSubsystemSymmetriesQuantum2019}, and in the study of quantum resources such as quantum contextuality~\cite{abramskySheaftheoreticStructureNonlocality2011,abramskyContextualityCohomologyParadox2015,raussendorfRoleCohomologyQuantum2023}. Developing further methods to understand symmetry in quantum circuits could yield new insights into the nature of quantum speed-ups, as well as better algorithms for classical simulation and hybrid quantum-classical computation~\cite{peresQuantumCircuitCompilation2023}.

A potential avenue for improving our methods would be studying the effect of approximation errors in the sum-over-Cliffords expansion. In situations where, even after leveraging strong symmetry reduction, the dimensionality of the convex problem remains unmanageable, we could try to approximate the expansion of the target gate using a smaller set of Clifford operators, trading computational cost for an error. Along a similar line, developing a heuristic algorithm with convergence guarantees reminiscent of the techniques proposed in~\cite{Hamaguchi2025} should also lead to relevant enhancements of our results. Lastly, as mentioned in Sec.~\ref{subsec: Statement of problem and main theoretical results}, weak symmetry reduction is applicable to any symmetry group with linear isometries. However, the cost of implementing this type of filtering has only been investigated in a reduced number of cases. Further improvements of our method can be achieved by designing better algorithms for performing the symmetry projection of a Clifford unitary.

A salient advantage of our symmetry reduction methods is their enhanced ability to find Clifford decompositions of sequences of quantum gates invariant under certain symmetry groups. This includes examples of important gates such as circuits of $T$, $CS$, and $CCZ$ gates. We observe that directly decomposing such sequences has a remarkably positive impact on the cost of classical simulation (e.g.~cf.~Fig.~\ref{fig: QFT submultiplicativity}). Underlying this advantage is the strict submultiplicativity of the stabilizer extent for these sequences -- a fundamental feature of this monotone, which has not been analyzed in detail earlier. This signals out the stabilizer extent as a potential quantifier to study the non-classical aspects of deep universal quantum circuits. Can the stabilizer extent capture quantum features that power quantum computation? How does this monotone relate to other notions of non-classicality that have been proposed for sequences of quantum transformations, such as, e.g., sequential contextuality~\cite{mansfieldQuantumAdvantageSequentialTransformation2018,emeriauQuantumAdvantageInformation2022}, transformation contextuality \cite{spekkensContextualityPreparationsTransformations2005,schmidContextualAdvantageState2018}, or the memory cost of quantum measurement sequences \cite{budroniMemoryCostTemporal2019,budroniContextualityMemoryCost2019}? An answer to these questions would lead to a deeper understanding of the foundations of quantum computational advantage.

\section*{Acknowledgements}

We thank David Gross, Hammam Qassim,  D. Luis Manuel Máñez Espina, Rhea Alexander-Turner, and José M. Martín for profitable research discussions. We acknowledge support from Ayuda Consolidación CNS2023-145392 (MICIU\slash AEI\slash 10.13039\slash 501100011033, NextGenerationEU\slash PRTR); Ramón y Cajal RYC2022-036209-I (MICIU\slash AEI\slash 10.13039\slash 501100011033, ESF+); EU HORIZON RIA FoQaCiA GA 101070558; Project Generation of Knowledge
PID 2024-162155OB-I00 (MICIU\slash AEI\slash 10.13039\slash 501100011033, ERDF\slash EU); Project FEDER C-EXP-256-UGR23 (Consejería de Universidad, Investigación e Innovación y UE Programa
FEDER Andalucía 2021-2027).
M.H.~acknowledges funding by the Deutsche Forschungsgemeinschaft (DFG, German Research Foundation) under Germany’s Excellence Strategy - Cluster of Excellence Matter and Light for Quantum Computing (ML4Q) EXC 2004/1 - 390534769.

\newpage
\clearpage
\appendix
\onecolumngrid

\section{Properties of the stabilizer extent for matrices}\label{app: Properties extent}

In this Appendix, we prove the properties of the stabilizer extent for matrices as stated in Proposition~\ref{prop: Properties extent} of the main text.

\begin{proof}[Proof of Proposition~\ref{prop: Properties extent}] We will prove the properties of the stabilizer extent for matrices one by one.
    
    (i) Let $A = \sum_{C\in \Cl{n}} x_C C$ be a unitary operator $A\in \mathrm{U}(2^n)$ so that $A A^{\dagger} = A^{\dagger} A = \one\,.$ This implies that
    \begin{equation*}
        \left| \tr \left[ A^{\dagger}A \right] \right| = 2^n \iff 2^n = \left| \tr \left[ \sum_{C,C^{\prime} \in \Cl{n}} x_{C^{\prime}} x_{C}^{*} C^{\dagger} C^{\prime} \right] \right| \,.
    \end{equation*}
    Analyzing the right-hand side, we note that:
    \begin{align*}
       \left| \tr \left[ \sum_{C,C^{\prime} \in \Cl{n}} x_{C^{\prime}} x_{C}^{*} C^{\dagger} C^{\prime} \right] \right| = \left|
       \sum_{C,C^{\prime} \in \Cl{n}} x_{C^{\prime}} x_{C}^{*} \tr \left[ C^{\dagger} C^{\prime} \right] \right|
         \leq 2^n \sum_{C,C^{\prime} \in \Cl{n}} \left| x_{C^{\prime}} \right| \left| x_{C}^{*}  \right|
         = 2^n \lVert \mathbf{x} \rVert_1^2\,,
    \end{align*}
    which implies that $\lVert \mathbf{x} \rVert_1^2 \geq 1\,.$ This means that any decomposition of a unitary operator must have an $\ell_1$-norm squared greater than or equal to one, thus, $\xi (A) \geq 1 \quad \forall A\in \mathrm{U}(2^n)\,.$ When $A$ is not only a unitary but a Clifford unitary, the trivial decomposition of $A$ (written as itself) achieves the minimum possible value of 1, thus $\xi(A) = 1 \quad \forall A\in \Cl{n}\,.$ From the arguments above, it is not hard to see that for any non-Clifford unitary $A$, $\xi (A) > 1$. The extent as defined in Def.~\ref{def: Stabilizer extent for unitaries} can only be smaller than 1 for non-unitary matrices.
    
    (ii) Let $A$ be a matrix with an optimal decomposition $A = \sum_{C\in \Cl{n}} x_C C$ meaning that $\xi (A) = \lVert \mathbf{x} \rVert_1^2\,.$ Suppose that $A$ is left-multiplied by a Clifford unitary $K\in \Cl{n}$ so that
    \begin{equation*}
        B = KA = \sum_{C\in \Cl{n}} x_C KC = \sum_{C^{\prime} \in \Cl{n}} x_{K^{\dagger}C^{\prime}} C^{\prime} \,.
    \end{equation*}
    By the definition of the stabilizer extent, the decomposition obtained above for $B$ must obey $\xi (B) \leq \lVert \mathbf{x} \rVert_1^2 = \xi (A)$\,. Next, suppose that the optimal decomposition of $B$ is $B = \sum_{C\in \Cl{n}} y_C C$ so that, by definition, $\xi (B) = \lVert \mathbf{y} \rVert_1^2\,.$ Since $B = K A$, then $A = K^{\dagger} B$ which implies that:
    \begin{equation*}
        A = \sum_{C\in \Cl{n}} y_C K^{\dagger} C = \sum_{C^{\prime}\in \Cl{n}} y_{KC^{\prime}} C^{\prime}\,.
    \end{equation*}
    Again, by the definition of the stabilizer extent, this decomposition implies that $\xi (A) \leq \lVert \mathbf{y} \rVert_1^2 = \xi (B)\,.$ Thus, we obtain that the stabilizer extent of $A$ and the stabilizer extent of $B=KA$ are the same. Similar calculations follow for right multiplication. Thus, the stabilizer extent for matrices is preserved under left and right multiplication by Clifford unitaries. 

    To prove monotonicity under stabilizer code projectors, consider an abelian subgroup of the Pauli group $\mathcal{G} \subset \mathcal{P}_n$ such that $\left|\mathcal{G} \right| = 2^r$. A projector onto the codespace defined by $\mathcal{G}$ has the form $\Pi_{\mathcal{G}} = 2^{-r} \sum_{P\in \mathcal{G}} P$. Then, we write
    \begin{equation*}
        \xi (\Pi_{\mathcal{G}} A) = \xi \left(2^{-r} \sum_{P\in \mathcal{G}} P A \right) 
        \leq 2^{-r} \sum_{P\in \mathcal{G}} \xi \left( P A \right)
        = 2^{-r} \sum_{P\in \mathcal{G}} \xi \left( A \right) = \xi (A)\,,
    \end{equation*}
    where we leveraged convexity (see below) and invariance under Cliffords. A similar reasoning allows to prove that $\xi ( A \Pi_{\mathcal{G}}) \leq \xi (A)$.

    (iii) To prove the submultiplicativity of the stabilizer extent under matrix multiplication consider two matrices $A,B\in \mathbb{C}^{2^n \times 2^n}$ with optimal decompositions $A = \sum_{C\in \Cl{n}} x_C C$ and $B = \sum_{C^{\prime}\in \Cl{n}} y_{C^{\prime}} C^{\prime}$ so that $\xi (A) = \lVert \mathbf{x} \rVert_1^2$ and $\xi (B) = \lVert \mathbf{y} \rVert_1^2$. Next, we take the product of these two matrices
    \begin{align*}
        AB = \sum_{C,C^{\prime} \in \Cl{n}} x_{C} y_{C^{\prime}} C C^{\prime} = \sum_{K \in \Cl{n}} \left( \sum_{C \in \Cl{n}} x_{C} y_{C^{\dagger}K} \right) K 
         = \sum_{K \in \Cl{n}} z_K K\,.
    \end{align*}
    By definition $\xi (AB) \leq \lVert \mathbf{z} \rVert_1^2\,$ and we can upperbound $\lVert \mathbf{z} \rVert_1$ as follows:
    \begin{align*}
        \lVert \mathbf{z} \rVert_1 = \sum_{K\in \Cl{n} } \left| \left( \sum_{C \in \Cl{n}} x_{C} y_{C^{\dagger}K} \right) \right| 
        \leq \sum_{K,C \in \Cl{n}} \left| x_{C} y_{C^{\dagger}K} \right| = \lVert \mathbf{x} \rVert_1 \lVert \mathbf{y} \rVert_1\,.
    \end{align*}
    Hence, $\xi (AB) \leq \xi(A)\xi(B)\,.$

    Submultiplicativity with respect to tensor products is fairly straightforward to show. As before, let us consider two matrices $A\in \mathbb{C}^{2^n \times 2^n}$, $B\in \mathbb{C}^{2^{m} \times 2^{m}}$ with optimal decompositions $A = \sum_{C\in \Cl{n}} x_C C$ and $B = \sum_{C^{\prime}\in \Cl{n^{\prime}}} y_{C^{\prime}} C^{\prime}$ so that $\xi (A) = \lVert \mathbf{x} \rVert_1^2$ and $\xi (B) = \lVert \mathbf{y} \rVert_1^2$. Considering their tensor product, we note that:
    \begin{equation*}
        A\otimes B = \sum_{C \in \Cl{n}} \sum_{C^{\prime} \in \Cl{n^{\prime}}} x_{C} y_{C^{\prime}} ( C \otimes C^{\prime})\,.
    \end{equation*}
    By definition $\xi (A\otimes B) \leq \left( \sum_{C \in \Cl{n}} \sum_{C^{\prime} \in \Cl{n^{\prime}}} \left| x_C y_{C^{\prime}} \right| \right)^2 = \xi(A)\xi(B)\,.$ 
    
    Showing that $\xi (A\otimes B) = \xi(B\otimes A)$ uses property (ii). Namely, there is always a suitable swapping operation, which we denote $\text{SWAP} \in \Cl{n+m}$, so that $\text{SWAP} (A\otimes B) \text{SWAP} = (B\otimes A)$. Hence, by property (ii), $\xi (A\otimes B) = \xi(\text{SWAP} (A\otimes B) \text{SWAP}) = \xi (B\otimes A)\,.$
    
    (iv) To demonstrate convexity of the stabilizer extent for matrices, consider again two matrices $A,B\in \mathbb{C}^{2^n \times 2^n}$ with optimal decompositions $A = \sum_{C\in \Cl{n}} x_C C$ and $B = \sum_{C^{\prime}\in \Cl{n}} y_{C^{\prime}} C^{\prime}$. Then, for any $t\in[0,1]$, $t A + (1-t)B = \sum_{C\in \Cl{n}} (t x_C + (1-t) y_C) C$ which implies that $\xi (tA+(1-t)B) \leq \lVert t \mathbf{x} + (1-t) \mathbf{y} \rVert_1^2 \leq t \xi(A) + (1-t) \xi(B)$ by the convexity of the norm and the square function.
    Homogeneity follows straightforwardly from Definition~\ref{def: Stabilizer extent for unitaries} and the homogeneity of the $\ell_1$-norm.

This concludes the proof of Proposition~\ref{prop: Properties extent}.
\end{proof}

We call the reader's attention to the proof of property~(i): If we were to restrict the definition of the stabilizer extent to unitaries, this would be a faithful measure in the sense that $\xi(A) = 1$ iff $A\in \Cl{n}$ and $\xi(A) >1$ otherwise. As it stands, Definition~\ref{def: Stabilizer extent for unitaries} applies to any arbitrary matrix and the stabilizer extent can therefore be smaller than 1; this is seen explicitly in Lemma~\ref{lemma: Characterization} and its proof in Appendix~\ref{app: Proof of characterization}.

\section{Lower bounds on gate synthesis}\label{app: Synthesis}

Here, we prove Proposition~\ref{prop: gate synthesis}, which explicitly states how the stabilizer extent can be used to lower bound the cost of exactly synthesizing certain unitaries.
\begin{proof}[Proof of Proposition~\ref{prop: gate synthesis}]
    By assumption, $s\in\mathbb{N}$ is the largest positive integer fulfilling the lower bound $\xi(T)^{s-1}<\xi(U)$, which implies  the upper bound $\xi(U)\leq\xi(T)^s$. Suppose wlog that the unitary gate of interest can be written as a quantum circuit of form $U = C_m L_m C_{m-1} L_{m-1} \cdots C_1 L_1 C_0$, where $\{C_j\}$ are Clifford unitaries and $\{L_j\}$ are non-Clifford layers containing $T$ gates acting on distinct qubits. Let $t$ be the total number of $T$ gates in this circuit. Then, the submultiplicativity of the extent implies $\xi(U)\leq \xi(T)^{t}$. Putting this together with our lower bound, it follows that $t>s-1$. This holds for any quantum circuit of Clifford+$T$ gates that can exactly synthesize $U$. Hence, $U$ cannot possibly be synthesized with  $s-1$ (or fewer) $T$ gates.  
  
    Let us consider that the optimal gate synthesis of $U$ in terms of the Clifford+$T$ gate set requires exactly $t_{\min}$ $T$ gates. Then, assigning these gates into the nonstabilizer layers and using Proposition~\ref{prop: Properties extent} leads to $\xi (U) = \xi (C_m L_m C_{m-1} L_{m-1} \cdots C_1 L_1 C_0) \leq \xi(T)^{t_{\min}}\,.$ and, necessarily, $t_{\min}\geq s$. Thus, the number of $T$ gates needed to synthesize $U$ is at least $s$.
\end{proof}

It is well-known that magic monotones can be used to lower bound the cost of (exact) synthesis of certain unitaries~\cite{Howard2017, Beverland2020lower}. For instance, a similar result was given in Ref.~\cite{Howard2017} using the robustness of magic, $\mathcal{R}$. In particular, let $\ket{U} = U\ket{+}^{\otimes n}$; if $\mathcal{R}(\ket{T}^{\otimes (s-1)}) < \mathcal{R}(\ket{U}) < \mathcal{R}(\ket{T}^{\otimes s})$, then a minimum number of $s$ $T$ gates are needed to (exactly) synthesize $U$. However, this result is limited by the submultiplicativity of the robustness of magic, which makes it hard to determine $\mathcal{R}(\ket{T}^{\otimes s})$ for values of $s$ beyond 10. Using the stabilizer extent has no such limitation, since $\xi (T^{\otimes s}) = \xi (T)^s$ (recall Sec.~\ref{subsec: Background - stabilizer extent}) and, therefore, the upper and lower bounds depend uniquely on $\xi (T) = \cos^{-2} (\pi/8) \approx 1.1716$.

The result in Proposition~\ref{prop: gate synthesis} can also be extended to other gate sets of the form  Clifford+$M$ provided that $M$ is a non-Clifford gate on at most three qubits which allows for a magic state $\ket{M}$ such that $\xi(M)=\xi(\ket{M})$ (recall discussion in Sec.~\ref{subsec: Background - stabilizer extent}).

\section{Proof of Lemma~\ref{lemma: Characterization}}\label{app: Proof of characterization}

Lemma~\ref{lemma: Characterization} is essential to the proof of our main result. 
In this appendix, we give a proof thereof, separated into three lemmas covering the three different subgroups.
The arguments use that stabilizer states have the following well-known form in the computational basis \cite{Gross2008nzg}:
\begin{equation}
\label{eq:stabilizer state in computational basis}
    \ket\psi = \frac{1}{\sqrt{|A|}}\sum_{\vec{x}\in A} i^{\vec{a}\cdot\vec{x}} (-1)^{q(\vec x) + \vec b \cdot \vec x} \ket{\vec x} \,,
\end{equation}
where $A\subset \Z_2^{n}$ is a suitable affine subspace, $\vec a\in \Z_2^{n}$, and $q$ is a quadratic function.
We note that the exponent of $i$ is meant to be evaluated modulo two.
Every stabilizer state has the form \eqref{eq:stabilizer state in computational basis}, and vice versa, any state of the form \eqref{eq:stabilizer state in computational basis} is a stabilizer state.

The matrix entries of a Clifford unitary can be deduced from Eq.~\eqref{eq:stabilizer state in computational basis} by referring to its Choi state $C\otimes\one \ket{\phi^+}$.
Since $\ket{\phi^+}:=2^{-n/2}\sum_{\vec{x}\in\Z_2^n}\ket{\vec x \vec x}$ is a stabilizer state, so is $C\otimes\one \ket{\phi^+}$, and we can write
\begin{equation*}
    C\otimes\one \ket{\phi^+}
    =
    \frac{1}{\sqrt{|W|}}\sum_{\vec{w}\in W} i^{\vec{a}\cdot\vec{w}} (-1)^{q(\vec w) + \vec b \cdot \vec w} \ket{\vec w} \,,
\end{equation*}
where $W\subset \Z_2^{2n}$ is a suitable affine subspace of dimension $n+s$ with $0\leq s\leq n$ \cite{Dehaene2003} ($s$ is the dimension of the support of every $C\ket{\vec x}$).
Writing $\vec w = (\vec x, \vec y)$, we then find the matrix entries of $C$ to be
\begin{equation}
\label{eq:Clifford matrix entries}
\bra{\vec x}C\ket{\vec y} 
= 
2^{n/2} \bra{\vec x \vec y}C\otimes\one\ket{\phi^+}
=
\frac{1}{2^{s/2}} i^{\vec{a}\cdot(\vec x, \vec y)} (-1)^{q(\vec x, \vec y) + \vec b \cdot (\vec x, \vec y)} \ind_W(\vec x, \vec y) \,.
\end{equation}
Here $\ind_W(w)$ is the indicator function on $W$ which is one if $w\in W$ and zero else.
Note that not every $2n$-qubit stabilizer state corresponds to a Clifford unitary; only the maximally entangled ones do.
By explicitly taking partial traces, it is straightforward to verify that this restricts the affine subspaces $W \in \Z_2^{2n}$ and the quadratic forms $q$, but not the linear part given by $\vec a$ and $\vec b$.
In particular, the matrix $C'$ obtained by setting $\vec a = 0$ is another valid Clifford unitary.

Finally, we note that the form of stabilizer states, and also Clifford unitaries, in the computational basis is only defined up to a global phase.
In the standard definition of the Clifford group as generated by Hadamard, CNOT, and phase gate, its center is $\Z_8$, and thus for any $C\in\Cl{n}$, we also have $\omega_8 C \in \Cl{n}$ for any eighth root of unity $\omega_8$.
However, for the stabilizer extent, this global phase does not matter: we can simply group the coefficients associated with the same Clifford unitary up to a phase.
Concretely, let $\overline{\Cl{n}}$ be the subset given by the phase convention \eqref{eq:Clifford matrix entries} and then write $U = \sum_{C\in\overline{\Cl{n}}} (\sum_{k=0}^7 x_C^{(k)} \omega_8^k) C $.
This procedure can only decrease the $\ell_1$-norm and thus any optimal decomposition can be chosen to only contain Cliffords from $\overline{\Cl{n}}$.

\begin{lemma}\label{lemma: characterization diagonal Cliffords}
    For all Clifford unitaries $C\in\Cl{n}$, we have $\xi_{\ZP{n}} \left( \Pi_{\ZP{n}} (C) \right)\leq1$.
\end{lemma}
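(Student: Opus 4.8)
The plan is to compute $\Pi_{\ZP{n}}(C)$ explicitly and recognize it as a scalar multiple of a product of one diagonal Clifford and one diagonal stabilizer projector, each of which admits a cheap expansion over $\Di{n}$. Since $\Pi_{\ZP{n}}$ projects $C$ onto its diagonal part, I would set $\vec y=\vec x$ in the matrix-entry formula~\eqref{eq:Clifford matrix entries}, obtaining $\bra{\vec x}\Pi_{\ZP{n}}(C)\ket{\vec x} = 2^{-s/2}\, i^{R(\vec x)}\,\ind_{S_0}(\vec x)$, where $S_0 := \{\vec x : (\vec x,\vec x)\in W\}$ and $R(\vec x)$ collects the linear and quadratic phase contributions of~\eqref{eq:Clifford matrix entries} evaluated on the diagonal. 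The argument then rests on two structural facts: (a) $S_0$ is an affine subspace of $\Z_2^n$, being the intersection of the affine space $W$ with the diagonal subspace $\{(\vec x,\vec x)\}$ (hence affine, or empty); and (b) $R:\Z_2^n\to\Z_4$ is a $\Z_4$-valued quadratic form.

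For fact (b) the only genuine subtlety is that the $i$-exponent in~\eqref{eq:Clifford matrix entries} is read modulo two, so its diagonal restriction is a $\Z_2$-linear function $\ell(\vec x)$. I would use the elementary identity that expresses a parity, viewed as a $\{0,1\}$-valued function, as a degree-$\le 2$ polynomial modulo four, e.g.\ $\bigoplus_j x_j \equiv \sum_j x_j + 2\sum_{j<k} x_j x_k \pmod 4$ (with $x_j$ the components of $\vec x$), so that $\ell$ is itself a $\Z_4$ quadratic form; the remaining $(-1)^{(\cdot)}$ factor contributes twice a $\Z_2$ quadratic form, again of degree $\le 2$ with even coefficients. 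Hence $R$ is a $\Z_4$ quadratic form. Since the diagonal Cliffords $\Di{n}$ (generated by $S$ and $CZ$) are, up to global phase, precisely the diagonal unitaries whose phases $i^{Q(\vec x)}$ range over all $\Z_4$ quadratic forms $Q$, there is a $D_R\in\Di{n}$ with $D_R\ket{\vec x}=i^{R(\vec x)}\ket{\vec x}$.

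Next I would assemble the decomposition. Writing $S_0=\{\vec x : \vec h_i\cdot\vec x = c_i,\ i=1,\dots,n-d\}$ with $d:=\dim S_0$ and independent functionals $\vec h_i$, the support projector factorizes as $\Pi_{S_0}=\prod_i \tfrac12\bigl(\one+(-1)^{c_i}Z(\vec h_i)\bigr)$; expanding the product gives $\Pi_{S_0}=2^{-(n-d)}\sum_{g}\pm Z(g)$ over the $2^{n-d}$ elements of the group $\langle \vec h_i\rangle$, an expansion into diagonal Paulis (all in $\Di{n}$) of unit $\ell_1$-norm. Comparing diagonal entries shows $\Pi_{\ZP{n}}(C)=2^{-s/2}\,D_R\,\Pi_{S_0}$, so distributing $D_R$ yields $\Pi_{\ZP{n}}(C)=2^{-s/2}\sum_g \bigl(\pm 2^{-(n-d)}\bigr)\,D_R Z(g)$ with each $D_R Z(g)\in\Di{n}$. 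This is a valid expansion over $\Di{n}$ of $\ell_1$-norm $2^{-s/2}$, whence $\xi_{\ZP{n}}(\Pi_{\ZP{n}}(C))\le 2^{-s}\le 1$; the case $S_0=\emptyset$ gives $\Pi_{\ZP{n}}(C)=0$ and is trivial.

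I expect the main obstacle to be the clean verification of fact (b) — establishing that the diagonal phase function is genuinely a $\Z_4$ quadratic form despite the mod-two convention on the $i$-exponent, together with the matching identification of $\Di{n}$ with the group of such phase functions. The remaining ingredients (affineness of $S_0$, the stabilizer-projector expansion, and the $\ell_1$-norm bound) are routine. Finally, by homogeneity (Proposition~\ref{prop: Properties extent}(iv)) this $x=1$ computation immediately upgrades to the general $\ZP{n}$ case $\xi_{\ZP{n}}(\Pi_{\ZP{n}}(xC))\le|x|^2$ required by Lemma~\ref{lemma: Characterization}.
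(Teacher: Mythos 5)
Your proposal is correct and follows essentially the same route as the paper: both write $\Pi_{\ZP{n}}(C)$ as $2^{-s/2}$ times the product of a full-support diagonal Clifford extension of the phase function and the stabilizer projector onto the affine support, expand that projector as a uniform signed sum of diagonal Paulis, and read off the $\ell_1$-norm $2^{-s/2}\le 1$ (the paper phrases the projector as a group average $\frac{1}{|V^\perp|}\sum_{\vec z\in V^\perp}(-1)^{\vec t\cdot\vec z}Z(\vec z)$ annihilating the extension, which is the same operator as your $\Pi_{S_0}$). The only cosmetic difference is that you justify the Clifford extension self-containedly via the $\Z_4$ quadratic-form characterization and the parity identity, whereas the paper cites the known fact that a full-support diagonal of stabilizer-state form is Clifford.
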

Note that $\Pi_{\ZP{n}}$ is $\C$-linear, hence we do not need to explicitly consider a complex factor $x\in\C$.
\begin{proof} 
    For $C$ diagonal Clifford, the result is trivially true, as for $\Pi_{\ZP{n}} (C) = 0$.
    Hence, let us assume that $C$ is non-diagonal and that $d:=\Pi_{\ZP{n}} (C) \neq 0$.
    
    By Equation \eqref{eq:Clifford matrix entries}, the non-vanishing entries of $d$ then have the form
    \begin{equation}
    \label{eq:diagonal-coefficients}
        d_{\mathbf{xx}}=2^{-s/2} i^{\vec{a}'\cdot\vec{x}} (-1)^{q'(\vec{x})+\vec{b}'\cdot\vec{x}} \,,
    \end{equation}
    where $\vec{a}',\vec{b}'\in\Z_2^n$ and $q'$ are obtained from above in the obvious way.
    Next, we show that $d$ can be extended to a diagonal Clifford unitary.
    Let $S\subset\Z_2^n$ be the support of $\vec{x}\mapsto d_\vec{xx}$ and let us define another diagonal matrix $\kappa$ on the complement of $S$ by extending the linear and quadratic forms in Eq.~\eqref{eq:diagonal-coefficients} (in some way, this may not be unique).
    Then, the diagonal matrix $D=2^{s/2}(d+\kappa)$ is by construction Clifford because its diagonal is proportional to a (full-support) stabilizer state \cite{Dehaene2003,Gross2008nzg}.

    The support $S$ of $d$ is essentially the intersection of the affine support of $C\otimes\one\ket{\phi^+}$ with the diagonal $\Delta=\{(\vec x, \vec x) \; | \; \vec x\in\Z_2^n\}$, and thus also an affine subspace.
    We can hence write it as $S = V+\vec t$ for a linear subspace $V$ and a vector $\vec t$.
    $V^\perp$ is given as the subspace of vectors that are orthogonal to $V$ with respect to the binary dot product.
    Then, $G:=\{ (-1)^{\vec t\cdot \vec z} Z(\vec z) \; | \; \vec z \in V^\perp\}$ is an Abelian group which stabilizes $d$:
    \begin{equation*}
        (-1)^{\vec t\cdot \vec z} Z(\vec z)\cdot d
        =
        \sum_{\vec x\in S}\ (-1)^{\vec z \cdot (\vec x+\vec t)}\ d_{\vec{xx}} \ket{\vec x}\bra{\vec x}
        =
        d \,,
    \end{equation*}
    where the last step follows from $\vec x+\vec t \in V$ for all $\vec x \in S$.
    The average over $G$ annihilates $\kappa$:
    \begin{equation*}
      \frac{1}{|V^\perp|} \sum_{\vec z\in V^\perp}(-1)^{\vec t \cdot \vec z} Z(z)\cdot \kappa
      = \frac{1}{|V^\perp|}  \sum_{\vec x \in S^c} \underbrace{\sum_{\vec z\in V^\perp} (-1)^{\vec z \cdot (\vec x + \vec t)}}_{=|V^\perp|\delta(\vec x + \vec t\in V)}  \kappa_{\vec x \vec x} \ket{\vec x}\bra{\vec x} 
      = \sum_{\vec x\in S^c} \delta(\vec x\in S) \, \kappa_{\vec x \vec x} \ket{\vec x}\bra{\vec x} 
      = 0.
     \end{equation*}
     Hence, we get the desired decomposition of $d$ in terms of diagonal Cliffords:
     \begin{equation*}
      \frac{2^{-\frac{s}{2}}}{|V^\perp|} \sum_{\vec z \in V^\perp}(-1)^{\vec t \cdot \vec z } Z(\vec z )\cdot D 
      = \frac{1}{|V^\perp|} \sum_{\vec z  \in V^\perp}(-1)^{\vec{a} \cdot \vec z} Z(\vec z)\cdot (d+\kappa) 
      = d.
     \end{equation*}
     This shows that $\xi(d)=\xi(\Pi_{\ZP{n}} (C)) \leq 2^{-s} \leq 1$ as claimed.
     In fact, this is also optimal since we find for any decomposition $d = \sum_i c_i D^{(i)}$ into diagonal Cliffords $D^{(i)}$:
     \begin{equation*}
      2^{-s/2} = |d_{\vec x \vec x}| \leq \sum_i |c_i| \underbrace{|D^{(i)}_{\vec x \vec x}|}_{=1} = \|c\|_1.
     \end{equation*}
\end{proof}

\begin{lemma}\label{lemma: characterization real Cliffords}
    For all Clifford unitaries $C\in\Cl{n}$ and $x\in\C$, we have $\xi_{\K{n}} \left( \Pi_{\K{n}} (x C) \right)\leq |x|^2$.
\end{lemma}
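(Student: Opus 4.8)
The plan is to exploit the explicit form \eqref{eq:Clifford matrix entries} of the Clifford matrix entries together with the way complex conjugation acts on them, rather than the group-averaging argument used for the diagonal case in Lemma~\ref{lemma: characterization diagonal Cliffords}. The key structural claim I would establish first is that \emph{every} Clifford unitary is a fixed complex combination of just two \emph{real} Cliffords. Writing $w=(\vec x,\vec y)$ and starting from $C_{\vec x\vec y}=2^{-s/2}\,i^{\vec a\cdot w}(-1)^{q(w)+\vec b\cdot w}\,\ind_W(w)$, I would apply the elementary identity $i^{\mu}=\tfrac{1+i}{2}+\tfrac{1-i}{2}(-1)^{\mu}$ (valid for $\mu\in\{0,1\}$, matching the mod-two convention on the exponent of $i$) with $\mu=\vec a\cdot w \bmod 2$. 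This splits each entry into two terms and yields
\[
    C=\tfrac{1+i}{2}\,C'+\tfrac{1-i}{2}\,C'',
\]
where $C'$ has entries $2^{-s/2}(-1)^{q+\vec b\cdot w}\ind_W$ (i.e.\ $C$ with $\vec a$ set to zero) and $C''$ has entries $2^{-s/2}(-1)^{q+(\vec b+\vec a)\cdot w}\ind_W$ (the same, but with real linear form $\vec b\mapsto\vec b+\vec a$). Since the discussion preceding the lemma already guarantees that setting $\vec a=0$ gives a valid Clifford and that the linear part is otherwise unconstrained, both $C'$ and $C''$ are genuine real Clifford unitaries, i.e.\ elements of $\Rl{n}=\Cl{n}^{\K{n}}$.

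Next I would simply evaluate $\Pi_{\K{n}}(xC)=\tfrac12\!\left(xC+\overline{xC}\right)$. Because $C'$ and $C''$ are real, conjugation acts only on the scalar prefactors, and collecting the four resulting terms gives the real-coefficient expansion
\[
    \Pi_{\K{n}}(xC)=\frac{\operatorname{Re}(x)-\operatorname{Im}(x)}{2}\,C'+\frac{\operatorname{Re}(x)+\operatorname{Im}(x)}{2}\,C''.
\]
This already exhibits $\Pi_{\K{n}}(xC)$ as a linear combination over $\Rl{n}$, so $\xi_{\K{n}}$ is well-defined on it and is bounded above by the squared $\ell_1$-norm of these two coefficients.

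Finally, setting $p=\operatorname{Re}(x)$ and $q=\operatorname{Im}(x)$, the $\ell_1$-norm of the decomposition is $\tfrac12\!\left(|p-q|+|p+q|\right)=\max(|p|,|q|)$ by the standard identity $|p+q|+|p-q|=2\max(|p|,|q|)$. Hence $\xi_{\K{n}}\!\left(\Pi_{\K{n}}(xC)\right)\le \max(|p|,|q|)^2\le p^2+q^2=|x|^2$, which is exactly the claim. I expect essentially all the content to reside in the first step: recognizing that the conjugation symmetry forces each Clifford into a two-term real-Clifford combination with the fixed weights $\tfrac{1\pm i}{2}$. Once that structural fact is in place, the projection and the norm estimate are routine. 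The one case worth checking explicitly is the degenerate situation $\vec a=0$ (where $C$ is already real and $C'=C''=C$), in which the two coefficients should be recombined into a single term of weight $p$; there the bound only improves, so it poses no difficulty.
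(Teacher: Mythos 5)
Your algebra after the structural claim is fine: the identity $i^{\mu}=\tfrac{1+i}{2}+\tfrac{1-i}{2}(-1)^{\mu}$ does give $C=\tfrac{1+i}{2}C'+\tfrac{1-i}{2}C''$ as a matrix identity, the projection computation is correct, and $\tfrac12\left(|p-q|+|p+q|\right)=\max(|p|,|q|)\leq|x|$ would finish the proof. The genuine gap is the structural claim itself: $C'$ and $C''$ need \emph{not} be Clifford unitaries; in general they are not even unitary. Take $n=1$ and
\begin{equation*}
    C=e^{i\pi X/4}=\frac{1}{\sqrt2}\begin{pmatrix}1&i\\ i&1\end{pmatrix}\in\Cl{1}\,,
\end{equation*}
whose form \eqref{eq:Clifford matrix entries} is forced to be $W=\Z_2^2$, $s=1$, $q=0$, $\vec b=0$, $\vec a=(1,1)$ (all entries are nonzero, and the phases $1,i,i,1$ admit no other linear/quadratic splitting; note the $(1,1)$ entry carries $i^{(1+1)\bmod 2}=1$). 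Deleting $\vec a$ yields
\begin{equation*}
    C'=\frac{1}{\sqrt2}\begin{pmatrix}1&1\\1&1\end{pmatrix}=\sqrt2\,\ket{+}\!\bra{+}\,,
    \qquad
    C''=\frac{1}{\sqrt2}\begin{pmatrix}1&-1\\-1&1\end{pmatrix}=\sqrt2\,\ket{-}\!\bra{-}\,,
\end{equation*}
both rank one, hence neither lies in $\Rl{1}=\Cl{1}^{\K{1}}$. Moreover the weights $\tfrac{1\pm i}{2}$ force $C'=\mathrm{Re}(C)+\mathrm{Im}(C)$ and $C''=\mathrm{Re}(C)-\mathrm{Im}(C)$, so there is no alternative choice of real Cliffords with these weights: your expansion of $\Pi_{\K{1}}(C)=\one/\sqrt2$ is simply not a decomposition over $\Cl{1}^{\K{1}}$. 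The reason the claim fails is that the mod-two convention hides a quadratic form: $i^{(\vec a\cdot w)\bmod 2}=\prod_j i^{a_jw_j}\cdot(-1)^{\sum_{j<k}a_ja_kw_jw_k}$, so erasing $\vec a$ also erases a sign pattern coupling row and column indices, which can destroy unitarity (equivalently, maximal entanglement of the Choi state). Unitarity genuinely constrains $\vec a$ jointly with $(W,q)$; only the $\vec b$-part is harmless, since changing $\vec b$ just multiplies $C$ by $Z$-type Paulis on the left and right.

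In fairness, you were misled by the paper itself: the discussion after Eq.~\eqref{eq:Clifford matrix entries} asserts exactly the statement you used, and the example above shows that assertion is false as a blanket claim. The crucial difference is that the paper's proof never invokes it in the regime where it fails. The paper splits $C=\eta+i\kappa$ and runs a case analysis on the stabilizer groups of the columns; the statement ``$\eta+\kappa$ is a real Clifford'' (which is the same matrix as your $C'$) is used only in cases (1b) and (2b), whose hypotheses ensure that $R:=C^\dagger Z(\vec u)C\,Z(\vec v)$ is a Hermitian Pauli with $R^2=\one$, where $\vec a=(\vec u,\vec v)$. In that situation one indeed has $\eta+\kappa=C\left(\tfrac{1-i}{2}\one+\tfrac{1+i}{2}R\right)=e^{-i\pi/4}\,C\,e^{i\pi R/4}$, a genuine Clifford. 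In the complementary case (2a)---which is exactly where $e^{i\pi X/4}$ lives, with $R^2=-\one$ so that $\eta+\kappa$ is singular---the paper abandons $\eta+\kappa$ altogether and instead shows that $\eta=\Pi_{\K{n}}(C)$ is by itself $2^{-1/2}$ times a real Clifford. Your uniform two-term decomposition cannot reproduce that case, so the proof as written does not go through; some case distinction of the paper's kind (or a corrected version of the ``set $\vec a=0$'' claim equipped with the commutation hypothesis above) appears unavoidable.
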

\begin{proof}
We will show the following:
If the phase of $x$ is not a second or fourth root of unity, then $\Pi_{\K{n}} (xC) = 0$ and the statement is trivially true.
Otherwise, we construct a decomposition
\begin{equation}\label{eq:r_decomposition}
    \Pi_{\K{n}} (xC) = \alpha |x| \frac{1}{|G|}\sum_{P\in G}P C',
\end{equation}
where $0\leq\alpha\leq 1$, $C'$ is a real Clifford unitary, and the $P$'s are real Pauli operators from a suitable stabilizer group $G$.
From this, the claim $\xi_{\K{n}} (x C) \leq \alpha^2 |x|^2 \leq |x|^2$ follows directly.

For now, let us assume that $x=1$ and show $\xi_{\K{n}} \left( \Pi_{\K{n}} (C) \right)\leq 1$.
As we will see, it is pretty straightforward to modify our argument to cover the general case.
By Equation \eqref{eq:Clifford matrix entries}, the matrix entries of $C$ are of the form
\begin{equation*}
\bra{\vec x}C\ket{\vec y} 
=
\frac{1}{2^{s/2}} i^{\vec{a}\cdot(\vec x, \vec y)} (-1)^{q(\vec x, \vec y) + \vec b \cdot (\vec x, \vec y)} \ind_W(\vec x, \vec y) \,.
\end{equation*}
Recall that $\vec a, \vec b$, and $q$ are suitable linear and quadratic forms on a $(n+s)$-dimensional affine subspace $W\subset\Z_2^{2n}$.
The matrix $\eta = \Pi_{\K{n}} (C)$ then has the following entries:
\begin{equation*}
    \eta_{\vec x, \vec y}= 2^{-s/2} \ind_W(\vec x, \vec y)
    \begin{cases}
        (-1)^{q(\vec x, \vec y) + \vec b \cdot (\vec x, \vec y)} & \text{ if } \vec a \cdot (\vec x, \vec y)  = 0 \,,\\
        0 & \text{ otherwise}.
    \end{cases}
\end{equation*}
We define another real matrix $\kappa$ by
\begin{equation*}
    \kappa_{\vec x, \vec y} := 2^{-s/2} \ind_W(\vec x, \vec y)
    \begin{cases}
        (-1)^{q(\vec x, \vec y) + \vec b \cdot (\vec x, \vec y) } & \text{ if } \vec a \cdot (\vec x, \vec y)  \neq 0 \,,\\
        0 & \text{ otherwise}.
    \end{cases}
\end{equation*}
By construction, we have $C = \eta + i \kappa$. 
Moreover, if $C_{\vec y}$ denote the columns of $C$ and similarly $\eta_{\vec y}$ and $\kappa_{\vec y}$ those of $\eta$ and $\kappa$, respectively, then it is easy to check that
\begin{equation*}
    \frac{\one + (-1)^{\vec{v} \cdot \vec y} Z(\vec{u})}{2}\ket{C_{\vec y}} = \ket{\eta_{\vec y}} \quad \text{and} \quad
    \frac{\one - (-1)^{\vec{v} \cdot \vec y} Z(\vec{u})}{2}\ket{C_{\vec y}} = i\ket{\kappa_{\vec y}} \,.
\end{equation*}
Here, $\vec a = (\vec{u}, \vec{v})$.
This shows that the columns of $\eta$ and $\kappa$ are (potentially sub-normalized) stabilizer states, and we can deduce their stabilizer groups using standard arguments (cf.~Ref.~\cite[Sec.~4.4]{heinrich_stabiliser_2021}).
Note that the columns of $C$ form a \emph{stabilizer basis}, namely $\ket{C_{\vec y}}=C\ket{\vec y}$, thus their stabilizer groups $G_{\vec y}$ are the same up to signs.

\textbf{Case 1:}
Suppose $Z(\vec{u})$ commutes with all stabilizers, i.e.~it is contained in all $G_{\vec y}$ up to a sign that we can write as $(-1)^{\vec{m}\cdot\vec{y}+c}$.
Then, $C_{\vec y}$ is either left invariant by $\frac12(\one + (-1)^{\vec{v} \cdot \vec y+c} Z(\vec{u}))$ or is in its kernel, depending on whether $\vec{m}\cdot\vec{y} = \vec{v}\cdot\vec{y}$ or not.
\begin{enumerate}[label=(1\alph*)]
\item Assume that $\vec m = \vec v$. 
Then all columns of $C$ are left invariant and either $C=\eta$ or $C=i\kappa$, depending on $c=0$ or not. 
In the first case, $C$ is real and we can take $C=C'$, $\alpha=1$, $G=\{\one\}$.
In the second case, $C$ is purely imaginary, $\Pi_{\K{n}} (C)=0$, and we can take $\alpha=0$.
\item Assume that $\vec m \neq \vec v$ and for simplicity $c=0$.
Then, exactly half of the columns of $C$ are left invariant and become the columns of $\eta$ and the other half become the columns of $i\kappa$.
Note that the $\ket{\vec y}$ with $(\vec m + \vec v)\cdot \vec y = 0$ span a stabilizer code which is stabilized by $Z(\vec m + \vec v)$.
Thus, the same is true for the corresponding $\ket{C_{\vec y}}$ and the columns of $\eta$ are stabilized by $G:=\{\one,CZ(\vec m + \vec v)C^\dagger\}$.
Note that since $\eta$ is real, $G$ has to be a real stabilizer group.
By construction, the columns of $\kappa$ are $(-1)$-eigenvectors of $CZ(\vec m + \vec v)C^\dagger$ and thus 
\begin{equation*}
    \frac{1}{|G|} \sum_{h\in G} h (\eta+\kappa) = \eta = \Pi_{\K{n}} (C) \,.
\end{equation*}
Note that $\eta+\kappa = C'$ where $C'$ is a real matrix with entries 
\begin{equation*}
    C'_{\vec x, \vec y} = 2^{-s/2} (-1)^{q(\vec x, \vec y) + \vec b \cdot (\vec x, \vec y)} \ind_W(\vec x, \vec y) \,.
\end{equation*}
This is effectively the Clifford $C$ with $\vec a = 0$, thus $C'$ defines a valid Clifford unitary with real entries by the discussion after Eq.~\eqref{eq:Clifford matrix entries}.
Note that for $c=1$ the roles of $\eta$ and $\kappa$ are simply reversed, which does not change the result.
Setting $\alpha=1$, this concludes the construction of Eq.~\eqref{eq:r_decomposition} for case 1.
\end{enumerate}

\textbf{Case 2:}
Let us now assume that $Z(\vec{u})$ does not commute with all stabilizers.
Note that this implies that the norm of $\ket{\eta_{\vec y}}$ and $\ket{\kappa_{\vec y}}$ is $1/\sqrt{2}$, independent of other details.
Moreover, we can find Paulis $g_1,\dots,g_n$ (possibly with a $\pm$ sign) such that $Z(\vec{u})$ anti-commutes with $g_1$, it commutes with $g_2,\dots,g_n$, and $G_{\vec y} = \langle (-1)^{\vec{t}_1\cdot\vec{y}}g_1,\dots,(-1)^{\vec{t}_n\cdot\vec{y}}g_n\rangle$.
Here, $\vec{t}_i\in\Z_2^n$ is uniquely determined by $g_i$.
The stabilizer groups of the columns of $\eta$ and $\kappa$ are then given by $G_{\vec y}^\eta = \langle (-1)^{\vec{v}\cdot\vec y}Z(\vec{u}), (-1)^{\vec{t}_2\cdot\vec{y}}g_2,\dots,(-1)^{\vec{t}_n\cdot\vec{y}}g_n \rangle$ and $G_{\vec y}^\kappa  = \langle -(-1)^{\vec{v}\cdot\vec y}Z(\vec{u}), (-1)^{\vec{t}_2\cdot\vec{y}}g_2,\dots,(-1)^{\vec{t}_n\cdot\vec{y}}g_n \rangle$, respectively.
Note that since $\eta_{\vec y}$ and $\kappa_{\vec y}$ are proportional to real stabilizer states, their stabilizer groups are real, too.
We now have to further distinguish two subcases:
\begin{enumerate}[label=(2\alph*)]
    \item If $\vec v, \vec t_2, \dots, \vec t_n$ are linearly independent, then all columns of $\eta$ are mutually orthogonal (and so are those of $\kappa$).
    But then, $\Pi_{\K{n}} (C) = \eta = C'/\sqrt{2}$ for a real Clifford unitary $C'$ and Eq.~\eqref{eq:r_decomposition} holds with $\alpha=1/\sqrt{2}$ and $G=\{\one\}$.
    \item If $\vec v, \vec t_2, \dots, \vec t_n$ are linearly dependent, we can write $\vec v + \sum_{i=2}^{n} \lambda_i \vec t_i = 0$ for $\lambda_i\in\Z_2$.
    Thus 
    \begin{equation*}
        G_{\vec y} \ni
        (-1)^{\vec{v}\cdot\vec y}Z(\vec{u}) \prod_{i=2}^n \left[(-1)^{\vec{t}_2\cdot\vec{y}} g_2\right]^{\lambda_i} 
        =
        (-1)^{(\vec{v}+ \sum_{i=2}^{n} \lambda_i \vec t_i)\cdot\vec y} Z(\vec{u}) \prod_{i=2}^n g_2^{\lambda_i} 
        =
        Z(\vec{u}) \prod_{i=2}^n g_2^{\lambda_i} 
        =: g \,,
    \end{equation*}
    is an element contained in every stabilizer group $G_{\vec y}$.
    The real stabilizer group $G:=\langle g \rangle$ then simultaneously stabilizes all columns of $\eta$.
    Since every column of $\kappa$ is by construction a $(-1)$-eigenvector of $g$, the average over $G$ annihilates $\kappa$.
    Following the argument in case 1b, we can thus again choose the real Clifford $C'=\eta+\kappa$ and $\alpha=1$, which concludes the construction of Eq.~\eqref{eq:r_decomposition} for case 2.
\end{enumerate}

Finally, we argue for a general $x = |x| e^{i\varphi}$.
Since the matrix entries of $C$ have a phase that is a multiple of $\pi/4$, $\varphi$ has to be a multiple of $\pi/4$ itself, otherwise none of the matrix entries would be real and the projection would be identically zero, $\Pi_{\K{n}} (x C) = 0$, and the claim trivially holds.
Hence, we can assume that $x = |x| (-1)^k i^l$ where $k,l\in\Z_2$.
Since the projection $\Pi_{\K{n}}$ is $\R$-linear, we have $\Pi_{\K{n}}(|x| (-1)^k i^l C)=|x| (-1)^k \Pi_{\K{n}}(i^l C)$.
The multiplication by $i$ simply exchanges the role of $\eta$ and $\kappa$.
However, all of our arguments apply to both of them, implying a decomposition of the form
\begin{equation*}
    \Pi_{\K{n}} (i^l C) = \alpha \frac{1}{|G|}\sum_{P\in G}P C',
\end{equation*}
where the definition of $\alpha$, $G$, and $C'$ are as in above cases, possibly with $\eta$ and $\kappa$ exchanged.
The general case now follows by multiplying this equation with $|x| (-1)^k$ and noting that we can absorb a potential minus sign in the definition of $C'$.
\end{proof}

\begin{lemma}\label{lemma: characterization diag+real Cliffords}
    For all Clifford unitaries $C\in\Cl{n}$, we have $\xi_{\ZP{n}.\K{n}} \left( \Pi_{\ZP{n}.\K{n}} (xC) \right)\leq |x|^2$.
\end{lemma}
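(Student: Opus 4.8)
The plan is to exploit that $\ZP{n}$ and $\K{n}$ commute, so that the projection onto the real-diagonal subspace factorizes as $\Pi_{\ZP{n}.\K{n}} = \Pi_{\ZP{n}} \circ \Pi_{\K{n}}$ (averaging over the product group splits into the two averages). This lets me combine the two preceding lemmas rather than redo the case analysis from scratch. If $\arg x$ is not an integer multiple of $\pi/4$, then already $\Pi_{\K{n}}(xC)=0$, hence $\Pi_{\ZP{n}.\K{n}}(xC)=0$ and the claim is trivial; so I may assume otherwise and invoke Lemma~\ref{lemma: characterization real Cliffords} to write
\[
\Pi_{\K{n}}(xC) = \alpha\,|x|\,\frac{1}{|G|}\sum_{P\in G} P\,C',
\]
with $0\le\alpha\le1$, a real Clifford unitary $C'$, and real Pauli operators $P$ from a real stabilizer group $G$.

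Next I would apply $\Pi_{\ZP{n}}$ to both sides. Since $\Pi_{\ZP{n}}$ is $\C$-linear (hence $\R$-linear), this yields
\[
\Pi_{\ZP{n}.\K{n}}(xC) = \alpha\,|x|\,\frac{1}{|G|}\sum_{P\in G}\Pi_{\ZP{n}}(P\,C').
\]
Each product $P\,C'$ is again a \emph{real} Clifford unitary (a real Pauli times a real Clifford). The key intermediate step is a real-diagonal refinement of Lemma~\ref{lemma: characterization diagonal Cliffords}: for any real Clifford $R$, the diagonal projection $\Pi_{\ZP{n}}(R)$ admits a decomposition into real-diagonal Cliffords with $\ell_1$-norm at most $1$, i.e.\ $\xi_{\ZP{n}.\K{n}}\big(\Pi_{\ZP{n}}(R)\big)\le 1$. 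To see this I would revisit the construction in the proof of Lemma~\ref{lemma: characterization diagonal Cliffords}: when $R$ is real, its diagonal entries are real, so the phase exponent $\vec a'$ in Eq.~\eqref{eq:diagonal-coefficients} vanishes on the support. The completing matrix $\kappa$ can then be chosen real as well, making $D=2^{s/2}(d+\kappa)$ a real-diagonal Clifford; since the operators $Z(\vec z)$ are themselves real, the resulting decomposition lies entirely in $\RD{n}$ and still has $\ell_1$-norm $\le 2^{-s}\le 1$.

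Finally I would assemble the two decompositions. Substituting the real-diagonal decomposition of each $\Pi_{\ZP{n}}(P\,C')$ into the displayed sum and using subadditivity of the $\ell_1$-norm (Proposition~\ref{prop: Properties extent}(iv)), the coefficients combine to give
\[
\sqrt{\xi_{\ZP{n}.\K{n}}\big(\Pi_{\ZP{n}.\K{n}}(xC)\big)} \le \frac{\alpha\,|x|}{|G|}\sum_{P\in G}\sqrt{\xi_{\ZP{n}.\K{n}}\big(\Pi_{\ZP{n}}(P\,C')\big)} \le \alpha\,|x| \le |x|,
\]
which squares to the claimed bound. I expect the main obstacle to be the middle step: verifying carefully that the construction of Lemma~\ref{lemma: characterization diagonal Cliffords} stays inside $\RD{n}$ rather than the full diagonal Clifford group $\Di{n}$, i.e.\ that reality of $R$ forces both the completion $\kappa$ and the extended Clifford $D$ to be real. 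Everything else is a bookkeeping combination of the two established lemmas through the commuting-projector factorization.
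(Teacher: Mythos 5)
Your proof is correct, and its crux---that for a \emph{real} Clifford the construction of Lemma~\ref{lemma: characterization diagonal Cliffords} can be carried out entirely inside $\RD{n}$, because reality forces the phase exponent $\vec a'$ to vanish on the support and permits a real completion $\kappa$---is exactly the observation on which the paper's own proof rests; you correctly identified this as the only nontrivial step. The organization, however, differs. The paper factorizes in the opposite order, $\Pi_{\ZP{n}.\K{n}}=\Pi_{\K{n}}\Pi_{\ZP{n}}$: projecting $C$ to its diagonal and then taking the real part directly yields a matrix $d_r$ of the form \eqref{eq:diagonal-coefficients} with $\vec a'=0$ and affine support $S_r=S\cap\{\vec x\,|\,\vec a'\cdot\vec x=0\}$, after which the proof of Lemma~\ref{lemma: characterization diagonal Cliffords} is repeated verbatim with real objects ($D_r=2^{s/2}(d_r+\kappa_r)$ and the averaging group are real). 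This produces a single explicit decomposition and never invokes Lemma~\ref{lemma: characterization real Cliffords}. Your order $\Pi_{\ZP{n}}\circ\Pi_{\K{n}}$ instead composes the two preceding lemmas: Lemma~\ref{lemma: characterization real Cliffords} supplies a real-Clifford decomposition of $\Pi_{\K{n}}(xC)$ of $\ell_1$-norm at most $|x|$ (in fact its \emph{statement} already suffices---you do not need the structural form $\alpha|x|\tfrac{1}{|G|}\sum_{P} PC'$ from its proof), your real-refined diagonal lemma handles each resulting real Clifford termwise, and the triangle inequality recombines. Two points to tighten: the subadditivity and homogeneity you invoke must hold for the \emph{symmetric} extent $\xi_{\ZP{n}.\K{n}}$, not the unrestricted extent of Proposition~\ref{prop: Properties extent}(iv)---this is true by the identical one-line argument, but should be stated; and in your middle step the $\ell_1$-norm of the constructed decomposition is $2^{-s/2}$, its square $2^{-s}$ being the extent bound (harmless, since both are at most $1$). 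What your route buys is modularity, isolating a reusable real-diagonal refinement of Lemma~\ref{lemma: characterization diagonal Cliffords}; what the paper's buys is brevity, since choosing the other order of projections makes the whole lemma a three-line modification of that proof.
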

\begin{proof}
    Note that $\ZP{n}$ and $\K{n}$ commute, thus $\Pi_{\ZP{n}.\K{n}}=\Pi_{\K{n}}\Pi_{\ZP{n}}$ and we can modify the proof of Lemma \ref{lemma: characterization diagonal Cliffords} as follows.
    Taking the real part of $d=\Pi_{\ZP{n}} (C)$ results in real diagonal matrix $d_r$ with coefficients as in Eq.~\eqref{eq:diagonal-coefficients}, but with $\vec a'=0$ and the affine support $S_r=S\cap\{\vec x \, | \, \vec a'\cdot \vec x = 0 \}$.
    Now continue as in the proof of Lemma \ref{lemma: characterization diagonal Cliffords} and note that $D_r=2^{s/2}(d_r + \kappa_r)$ is a real diagonal matrix, so are the elements of $G$.
\end{proof}

\section{\texorpdfstring{Computing the cardinality of $\Di{n}/\Sn{n}$ and $\RD{n}/\Sn{n}$}{Computing the cardinality of Dn/Sn}}\label{app: cardinality of Dn/Sn}

The quotient set $\Di{n}/\Sn{n}$ denotes the set of all orbits from the action of $\Sn{n}$ on $\Di{n}$ by conjugation. Its cardinality can be computed through a direct use of Burnside's theorem, i.e., by averaging the number of fixed points in $\Di{n}$ by the action of each permutation in $\Sn{n}$. That is to say,
\begin{equation*}
    |\Di{n}/\Sn{n}|=\frac{1}{n!}\sum_{\sigma\in \Sn{n}}\ |\Di{n}^{\sigma}|,
\end{equation*}
where $\Di{n}^{\sigma}$ denotes the set of elements of $\Di{n}$ fixed by $\sigma$.

To calculate this number, we start with some considerations about $\Sn{n}$. It is well known that a permutation $\sigma$ can be uniquely decomposed into disjoint cyclic components, such that its effect can be understood by the way it cycles collections of qubits. A cycle of length $k$ will contain $k$ qubits, say with labels $\{i_1,i_2,...,i_k\}$, and $\sigma$ will act on it with the shifts $i_1\rightarrow i_2, i_2\rightarrow i_3,\dots, i_k\rightarrow i_1$, if the cycle appears in the decomposition of such a permutation. Each $\sigma$ will then correspond to a different partition of the $n$ qubit labels. However, as we will see, for the purposes of counting the number of elements in $\Di{n}^\sigma$, it only matters how many cycles $m_k$ of length $k$ appear on the decomposition of $\sigma$. Let $\lambda\vdash n$ (i.e. $\lambda$ partition of $n$) denote the collection of those permutations with the same number $m_k$ $\forall k$. The previous sum can then be rewritten as
\begin{equation*}
    |\Di{n}/\Sn{n}|=\frac{1}{n!}\sum_{\lambda\vdash n}\ \frac{n!}{\prod_k\ k^{m_k}m_k!}|\Di{n}^{\sigma\in\lambda}|,
\end{equation*}
where the factor $n!/\prod_kk^{m_k}m_k!$ counts the number of permutations with same partition $\lambda$. (Starting from $n!$ possible permutations, we divide by $m_k!\ \forall k$ to account for the permutation of similar cycles and by $k^{m_k}\ \forall k$ to account for the possible $k$ (cyclic) relabels within each cycle of length $k$.)

We now compute $|\Di{n}^{\sigma\in\lambda}|$. Notice that any diagonal Clifford unitary $D\in\Di{n}$ can be written as
\begin{equation}\label{eq:D_in_terms_of_S_CZ}
    D=\prod_{j=1}^n\ S^{a_j}_j\ \prod_{k>j}^n\ CZ^{b_{jk}}_{jk},
\end{equation}
where $a_j\in\{0,1,2,3\}$ and $b_{jk}\in\{0,1\}$ establishes the presence or absence of a $CZ$ gate between qubits $j$ and $k$. This highlights that $\Di{n}$ is isomorphic to $\mathbb{Z}_4^n\times\mathbb{Z}_2^{\binom{n}{2}}$. Now, for a permutation $\sigma\in\lambda$ to fix $D$, $a_j$ must be constant within each cycle of $\sigma$ and $b_{jk}$ must be constant within an orbit of pairs $\{j,k\}$ under $\sigma$. The first condition tells us that, if $\sigma$ has $l=\sum_k\ m_k$ cycles, then there are $4^l$ possible choices for the coefficients $a_j$. The second condition informs us that there is a factor of 2 for every orbit of pairs $\{j,k\}$. The number of such orbits within a given cycle of length $k$ is $k/2-1$ if $k$ is odd and $k/2$ if it is even. On the other hand, between cycles of lengths $k_1$ and $k_2$, the number of orbits is equal to the \textit{greatest common divisor} ($\mathrm{gcd}$) between them. Hence, denoting by $f(\sigma)$ the number of these orbits, we finally have that
\begin{equation*}
    |\Di{n}^{\sigma\in\lambda}|=4^{\sum_k m_k}\ 2^{f(\sigma)},
\end{equation*}
with
\begin{equation*}
    f(\sigma) =\sum_k\ m_k\ \bigg\lfloor \frac{k}{2}\bigg\rfloor+k\ \binom{m_k}{2} +\sum_{k_1<k_2}\ m_{k_1}\ m_{k_2}\ \mathrm{gcd}(k_1,k_2).
\end{equation*}
Finally, this allows us to write that
\begin{equation*}
    |\Di{n}/\Sn{n}|=\sum_{\lambda\vdash n}\ \frac{4^{\sum_km_k}\ 2^{f(\sigma\in\lambda)}}{\prod_k\ k^{m_k}m_k!}.
\end{equation*}

To conclude, we note that the same reasoning could be followed to find the cardinality of the quotient set $\RD{n}/\Sn{n}$, the difference being the reduction of possible phases to 2 instead of 4. Table~\ref{tab:cardinality_of_dn_sn} shows the cardinality of the quotient sets $\Di{n}/\Sn{n}$ and $\RD{n}/\Sn{n}$ for some values of $n$.
\begin{table}[ht]
    \centering
    \begin{tabular}{c c c}
        \hline\hline
        $n$ & $|\Di{n}/\Sn{n}|$ & $|\RD{n}/\Sn{n}|$\\
         \hline
        $1$ & $4$ & $2$\\
        $2$ & $20$ & $4$\\
        $3$ & $128$ & $16$\\
        $4$ & $1\,616$ & $117$\\
        $5$ & $32\,768$ & $1\,215$\\
        $6$ & $1\,516\,480$ & $19\,305$\\
        $7$ & $164\,003\,840$ & $438\,900$\\
        $8$ & $25\,434\,312\,192$ & $14\,373\,432$\\
         \hline\hline
    \end{tabular}
    \caption{\textbf{Cardinality of $\Di{n}/\Sn{n}$  and $\RD{n}/\Sn{n}$ for small values of $n$.}}
    \label{tab:cardinality_of_dn_sn}
\end{table}

\section{Additional reductions via permutation invariance}\label{app: Permutation invariance}

\begin{table*}[h]
    \centering
\begin{tabular}{c c c c c c c c c c c}
        \hline\hline
        $n$ & 0 $CZ$s & 1 $CZ$ & 2 $CZ$s & 3 $CZ$s & 4 $CZ$s & 5 $CZ$s & 6$CZ$s  & 7 $CZ$s & $\cdots$ & Total\\
         \hline
        $2$ & \underline{\textbf{16}} & 16\cellcolor{gray!25} & --- & --- & --- & --- & --- & --- & --- & 32\\
        $3$ & 64 & \underline{\textbf{192}} & 192\cellcolor{gray!25} & 64\cellcolor{gray!25} & --- & --- & --- & --- & --- & 512\\
        $4$ & $256$ & $1\,536$ & $3\,840$ & \underline{$\mathbf{5\,120}$} & \cellcolor{gray!25}$3\,840$ & \cellcolor{gray!25}$1\,536$ & \cellcolor{gray!25}256 & --- & --- & $16\,384$ \\
        $5$ & $1\,024$ & $10\,240$ & $46\,080$ & $122\,880$ & $215\,040$ & \underline{$\mathbf{258\,048}$} & \cellcolor{gray!25}$215\,040$ & \cellcolor{gray!25}$122\,880$ & \cellcolor{gray!25}$\cdots$ & $1\,048\,576$ \\
        $6$ & $4\,096$ & $61\,440$ & $430\,080$ & $1\,863\,680$ & $5\,591\,040$ & $12\,300\,288$ & $20\,500\,480$ & \underline{$\mathbf{26\,357\,760}$} & \cellcolor{gray!25}$\cdots$ & $134\,217\,728$ \\
         \hline\hline
    \end{tabular}
    \caption{\textbf{Number of diagonal Clifford operators for different numbers of qubits arranged into categories defined by the $CZ$ count}. 
    The numbers in each row are filled up to the existing category for that $n$. The underlined bold values refer to the category with the largest number of elements for each $n$. The shadowed values highlight the symmetric nature of the numbers in each now; specifically, counting the number of non-isomorphic, simple graphs with $k$ edges (for given $n$) is equivalent to counting non-isomorphic, simple graphs without $k$ edges from the maximum (for that $n$). The last column shows the total for each $n$ and corresponds to $\left|\Di{n}\right|$.}
    \label{tab: cardinality_of_dn_by_category}
\end{table*}

\begin{table*}[h]
    \centering
\begin{tabular}{c c c c c c c c c c c c}
        \hline\hline
        $n$ & 0 $CZ$s & 1 $CZ$ & 2 $CZ$s & 3 $CZ$s & 4 $CZ$s & 5 $CZ$s & 6$CZ$s  & 7 $CZ$s & $\cdots$ & Total & Fraction \\
         \hline
        $2$ & \underline{\textbf{16}} & 16\cellcolor{gray!25} & --- & --- & --- & --- & --- & --- & --- & 32 & 100\%\\
        $3$ & 64 & \underline{\textbf{64}} & 64\cellcolor{gray!25} & 64\cellcolor{gray!25} & --- & --- & --- & --- & --- & 256 & 50\%\\
        $4$ & $256$ & $256$ & $512$ & \underline{$\mathbf{768}$} & \cellcolor{gray!25}$512$ & \cellcolor{gray!25}$256$ & \cellcolor{gray!25}$256$ & --- & --- & $2\,816$ & $17.2\%$ \\
        $5$ & $1\,024$ & $1\,024$ & $2\,048$ & $4\,096$ & $6\,144$ & \underline{$\mathbf{6\,144}$} & \cellcolor{gray!25}$6\,144$ & \cellcolor{gray!25}$4\,096$ & \cellcolor{gray!25}$\cdots$ & $34\,816$ & $3.3\%$ \\
        $6$ & $4\,096$ & $4\,096$ & $8\,192$ & $20\,480$ & $36\,864$ & $61\,440$ & $86\,016$ & \underline{$\mathbf{98\,304}$} & \cellcolor{gray!25}$\cdots$ & $638\,976$ & $0.48\%$ \\
         \hline\hline
    \end{tabular}
    \caption{\textbf{Number of diagonal Clifford operators for different numbers of qubits arranged into categories defined by the $CZ$ count obtained by eliminating connectivity redundancies}. The notation follows that of Table \ref{tab: cardinality_of_dn_by_category}. The last column presents the fraction of the original set of diagonal Clifford unitaries that survives the first filtering stage.}
    \label{tab: cardinality_of_dn_with_graph_connectivity_moded_out}
\end{table*}

\begin{table*}[h]
    \centering
    \begin{tabular}{c c c c c c c c c c c c}
        \hline\hline
        $n$ & 0 $CZ$s & 1 $CZ$ & 2 $CZ$s & 3 $CZ$s & 4 $CZ$s & 5 $CZ$s & 6$CZ$s  & 7 $CZ$s & $\cdots$ & Total & Fraction \\
         \hline
        $2$ & \underline{\textbf{10}} & \cellcolor{gray!25}10 & --- & --- & --- & --- & --- & --- & --- & 20 & 62.5\%\\
        $3$ & 20 & \underline{\textbf{40}} & \cellcolor{gray!25}40 & \cellcolor{gray!25}20 & --- & --- & --- & --- & --- & 120 & 23.4\%\\
        $4$ & 35 & 100 & 215 & \underline{\textbf{296}} & \cellcolor{gray!25} 215 & \cellcolor{gray!25} 100 & \cellcolor{gray!25} 35 & --- & --- & 996 & 6.08\%\\
        $5$ & 56 & 200 & 620 & $1\,464$ & $2\,384$ & \underline{$\mathbf{2\,760}$} & \cellcolor{gray!25}$2\,384$ & \cellcolor{gray!25}$1\,464$ & \cellcolor{gray!25}$\cdots$ & $12\,208$ & 1.16\%\\
        $6$ & 84 & 350 & $1\,350$ & $4\,380$ & $11\,226$ & $22\,530$ & $35\,734$ & \underline{$\mathbf{45\,106}$} & \cellcolor{gray!25}$\cdots$ & $241\,520$ & 0.180\%\\
         \hline\hline
    \end{tabular}
    \caption{\textbf{Number of different $\Sn{n}$-projections in each category}. The notation follows that of Table \ref{tab: cardinality_of_dn_by_category}. The last column informs the fraction of the original set of diagonal Clifford operators that survives the complete filtering procedure described in the main text. We see that the reductions obtained are very impressive. If needed, these results can be improved further by considering equivalences across different categories.}
    \label{tab: cardinality_of_twirls_by_class}
\end{table*}

Appendix~\ref{app: cardinality of Dn/Sn} showed how to compute the number of orbits due to the action by the group of qubit permutations $\Sn{n}$ 
on the group of diagonal and real-diagonal Clifford operators. This gives us an upper bound for the number of required terms in decompositions leveraging Lemma~\ref{lemma: weak reduction}. However, in practice, several of these terms can be 
eliminated once we notice that diagonals that are equivalent by conjugation through some permutation yield the same $\Sn{n}$-projection.

Looking at Eq.~\eqref{eq:D_in_terms_of_S_CZ}, we observe that conjugation by a permutation preserves the number of $CZ$ gates. Thus, we can restrict ourselves to searching for nonequivalent unitaries within categories described by the number of $CZ$ gates that are present on them. Table~\ref{tab: cardinality_of_dn_by_category} shows the total number of diagonal Cliffords present in each such category before any elimination has taken place. From a permutation-invariance perspective, nonequivalent diagonal Cliffords within a $CZ$ category can be found by noticing that the problem boils down to finding non-isomorphic (under vertex-permutation) simple graphs with the same number of edges. The results are presented in Table~\ref{tab: cardinality_of_dn_with_graph_connectivity_moded_out}. To push things even further, an additional, internal elimination can be performed within the resulting elements accounted in Table~\ref{tab: cardinality_of_dn_with_graph_connectivity_moded_out}, when all possible phase configurations are taken into account. The final number of terms that have to be considered when leveraging Lemma~\ref{lemma: weak reduction} for the particular case of permutation-invariant diagonal unitaries are shown in Table~\ref{tab: cardinality_of_twirls_by_class}. A direct comparison with the values presented in Table~\ref{tab:cardinality_of_dn_sn} shows us that, in this case, the upper bound is rather weak.

For small $n$ (typically up to 10), the underlying graph problem is relatively easy and quick to solve numerically, and even ready-to-go datasets can be found on the internet. In our case, we have used the software \textit{SageMath}~\cite{sage} to yield the graphs. As a final comment, notice that we have only removed redundancies within each individual category. Because we can have equal projections across different categories, the values presented in Table~\ref{tab: cardinality_of_twirls_by_class} can still be further reduced, if necessary.

\section{Practical considerations on code implementation}\label{app: practical_considerations}

Memory is the main bottleneck associated with the optimization problem under consideration. This is because, as mentioned in the main text, the size of the search space scales (super)exponentially with $n$ (see Fig.~\ref{fig: Cardinality}). To give the reader an impression of the practical demands associated with the calculation of the stabilizer extent, in Table~\ref{tab: Space and time comparisons different problems} one can check the amount of memory required to \emph{formulate} the optimization problem for computing $\xi(C^{n-1}Z)$, using different sets of Clifford operators, namely, the entire Clifford group, the group of diagonal Clifford unitaries, and the group of real-diagonal Cliffords. This clearly highlights the beneficial impact of Theorem~\ref{theorem: Main result -- diagonals and reals} memory-wise. Additionally, we also understand that the time it takes to \emph{solve} the problem is not a concern.

Given this memory bottleneck, it is important to optimize its use. For the particular case of diagonal Clifford operators, by noticing that the non-zero entries of the matrices always have real and/or imaginary parts of $\{\pm 1\}$, we can use the most economic data type available, namely \textit{int8}, to store them. The same observation holds for real-diagonal Cliffords whose non-zero entries are always $\pm 1$.

An additional observation is that the memory needed by the optimizer to actually \emph{solve} the problem is larger than that needed to simply \emph{formulate} the problem. This is seen explicitly in Table~\ref{tab: Size of An}, where we present the use of memory when computing $\xi(T^{\otimes n})$ with $\Di{n}$ as search space. The values presented in this case are without leveraging our weak symmetry reduction results. They become much better when Lemma~\ref{lemma: weak reduction} is exploited (as can be understood by a direct comparison with Table~\ref{tab: Space and time comparisons different problems}).

We conclude by remarking that, when we want to use the weak symmetry reduction lemma, time may become the main bottleneck, not because of the optimization problem itself, but due to the intermediate step of filtering vectors with a given symmetry (as done for presenting the results in Appendix~\ref{app: Permutation invariance}). Fortunately, this kind of filtering procedure benefits immensely from parallelization. For our particular case, parallelization (and compilation) implemented through the use of \textit{Numba}~ \cite{numba} was sufficient, even though there is room to make our sub-routines faster.

\begin{table*}[h]
    \centering
    \begin{tabular}{c c c c c c c c}
        \hline\hline
        $n$ & 3 & 4 & 5 & 6 & 7 & 8 & 9  \\
         \hline
        Integer $A_n$ & 8.19 kB & 0.52 MB & 67.1 MB & 17.2 GB & 8.80 TB & 9.01 PB & 18.4 EB \\
        Problem definition & 443 kB & 25.0 MB & 3.07 GB & 475 GB* & --- & --- & ---\\
        Problem resolution & 2.0 MB & 12.5 MB & 1.18 GB & --- & --- & --- & ---\\
         \hline\hline
    \end{tabular}
    \caption{\textbf{Memory requirements when computing $\xi(T^{\otimes n})$ using only diagonal Clifford operators as a function of the number of qubits.} The first row presents the memory needed to store a matrix whose columns are the $\left|\Di{n}\right|$ diagonal Clifford unitaries encoded as vectors using the data type \textit{int8}. This matrix, denoted $A_n$, is used in the formulation of the optimization problem for diagonal unitaries. The values in this first row were calculated by hand. The second row indicates the memory needed by Gurobi to precisely formulate the entire optimization problem, as given by the size of the output \textit{.mps} file. Finally, the last row tells us the amount of extra memory needed while actually solving the problem, i.e., the incremented value of RAM due to resolution, which has been estimated by standard \textit{memory-profiling} for Python. Empty spaces indicate instances for which we were unable to obtain results. The asterisk indicates a value that was obtained using the Proteus supercomputer.}
    \label{tab: Size of An}
\end{table*}

\begin{table*}[h]
    \centering
    \begin{tabular}{c c c c c c c}
        \hline\hline
        $n$ & 2 & 3 & 4 & 5 & 6 & 7\\
         \hline
        Memory, $\Cl{n}$ & 11.5 MB & 88.6 GB* & --- & --- & --- & ---\\
        Memory, $\Di{n}$ & $12$ kB & $165$ kB & $2.394$ MB & $48.582$ MB & $2.69$ GB & ---\\
        Memory, $\RD{n}$ & $3$ kB & $12$ kB & $66$ kB & $686$ kB & $14.136$ MB & $572$ MB\\
        \hline
        Time, $\Cl{n}$ &  300 ms & --- & --- & --- & --- & ---\\
        Time, $\Di{n}$ &  $10$ ms & $20$ ms & $30$ ms & $320$ ms & $10.55$ s& ---\\
        Time, $\RD{n}$ &  $10$ ms & $10$ ms & $10$ ms & $30$ ms & $120$ ms & $4.32$s\\
         \hline\hline
    \end{tabular}
    \caption{\textbf{Comparison of the computational resources required to compute $\xi(C^{n-1}Z)$ when using different decomposition sets.} The presented time to solve is given by Gurobi, while the memory values are the size of the \textit{.mps} files underlying each optimization problem. The results for $\Di{n}$ and $\RD{n}$ employ both the strong and weak reductions, that is, Theorem \ref{theorem: Main result -- diagonals and reals} and Lemma \ref{lemma: weak reduction}, respectively. The value marked with * identifies an instance for which the problem was not solved because the optimizer was unable to allocate the required amount of RAM.}\label{tab: Space and time comparisons different problems}
\end{table*}

\section{nonequivalent graph states of five vertices}\label{app: graph states 5 vertices}

We considered all nonequivalent, simple graphs of 5 qubits and, for each, calculated the stabilizer extent of the unitary generated by associating with each edge the $CS$ gate. From this thorough analysis, ten different values of the stabilizer extent emerged. Fig.~\ref{fig: 2-uniform hypergraphs} summarizes the results, depicting the simplest graph instance associated with each of the ten values obtained.
\begin{figure*}[h]
    \centering
    \includegraphics[width=0.95\textwidth]{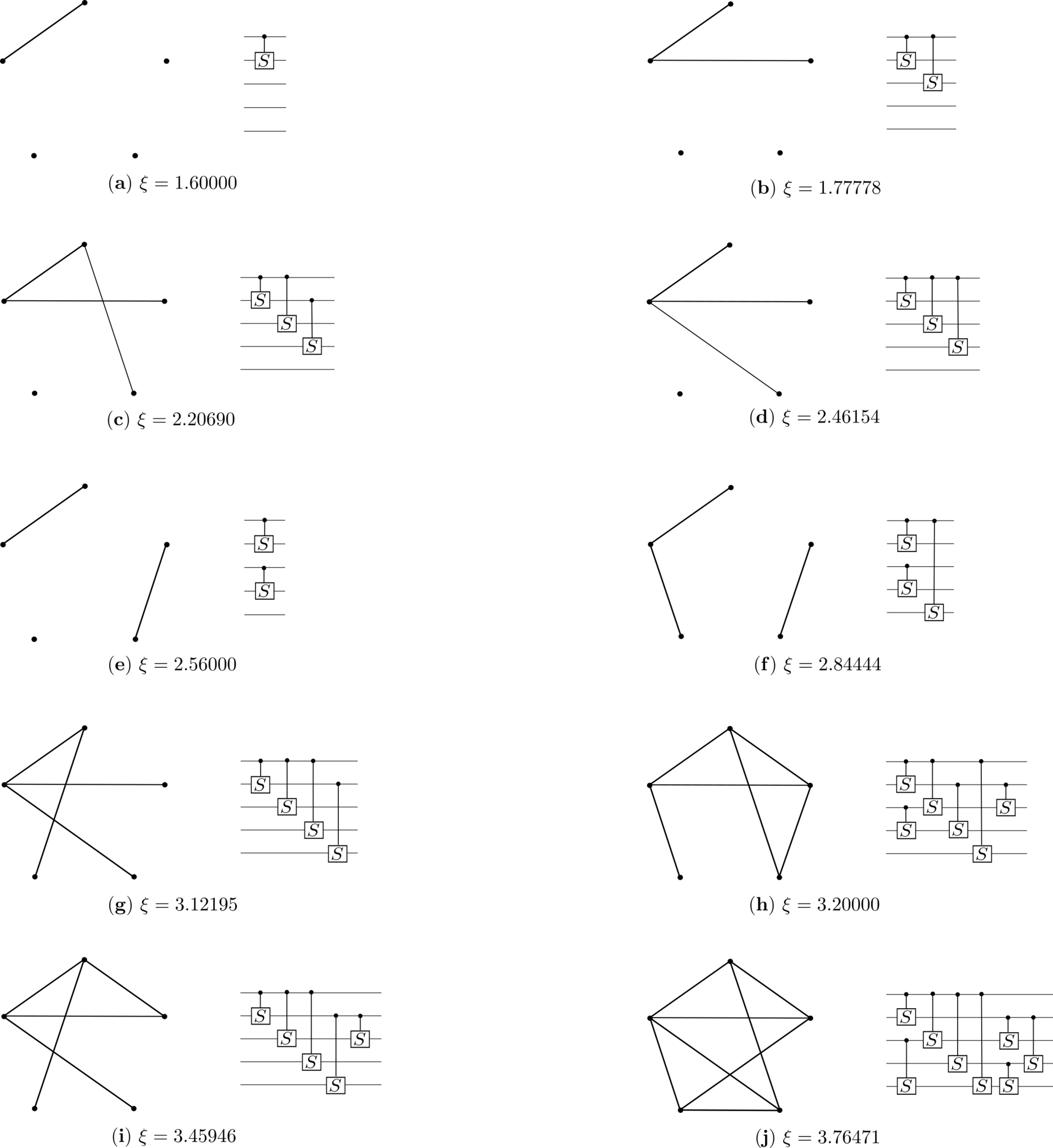}
    \caption{\textbf{Selection of simple graphs of five vertices and the stabilizer extent of the corresponding unitaries generated by associating a qubit to each vertex and applying a $CS$ gate to each pair of connected vertices.} Each different value of $\xi$ among all nonequivalent graphs defines what we call a family. Here, we depict a single representative for each family, choosing the graph with fewer edges. Note that by applying these unitaries to the state $\ket{+}^{\otimes 5}$, we obtain instances of 2-uniform generalized hypergraph states of five qubits.}
    \label{fig: 2-uniform hypergraphs}
\end{figure*}

\clearpage

\end{document}